\documentclass[submission,copyright,creativecommons]{eptcs}
\providecommand{\event}{AiML 2026} 

\usepackage{aiml26}

\usepackage{iftex}

\ifpdf
  \usepackage{underscore}         % Only needed if you use pdflatex.
  \usepackage[T1]{fontenc}        % Recommended with pdflatex
\else
  \usepackage{breakurl}           % Not needed if you use pdflatex only.
\fi

\title{Uniform Lyndon Interpolation via Non-wellfounded Proofs\thanks{Research supported by the Swiss National Science Foundation project 200021\_214820.}}
\author{Borja Sierra Miranda
\institute{University of Bern\\ Bern, Switzerland}
\email{borja.sierra@unibe.ch}
\and
Thomas Studer
\institute{University of Bern\\ Bern, Switzerland}
\email{thomas.studer@unibe.ch}
}

\newcommand{\titlerunning}{Uniform Lyndon Interpolation via Non-wellfounded Proofs}
\newcommand{\authorrunning}{B. Sierra Miranda and T. Studer}

\hypersetup{
  bookmarksnumbered,
  pdftitle    = {\titlerunning},
  pdfauthor   = {\authorrunning},
  pdfsubject  = {EPTCS},               % Consider adding a more appropriate subject or description
  pdfkeywords = {keyword1, keyword2} % Uncomment and enter keywords specific to your paper
}

\usepackage{xcolor}
\usepackage{bussproofs}
\usepackage{modalops}
\usepackage{amssymb}
\usepackage{multicol}

\newcommand\set[1]{\left\{#1\right\}}
\newcommand\union{\cup}
\newcommand\inter{\cap}
\newcommand\Union{\bigcup}

\newcommand\KT{\mathsf{K4}}
\newcommand\GL{\mathsf{GL}}
\newcommand\GLS{\mathsf{GLS}}
\newcommand\PA{\mathsf{PA}}
\newcommand\IL{\mathsf{IL}}

\newcommand\lgh{\mathrm{lgh}} % Length
\newcommand\lhg{\mathrm{lhg}} % Local height
\newcommand\satc[1]{|#1|_{\mathrm{Sat}}} % Saturation complexity
\newcommand\hg{\mathrm{hg}}
\newcommand\lrul{\mathrm{lRul}}
\newcommand\sub{\mathrm{Sub}}
\newcommand\voc{\mathrm{Voc}}

\newcommand\g[1]{\mathcal{G}#1}
\newcommand\n[1]{\mathcal{G}^\infty #1}

\newcommand\MP{\mathrm{MP}}
\newcommand\NEC{\mathrm{NEC}}
\newcommand\K{\mathrm{K}}
\renewcommand\L{\mathrm{L}}
\renewcommand\S{\mathrm{T}}

\newcommand\ax{\mathrm{ax}}
\newcommand\Ax{\mathrm{Ax}}
\newcommand\botL{\bot\mathrm{L}}
\newcommand\botR{\bot\mathrm{R}}
\newcommand\toL{{\to}\mathrm{L}}
\newcommand\toR{{\to}\mathrm{R}}

\newcommand\wedgeL{{\wedge}\mathrm{L}}
\newcommand\wedgeR{{\wedge}\mathrm{R}}
\newcommand\negL{{\neg}\mathrm{L}}
\newcommand\negR{{\neg}\mathrm{R}}
\newcommand\toLsat{{\to}\mathrm{L}_{\mathrm{sat}}}
\newcommand\toRsat{{\to}\mathrm{R}_{\mathrm{sat}}}
\newcommand\Ssat{\S_{\mathrm{sat}}}
\newcommand\modal[1]{\nec^{#1}}
\newcommand\interpolation[1]{\nec^{#1}_{\mathrm{sat}}}
\newcommand\lob{\mathrm{L\ddot{o}b}}

\newcommand\rep{\mathrm{Rep}}
\newcommand\wk{\mathrm{Wk}}
\newcommand\inv{\mathrm{inv}}
\newcommand\ctr{\mathrm{Ctr}}
\newcommand\cut{\mathrm{Cut}}
\newcommand\emp{\mathrm{Emp}}

\newcommand\domain{\mathrm{Dom}}

\begin{document}
\maketitle

\begin{abstract}
Non-wellfounded proof theory has been applied to establish uniform interpolation and Lyndon interpolation (separately) for multiple logics.
However, it has not yet been used to prove uniform Lyndon interpolation.
We close this gap by showing uniform Lyndon interpolation for the provability logic \(\GLS\). This logic was known to have uniform interpolation, but it was open whether it has uniform Lyndon interpolation (or at least non-uniform Lyndon interpolation).
The methodology we provide is easy to adapt to other provability logics if a non-wellfounded sequent calculus is available for them.
In addition, we offer an alternative proof of cut elimination for \(\GLS\) via non-wellfounded proofs.
\end{abstract}

\section{Introduction}

Non-wellfounded proof theory provides a powerful machinery to study of logics with explicit fixpoints, see e.g.~\cite{Brotherston, thomasCK, KokkinisStuder+2016+171+192, saurin, guillermo-CTL, ill-founded-intui-linear-time, das, master-modality}.
However, its use is not restricted to these logics.
In particular,  provability logics are closely connected to a simple class of non-wellfounded proof systems~\cite{shamkanovGl, shamkanovGrz, justus, coalgebraic, proofth-interpretability, bimodal-provability-proofth}.

Non-wellfounded sequent calculi have  been successfully applied to obtain many  interpolation results.
In particular, they have been used to establish Craig interpolation \cite{pdl-interpolation, converse-pdl-interpolation}, Lyndon interpolation \cite{shamkanovGl} and uniform interpolation \cite{interpolation-guillermo, uip-interpretability}.
Still, the combination of uniform and Lyndon interpolation 
(the so-called uniform Lyndon interpolation property~\cite{Akbar,Kurahashi})
has not been studied with these methods.
This paper is the first investigation in that direction, using non-wellfounded proofs to establish uniform Lyndon interpolation of the logic \(\GLS\).

Solovay's logic \(\GLS\) (also known simply as \(\mathsf{S}\)) is the provability logic corresponding to the provability predicate of \(\PA\) according to true arithmetic \cite{solovay}, and it is known to have Craig interpolation~\cite{boolos-GLS, lev-GLS, GLSprooftheory}.\footnote{Uniform interpolation for \(\GLS\) follows easily  from the fact that \(\GL\) has uniform interpolation, by the same proof as in \cite{lev-GLS}. Note though that the translation between \(\GL\) and \(\GLS\) described in that paper does not yield Lyndon interpolation, as it does not preserve the polarity of variables.}
No proof that it has Lyndon or uniform Lyndon interpolation is known.
Usually for such a proof the use of non-wellfounded proof theory is beneficial, as non-wellfounded sequent calculi get rid of what is usually called the \emph{diagonal formula},  which, in wellfounded proofs, does not preserve the polarity.
In this paper we will show that, indeed, \(\GLS\) has uniform Lyndon interpolation using non-wellfounded proofs.

\paragraph{Contributions.}
In this paper the contributions are three-fold.
\begin{enumerate}
  \item An alternative proof of cut elimination for \(\GLS\) is provided using non-wellfounded proofs (the original proof using ideas from the wellfounded cut elimination of \(\GL\) can be found in \cite{lev-GLS} and more detailed in \cite{GLSprooftheory}).
  \item The introduction of the concepts of Lyndon fixpoints and Lyndon equational systems, which are necessary to show that the uniform interpolant created via non-wellfounded proofs respects polarities.
  \item The first proof that \(\GLS\) has uniform Lyndon interpolation.
\end{enumerate}

\paragraph{Organization.}
In Section~\ref{sec:preliminaries} we will introduce the concepts needed from non-wellfounded proof theory.
In Section~\ref{sec:sequent-calculi} we will give the definition of \(\GLS\) and its sequent calculi, with an alternative proof of cut elimination using non-wellfounded proofs.
In Section~\ref{sec:ulip} we will introduce the notions of Lyndon fixpoint and Lyndon equational system, which are at the heart of our proof of uniform Lyndon interpolation.
We conclude by proving the promised result, uniform Lyndon interpolation for \(\GLS\).

\section{Preliminaries: local progress proof theory}\label{sec:preliminaries}

Let us start by fixing the notions concerning trees.
Given a set \(X\) we will write \(X^*\) (\(X^+\)) to mean the set of (non-empty) finite sequences of elements in \(X\), \(\epsilon\) will denote the empty sequence.
As usual, we will write \(w \leq v\) to mean that \(w\) is an initial prefix of \(v\) and \((w,v]\) to mean the set \(\set{u \in X^* \mid w < u \leq v}\).
A (finitely branching) tree on \(A\) is a function \(T\) whose image is contained in \(A\) and whose domain is a prefix-closed non-empty subset of \(\mathbb{N}^*\) such that for every \(w \in \domain(T)\) there is an unique natural number \(k\) (called the \emph{arity of \(w\) in \(T\)}) such that \(wi \in \domain(T)\) iff \(i < k\).
The \(wi \in \domain(T)\) are also called the \emph{immediate successors of \(w\)}.
Note that the domain of any tree always contains the word \(\epsilon\), which is called the \emph{root of \(T\)}.

The elements of \(\domain(T)\) are also called the \emph{nodes of \(T\)}, the \(0\)-ary nodes are called \emph{leaves} and the rest of the nodes are called \emph{interior nodes}.
Finally, an \emph{infinite branch in a tree \(T\)} is sequence of nodes \((w_i)_{i \in \mathbb{N}}\) such that \(w_0 = \epsilon\) and for each \(i\) there is a \(j\) such that \(w_{i+1} = w_i j\).

Fix a set \(\text{Seq}\) whose elements we will call \emph{sequents}, a sequent rule is a subset of \(\text{Seq}^+\).
Given a rule \(R\), we say that \((S_0,\ldots,S_{n-1},S)\) is an \emph{instance of \(R\) with premises \(S_0,\ldots,S_{n-1}\) and conclusion \(S\)} if \((S_0,\ldots,S_{n-1},S) \in R\).
A rule \(R\) is said to be \emph{\(n\)-ary} if \(R \subseteq \text{Seq}^{n+1}\).
We introduce the kind of sequent calculi we are going to use, called \emph{local progress sequent calculi}.

\begin{definition}
  A \emph{local progress sequent calculus} is a pair \(\mathcal{G} = (\mathcal{R}, (L_R)_{R \in \mathcal{R}})\) where \(\mathcal{R}\) is a set of sequent rules and each \(L_R\) is a function that takes an instance \(r = (S_0,\ldots,S_{n-1},S)\) of \(R\) and returns a subset of \(\set{0,\ldots,n-1}\).
  \(\mathcal{G}\) is said to be \emph{wellfounded} if each \(L_R\) is the constant function returning \(\varnothing\).
\end{definition}

We are prepared to define the notion of proof in a local progress calculus.
From the defintion of proof we can infer that a wellfounded sequent calculus is just a sequent calculus in the usual (wellfounded) proof theory.

\begin{definition}
  Given a local progress sequent calculus \(\mathcal{G} = (\mathcal{R}, (L_R)_{R \in \mathcal{R}})\), a \emph{preproof in \(\mathcal{G}\)} is a tree with labels in \(\text{Seq} \times \mathcal{R}\) such that for each node \(w\) with immediate successors \(w0,\ldots,w(n-1)\) we have that \((S_0,\ldots,S_{n-1},S) \in R\) where \(S\) is the sequent at \(w\), \(R\) is the rule at \(w\) and \(S_i\) is the sequent at \(wi\).

  Let \(w\) be a node in \(\pi\) with immediate successors \(w0,\ldots,w(n-1)\).
  We say that \(wi\) is a \emph{progressing node} if \(i \in L_R(r)\) where \(R\) is the rule at \(w\) and \(r = (S_0,\ldots,S_{n-1},S)\) where \(S\) is the sequent at \(w\) and \(S_i\) the sequent at \(wi\).
  A proof is a preproof in which any infinite branch has infinitely many progressing nodes.
\end{definition}

We will write \(\mathcal{G} \vdash S\) to mean that \(S\) is provable in \(\mathcal{G}\) and \(\pi \vdash_{\mathcal{G}} S\) to mean that \(\pi\) is a proof of \(S\) in \(\mathcal{G}\), omitting the subscript \(_\mathcal{G}\) when it is clear from context.
It will be common to write an instance \((S_0,\ldots,S_{n-1},S)\) of a rule \(R\) as
\[
  \AxiomC{\(S_0\)}
  \AxiomC{\(\cdots\)}
  \AxiomC{\(S_{n-1}\)}
  \RightLabel{\(R\)}
  \TrinaryInfC{\(S\)}
  \DisplayProof
\]
Given a proof \(\pi\) in \(\mathcal{G}\) we will define its main local fragment as the finite tree obtained from cutting the tree at the first progressing nodes from the root (in particular removing the progressing nodes).
The \emph{local height of \(\pi\)}, denoted \(\lhg(\pi)\), is the height of its main local fragment.
The \emph{local rules of \(\pi\)}, denoted \(\lrul(\pi)\), is the set of rules occuring in the main local fragment.
We will say that \(\pi\) is \emph{locally \(R\)-free} if \(R \not \in\lrul(\pi)\).
We note that if \(\mathcal{G}\) is a wellfounded calculus the notion of local height agrees with the usual notion of height and the local rules is the set of rules occuring in the proof.

Given a local progress calculus \(\mathcal{G} = (\mathcal{R},(L_R)_{R \in \mathcal{R}})\) and a rule \(R'\) we define \(\mathcal{G} + R'\) as the local progress calculus with rules \(\mathcal{R} \union \set{R'}\) and \(L_{R'}\) the constant function returning \(\varnothing\).
Rules can interact with local progress calculi in different ways, the following definition introduce some of these ways and the theorem below shows that some of them are equivalent.

\begin{definition}\label{def:prop-of-rules}
  Let \(\mathcal{G}\) be a sequent calculus and \(R\) be a rule.
  We say that: %\(R\) is
  \begin{enumerate}
    \item \(R\) is admissible if for any instance \((S_0,\ldots,S_{n-1},S) \in R\) we have that \(\pi_i \vdash_{\mathcal{G}} S_i\) for \(i < n\) implies the existence of \(\pi \vdash_{\mathcal{G}} S\).
      In addition we say that \(R\) is \emph{admissible preserving local height} if \(\lhg(\pi) \leq \max_{i < n}\lhg(\pi_i)\) and
        \emph{admissible preserving local rules} if \(\lrul(\pi) \subseteq \Union_{i < n}\lrul(\pi_i)\).
    \item \(R\) is eliminable if \(\mathcal{G} + R\vdash S\) implies \(\mathcal{G} \vdash S\).\footnote{
In the proof-theoretic tradition, it is common to study the concept of effective cut elimination, i.e.~to study who to get rid of the cut rule in proofs using computable (effective) means.
Often, cut-elimination is understood as effective cut-elimination and non-effective cut-elimination is called cut admissibility.
This is justified since eliminablity and admissibility,  as given in Definition~\ref{def:prop-of-rules}, are equivalent for wellfounded proofs.  However,  this equivalence does not hold for non-wellfounded proofs.  The best approximation we can get is Theorem~\ref{th:local-adm-and-eliminability}. 
For this reason, we will not follow the tradition.  For us, a rule being eliminable means exactly what it says: that the rule can be eliminated from the calculus without affecting the provability of sequents.
In case we want to talk about the effectiveness of our methodology, we will say effective cut elimination explicitely.
}
    \item \(R\) is locally admissible if for any instance \((S_0,\ldots,S_{n-1},S) \in R\) we have that if there are locally \(R\)-free \(\pi_i \vdash_{\mathcal{G}} S_i\)  for \(i < n\) then there is a locally \(R\)-free \(\pi \vdash_{\mathcal{G}} S\).
    \item An \(n\)-ary rule \(R\) is \(i\)-invertible  in \(\mathcal{G}\) (where \(i < n\)) if the rule
      \(
        \set{(S,S_i) \mid (S_0,\ldots,S_{n-1},S) \in R}
      \)
      is admissible.
      \(R\) is \emph{invertible} if it is \(i\)-invertible for each \(i < n\).
  \end{enumerate}
We will talk about invertibility preserving local height and/or local rules with the obvious meaning.
\end{definition}
Given an admissible rule \(R\) in \(\mathcal{G}\) an instance \((S_0,\ldots,S_{n-1},S) \in R\) and proofs \(\pi_i \vdash_{\mathcal{G}} S_i\) for \(i < n\) we will write \(R(\pi_0,\ldots,\pi_{n-1})\) to mean a proof of \(S\) in \(\mathcal{G}\) that exists by admissibility.
For the invertibility of a rule \(R\) we will write instead \(\inv{R}(\pi)\).

The following result allows an easy development of local progress proof theory (for the details see~\cite{proofth-interpretability,coalgebraic}).

\begin{theorem}\label{th:local-adm-and-eliminability}
  Let \(\mathcal{G}\) be a local progress sequent calculus, then \(R\) is eliminable in \(\mathcal{G}\) iff \(R\) is locally admissible in \(\mathcal{G}\).
  If \(\mathcal{G}\) is wellfounded, then both are equivalent to \(R\) is admissible in \(\mathcal{G}\).
\end{theorem}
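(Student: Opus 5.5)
The plan is to prove the two directions separately and then treat the wellfounded case. Throughout, \(R\) is given the empty progress function in \(\mathcal{G}+R\). So on any infinite branch of a proof in \(\mathcal{G}+R\), the progressing nodes are exactly the progressing nodes of \(\mathcal{G}\)-rules.

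\emph{Eliminable implies locally admissible.} Let \((S_0,\ldots,S_{n-1},S)\in R\) and let \(\pi_i\vdash_{\mathcal{G}} S_i\) be locally \(R\)-free; in fact we only use \(\pi_i\vdash_{\mathcal{G}}S_i\). Applying \(R\) once to the \(\pi_i\) gives a proof of \(S\) in \(\mathcal{G}+R\). Every infinite branch of it passes into some \(\pi_i\) and therefore still has infinitely many progressing nodes. By eliminability there is some \(\pi\vdash_{\mathcal{G}}S\). Since \(\pi\) lies in \(\mathcal{G}\), it contains no \(R\) at all, so it is trivially locally \(R\)-free.

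\emph{Locally admissible implies eliminable.} This is the main direction. Given \(\rho\vdash_{\mathcal{G}+R}S\), I will build a \(\mathcal{G}\)-proof corecursively, by translating each node of \(\rho\) together with its subproof.

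Define a map \(t\) from subproofs of \(\rho\) to \(\mathcal{G}\)-proofs. To compute \(t(\rho')\), look at the main local fragment of \(\rho'\). Its frontier consists of the progressing nodes \(v_1,\ldots,v_m\). Replace each subproof \(\rho'_{v_j}\) by a placeholder leaf and work by induction on the finite fragment: bottom-up, every \(\mathcal{G}\)-rule is kept, and every instance of \(R\) is removed using local admissibility.

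For this to work, the premises of each \(R\) instance must already have locally \(R\)-free \(\mathcal{G}\)-proofs. At the frontier this holds if we plug in \(t(\rho'_{v_j})\), which sits above a progressing step and so does not affect local \(R\)-freeness. The difficulty is that local admissibility is a statement about full proofs, not about proofs with open leaves.

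My first attempt is therefore a fixed-point argument. Consider the set \(P\) of sequents of \(\rho\) that are provable in \(\mathcal{G}\) by a locally \(R\)-free proof. I want to show that every sequent of \(\rho\) lies in \(P\). For the local fragment of each node, argue by induction on local height, from the top of the fragment down to its root. The base case is a premise reached through a progressing node. Such a premise has some \(\mathcal{G}\)-proof, and any \(\mathcal{G}\)-proof is locally \(R\)-free. A \(\mathcal{G}\)-rule applied to locally \(R\)-free proofs stays locally \(R\)-free when the premise is progressing, and also when it is not. An \(R\)-step is handled by local admissibility.

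To avoid circularity, I will instead define the \(\mathcal{G}\)-preproof directly by a corecursive choice at progressing frontiers. At the frontier, use the locally \(R\)-free proofs produced from a coinductive hypothesis. Then check the progress condition: each infinite branch of the resulting tree crosses infinitely many frontier levels, and each crossing contributes a progressing node. This bookkeeping, choosing uniformly via choice and following the details of \cite{proofth-interpretability,coalgebraic}, is the expected obstacle. In particular, the local admissibility output may replace the frontier proofs arbitrarily, so we must ensure the outputs keep progress below the reused subproofs. I would handle this by applying local admissibility to proofs whose frontier is built by a \(\mathcal{G}\)-step of progress, and proving that the result is a genuine proof as a greatest-fixpoint invariant.

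\emph{Wellfounded case.} In a wellfounded calculus there are no progressing nodes, so every proof is finite, the main local fragment is the whole proof, and locally \(R\)-free means \(R\)-free. Admissibility implies eliminability by the usual induction on height, removing the topmost \(R\) instance first. Conversely, if \(R\) is eliminable, then given \(\mathcal{G}\)-proofs of the premises, one application of \(R\) gives a proof in \(\mathcal{G}+R\), and elimination yields a proof in \(\mathcal{G}\). So all three notions coincide.
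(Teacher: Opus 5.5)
The direction from eliminability to local admissibility and your treatment of the wellfounded case are fine. The gap is in the main direction, and it comes from how you read local admissibility. You take its inputs and its output to be proofs in \(\mathcal{G}\) (``any \(\mathcal{G}\)-proof is locally \(R\)-free''). The notation \(\pi_i \vdash_{\mathcal{G}} S_i\) in Definition~\ref{def:prop-of-rules} invites this, but under that reading local admissibility is just admissibility. The implication you are trying to prove is then false for non-wellfounded calculi, as the footnote to that definition warns.

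For example, let \(\text{Seq}=\set{a,b}\). Let \(\mathcal{G}\) have a single rule with premise \(b\), conclusion \(a\), and a progressing premise, and let \(R\) have premise \(a\) and conclusion \(b\). No rule of \(\mathcal{G}\) concludes \(b\), so \(\mathcal{G}\) proves nothing and \(R\) is vacuously admissible. Yet \(\mathcal{G}+R\) proves \(a\) by the infinite branch alternating the two rules, which progresses at every other node. So the difficulties you run into are not bookkeeping: the circular fixed point, the admissibility output destroying the frontier, and the unspecified greatest-fixpoint invariant cannot be repaired. No construction built on that reading can work.

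The missing idea is that local admissibility concerns locally \(R\)-free proofs in \(\mathcal{G}+R\), in which \(R\) may still occur above the first progressing nodes. This is how it is used in Theorem~\ref{th:cut-elimination-gls}, where \(\pi_0\) and \(\tau_0\) may contain cuts. In the example, \(R\) fails this property: \(a\) has a locally \(R\)-free proof in \(\mathcal{G}+R\), but every proof of \(b\) has \(R\) at its root. With this reading there is no circularity, because at the frontier you simply keep the original subproofs. The argument has two steps.

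\textbf{Step 1.} By induction on \(\lhg(\rho)\), turn any \(\rho \vdash_{\mathcal{G}+R} S\) into a locally \(R\)-free proof of \(S\) in \(\mathcal{G}+R\):
\begin{itemize}
\item subproofs above progressing premises are left untouched;
\item subproofs above non-progressing premises have smaller local height and are handled by the induction hypothesis;
\item a \(\mathcal{G}\)-rule is simply reapplied;
\item at an \(R\)-step, whose premises are never progressing, the induction hypothesis yields locally \(R\)-free proofs of the premises, and local admissibility yields one of the conclusion.
\end{itemize}

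\textbf{Step 2.} Define the \(\mathcal{G}\)-proof corecursively. Make the input locally \(R\)-free, output its main local fragment (now consisting of \(\mathcal{G}\)-rules only), and recurse on the subproofs at its frontier. Each main local fragment is finite. Each frontier node is a progressing premise of a \(\mathcal{G}\)-rule instance copied verbatim into the output. Hence every infinite branch of the result has infinitely many progressing nodes.

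This is the argument of the references the paper cites. Your easy direction goes through unchanged under the corrected reading.
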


Finally, sometimes we will need to work with non-wellfounded proofs that have a particular finite representation.
To define this notion we introduce the notion of trees with backedges.
A \emph{tree with backedges} is an ordered pair \(\tau = (T,(\cdot)^\circ)\) such that \(T\) is a finite tree and \((\cdot)^\circ\) is a function from a subset of the leafs of \(\tau\) to the nodes of \(\tau\) such that if \(w\) is in the domain then \(w^\circ < w\), the nodes in the domain of \((\cdot)^\circ\) are called \emph{repeat nodes}.

\begin{definition}
  Given a local progress calculus \(\mathcal{G} = (\mathcal{R}, (L_R)_{R \in \mathcal{R}})\), a \emph{cyclic preproof in \(\mathcal{G}\)} is a tree with backedges \(\pi = (T, (\cdot)^\circ)\) on \(\text{Seq} \times \mathcal{R}\) such that for any non-repeat node \(w\) with immediate successors \(w0, \ldots, w(n-1)\), we have that \((S_0,\ldots,S_{n-1},S) \in R\) where \(S\) is the sequent at \(w\), \(R\) is the rule at \(w\) and \(S_i\) is the sequent at \(wi\).
    and for any repeat node \(w\), \(w\) and \(w^\circ\) have the same sequent and rule.
    A \emph{cyclic proof in \(\mathcal{G}\)} is a cyclic preproof such that for any repeat node \(w\) there is a progressing node in \((w^\circ,w]\).
\end{definition}
In cyclic preproofs we will annotate repeat nodes with the word \(\rep\) at the right, instead of the rule at the node.

\section{The logic GLS and its sequent calculi}\label{sec:sequent-calculi}

We fix an infinite countable set \(\text{Var}\) whose elements are called \emph{propositional variables}.

\begin{definition}
  We define the language \(\mathcal{L}_{\nec}\) given by the following Backus-Naur form:
  \[
    \phi ::= p \mid \bot \mid \phi \to \phi \mid \nec \phi,
  \]
  where \(p \in \text{Var}\).
  The expressions of \(\mathcal{L}_{\nec}\) are called \emph{(modal) formulas}. 
  The complexity of a formula \(\phi\) is the number of logical connectives of \(\phi\), and will be denoted as \(|\phi|\).
\end{definition}

Given a multisets of formulas \(\Gamma\), we will  write \(\nec \Gamma\) to mean the multiset \(\set{\nec \phi \mid \phi \in \Gamma}\) and \(\necd \Gamma\) to mean the multiset \(\Gamma \union \nec \Gamma\).
  Then, for a formula \(\phi\), \(\necd \phi\) will be the multiset with two elements \(\set{\phi, \nec \phi}\).

\begin{definition}
  We define the logic \(\GL\) as the smallest set of formulas such that
  \begin{enumerate}
    \item every classical propositional tautology in \(\mathcal{L}_{\nec}\) is in \(\GL\),
    \item \((\K)\) \(\nec(\phi \to \psi) \to \nec \phi \to \nec \psi\) is in \(\GL\),
    \item \((\L)\) \(\nec(\nec \phi \to \phi) \to \nec \phi\) is in \(\GL\),
  \end{enumerate}
  and that is closed under the rules
  \[
    \AxiomC{\(\phi\)}
    \AxiomC{\(\phi \to \psi\)}
    \RightLabel{\(\MP\)}
    \BinaryInfC{\(\psi\)}
    \DisplayProof
    \qquad
    \AxiomC{\(\phi\)}
    \RightLabel{\(\NEC\)}
    \UnaryInfC{\(\nec \phi\)}
    \DisplayProof
  \]
  The elements of \(\GL\) are also called the \emph{theorems of\/ \(\GL\)}.
  We will write \(\GL \vdash \phi\) to mean \(\phi \in \GL\).

  We define the logic \(\GLS\) as the smallest set of formulas such that
  \begin{enumerate}
    \item every theorem of \(\GL\) is in \(\GLS\),
    \item \((\S)\) \(\nec \phi \to \phi\) is in   \(\GLS\),
  \end{enumerate}
  and that is closed under the rule \((\MP)\).
  The elements of \(\GLS\) are also called the \emph{theorems of\/ \(\GLS\)}.
  We will write \(\GLS \vdash \phi\) to mean \(\phi \in \GLS\).
\end{definition}

Now we will define sequent calculi for \(\GLS\). We start by fixing the notion of sequent we will work with.
A \emph{sequent} is a triple \((\Gamma, \Delta,i)\) where \(\Gamma, \Delta\) are multisets of formulas and \(i \in \set{0,1}\).
We  write \((\Gamma, \Delta, 0)\) as \(\Gamma \Rightarrow \Delta\) and \((\Gamma, \Delta, 1)\) as \(\Gamma \Rrightarrow \Delta\).
We  use \(\gg\), possibly with subindexes, to mean either \( \Rightarrow \) or \(\Rrightarrow\).
Given a multiset \(\Sigma\),  we  let \(\Sigma^s\) be the \emph{set} with the same elements as \(\Sigma\) (but without repetitions).

\begin{figure}
  \[
    \AxiomC{}
    \RightLabel{\(\ax\)}
    \UnaryInfC{\(p, \Gamma \gg p, \Delta\)}
    \DisplayProof
    \qquad
    \AxiomC{}
    \RightLabel{\(\botL\)}
    \UnaryInfC{\(\bot, \Gamma \gg \Delta\)}
    \DisplayProof
    \qquad
    \AxiomC{\(\Gamma \gg \Delta\)}
    \RightLabel{\(\botR\)}
    \UnaryInfC{\(\Gamma \gg \Delta, \bot\)}
    \DisplayProof
  \]

  \[
    \AxiomC{\(\Gamma \gg \phi, \Delta\)}
    \AxiomC{\(\psi, \Gamma \gg \Delta\)}
    \RightLabel{\(\toL\)}
    \BinaryInfC{\(\phi \to \psi, \Gamma \gg \Delta\)}
    \DisplayProof
    \qquad
    \AxiomC{\(\phi, \Gamma \gg \psi, \Delta\)}
    \RightLabel{\(\toR\)}
    \UnaryInfC{\(\Gamma \gg \phi \to \psi, \Delta\)}
    \DisplayProof
  \]

  \[
    \AxiomC{\(\necd \Sigma, \nec \phi \Rightarrow \phi\)}
    \RightLabel{\(\modal{\GL}\)}
    \UnaryInfC{\(\nec \Sigma, \Gamma \gg \nec\phi, \Delta\)}
    \DisplayProof
    \qquad
    \AxiomC{\(\necd \Sigma \Rightarrow \phi\)}
    \RightLabel{\(\modal{\KT}\)}
    \UnaryInfC{\(\nec \Sigma, \Gamma \gg \nec\phi, \Delta\)}
    \DisplayProof
    \qquad
    \AxiomC{\(\necd \phi, \Gamma \Rrightarrow \Delta\)}
    \RightLabel{\(\S\)}
    \UnaryInfC{\(\nec \phi, \Gamma \Rrightarrow \Delta\)}
    \DisplayProof
  \]

  \caption{Sequent Rules}
  \label{fig:sequent-rules}
\end{figure}

  We introduce some terminology related to the rules of Figure~\ref{fig:sequent-rules}.
  In \((\botR)\), \((\toL)\), \((\toR)\), \((\S)\), \((\modal{\GL})\) and \((\modal{\KT})\) the displayed formula at the conclusion is called \emph{principal formula}.
  In \((\modal{\GL})\) and \((\modal{\KT})\) the formulas at the conclusion in \(\nec \Sigma\) will be called \emph{auxiliary}.
  In \((\modal{\GL})\) the formula \(\nec \phi\) at the premise is called \emph{diagonal formula}.
  In \((\ax)\), \((\botL)\), \((\modal{\GL})\) and \((\modal{\KT})\) the formulas in \(\Gamma\) and \(\Delta\) will be said to belong to the \emph{weakening part}. Note that the weakneing part of any rule instance can be changed arbitrarily and still we will have a rule instance of the same rule.

\begin{definition}
  We make the following definitions.
  \begin{enumerate}
    \item \(\g{\GLS}\) is the wellfounded calculus with rules \((\ax)\), \((\botL)\), \((\botR)\), \((\toL)\), \((\toR)\), \((\modal{\GL})\) and \((\S)\).\footnote{This calculus is an minor variation of the ones appearing in \cite{lev-GLS, GLSprooftheory}.}
    \item \(\n{\GLS}\) is the local progress calculus with rules \((\ax)\), \((\botL)\), \((\botR)\), \((\toL)\), \((\toR)\), \((\modal{\KT})\) and \((\S)\).
      Progress only occurs at the premise of \((\modal{\KT})\).
  \end{enumerate}
\end{definition}

\begin{lemma}\label{lm:basic-prop-sequent}
  For any formula \(\phi\) and multisets \(\Gamma, \Delta\) we have that
  \begin{enumerate}
    \item \(\g{\GLS} \vdash \phi, \Gamma \gg \phi, \Delta\) and \(\n{\GLS} \vdash \phi, \Gamma \gg \phi, \Delta\).
      We denote the use of this inside proofs as \((\Ax)\).
    \item The following rule is admissible in \(\g{\GLS}\; (+\cut)\) and in \(\n{\GLS}\; (+\cut)\):
      \[
        \AxiomC{\(\Gamma \Rightarrow \Delta\)}
        \RightLabel{\(\mathrm{Prom}\)}
        \UnaryInfC{\(\Gamma \Rrightarrow \Delta\)}
        \DisplayProof
      \]
  \end{enumerate}
\end{lemma}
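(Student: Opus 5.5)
Item 1 goes by induction on the complexity \(|\phi|\), done simultaneously for both \(\gg\) and for arbitrary \(\Gamma,\Delta\). Item 2 goes by induction on proofs (wellfounded case) and by a corecursive transformation (non-wellfounded case). In both items cut is only carried along and never eliminated, so the \((+\cut)\) versions need no extra work.

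\emph{Item 1.} If \(\phi = p\), apply \((\ax)\). If \(\phi = \bot\), apply \((\botL)\). If \(\phi = \phi_1 \to \phi_2\), apply \((\toR)\) and then \((\toL)\). This gives the premises \(\phi_1, \Gamma \gg \phi_1, \phi_2, \Delta\) and \(\phi_2, \phi_1, \Gamma \gg \phi_2, \Delta\), both of which follow from the induction hypothesis. If \(\phi = \nec \psi\), the modal rules keep only boxed formulas on the left, so the left-hand \(\nec\psi\) is retained as auxiliary formula and \(\Gamma,\Delta\) go into the weakening part.
\begin{itemize}
\item In \(\g{\GLS}\), apply \((\modal{\GL})\) with \(\Sigma = \set{\psi}\). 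The premise is \(\psi, \nec\psi, \nec\psi \Rightarrow \psi\), which is an instance of the induction hypothesis for \(\psi\).
\item In \(\n{\GLS}\), apply \((\modal{\KT})\). The premise is \(\psi, \nec\psi \Rightarrow \psi\), again an instance of the induction hypothesis.
\end{itemize}
The resulting trees are finite. In the non-wellfounded calculus they are therefore trivially proofs, since there are no infinite branches.

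\emph{Item 2.} Take a proof of \(\Gamma \Rightarrow \Delta\) and change the sequent arrow to \(\Rrightarrow\) along the main part. The point to check is that every rule instance stays valid. The rules \((\ax)\), \((\botL)\), \((\botR)\), \((\toL)\), \((\toR)\) and \(\cut\) are parametric in \(\gg\), so they remain instances if the arrow is changed in the conclusion and in all premises uniformly. The modal rules \((\modal{\GL})\) and \((\modal{\KT})\) allow either arrow in the conclusion while their premise is always \(\Rightarrow\). So at a modal rule we change only the conclusion's arrow and keep the original subproof above it untouched. Rule \((\S)\) never occurs in a proof whose endsequent uses \(\Rightarrow\) before a modal rule is reached, because \((\S)\) requires \(\Rrightarrow\). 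It therefore does not arise in this region.

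Formally, in \(\g{\GLS}\) this is an induction on height. In \(\n{\GLS}\) I define the new preproof by changing labels exactly at the nodes of the main local fragment and, more precisely, at all nodes below the first modal rule on each branch. Every such region is finite. Indeed, each branch either ends in a leaf or passes through a \((\modal{\KT})\), since an infinite branch must contain progressing nodes, and those occur only at premises of \((\modal{\KT})\); König's lemma then gives finiteness. The tree shape and the set of progressing nodes are unchanged, so the progress condition is inherited.

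The main obstacle is simply making sure the non-wellfounded transformation is well defined and preserves the proof condition. This is immediate once we observe that the relabelling changes only the \(\gg\) symbol on a prefix-closed set of nodes and leaves the progressing nodes unchanged.
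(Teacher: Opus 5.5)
Your proposal is correct and follows essentially the paper's approach. Item 1 is by induction on \(\phi\), and in item 2 your relabelling of the arrow on the main local fragment, stopping at the conclusions of the first modal rules, is just the unfolded form of the paper's induction on local height. What makes each step go through is the pair of facts you isolated: \((\S)\) cannot occur below a \(\Rightarrow\)-conclusion, and the modal rules accept either arrow in their conclusion.
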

\begin{proof}
  The proof of 1.\ is by induction on \(\phi\) while the proof of 2.\ is by induction on the local height of the proof.
\end{proof}

\begin{figure}
  \[
    \AxiomC{\(\Gamma \gg \Delta\)}
    \RightLabel{\(\wk\)}
    \UnaryInfC{\(\Gamma, \Gamma' \gg \Delta, \Delta'\)}
    \DisplayProof
    \qquad
    \AxiomC{\(\Gamma, \Gamma, \Gamma' \gg \Delta, \Delta, \Delta'\)}
    \RightLabel{\(\ctr\)}
    \UnaryInfC{\(\Gamma, \Gamma' \gg \Delta, \Delta'\)}
    \DisplayProof
    \qquad
    \AxiomC{\(\Gamma \gg \Delta, \chi\)}
    \AxiomC{\(\chi, \Gamma \gg \Delta\)}
    \RightLabel{\(\cut\)}
    \BinaryInfC{\(\Gamma \gg \Delta\)}
    \DisplayProof
  \]
  \caption{Structural Rules}
  \label{fig:structural-rules}
\end{figure}

In Figure~\ref{fig:structural-rules} we can see some common structural rules.
As usual, these rules behave well with the sequent calculi, as we record in the following lemma.
The proof, which is omitted due to constraints of space, is just using induction on the local height and using Theorem~\ref{th:local-adm-and-eliminability} when needed.

\begin{lemma}
  In \(\g{\GLS}\; (+\cut)\) and in \(\n{\GLS} (+\cut)\) we have that
  \begin{enumerate}
    \item \(\wk\) is eliminable and admissible preserving local height and local rules,
    \item \((\botR)\), \((\toL)\), \((\toR)\) are invertible preserving local height and local rules,
    \item \(\ctr\) is admissible preserving local height and local rules.
  \end{enumerate}
\end{lemma}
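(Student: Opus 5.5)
The plan is to treat the three items in order, since later items use earlier ones. Each is proved by induction on the local height of the given proof, with a case split on the last rule. In the non-wellfounded calculus \(\n{\GLS}\) the induction is legitimate because the main local fragment is finite. The progressing premise of \((\modal{\KT})\) is never entered: the transformation stops there and reuses the original subproof unchanged. For \((\modal{\GL})\) in \(\g{\GLS}\) we use ordinary height induction, since that calculus is wellfounded. Throughout, if \(\cut\) is present it is treated like any other rule, with the inductive hypothesis applied to both premises.

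\textbf{Weakening.} Induction on the local height of \(\pi \vdash \Gamma \gg \Delta\).
\begin{itemize}
\item For \((\ax)\), \((\botL)\), \((\modal{\GL})\) and \((\modal{\KT})\), add \(\Gamma', \Delta'\) to the weakening part of the last rule. This gives an instance of the same rule with the same premises, so local height and local rules are unchanged.
\item For \((\botR)\), \((\toL)\), \((\toR)\), \((\S)\) and \(\cut\), apply the inductive hypothesis to the premises and reapply the rule. The rule \((\S)\) keeps its \(\Rrightarrow\) marker, which is harmless since \(\wk\) never changes \(\gg\).
\end{itemize}
This gives admissibility preserving local height and local rules. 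Admissibility implies local admissibility: a locally \(\wk\)-free proof of the premise yields a proof of the conclusion whose local rules lie among those of the premise proof, so it is locally \(\wk\)-free. Theorem~\ref{th:local-adm-and-eliminability} then gives eliminability.

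\textbf{Invertibility.} For each of \((\botR)\), \((\toL)\) and \((\toR)\), argue by induction on the local height of a proof of the conclusion, tracking the principal formula.
\begin{itemize}
\item If the formula is principal in the last rule, take the premise subproof. For \((\toL)\) and \((\toR)\) we may need weakening (already established) to match multisets.
\item If the formula lies in the weakening part of \((\ax)\), \((\botL)\) or a modal rule, replace it in the weakening part by the desired components. The result is still an instance of the same rule, and for \((\ax)\) the axiom pair is untouched.
\item Otherwise, permute: apply the inductive hypothesis to the premises of the last rule and reapply that rule.
\end{itemize}
The modal rules never have a non-boxed formula of this kind among their active formulas, so the second case always applies to them. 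Local height and local rules are preserved in every case.

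\textbf{Contraction.} This is the main obstacle. Argue by induction on the local height, working on contracted pairs of formulas.
\begin{itemize}
\item If neither copy is principal, permute contraction upward.
\item If one copy is principal in \((\botR)\), \((\toL)\) or \((\toR)\), invert the other copy using the previous item, which preserves local height, and then apply the inductive hypothesis to the resulting smaller-height premise with the component formulas.
\item If a duplicated \(\nec\chi\) on the left is principal in \((\S)\), the premise contains \(\chi, \nec\chi, \nec\chi\). Apply the inductive hypothesis to contract the two copies of \(\nec\chi\).
\item For the modal rules with duplicated auxiliary formulas, contract in the premise by the inductive hypothesis. In \(\n{\GLS}\) we cannot do this, because the premise is above a progressing node. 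Instead, observe that the premise contains a duplicated \(\necd\chi\), and note that the rule instance itself can use \(\Sigma^s\).
\end{itemize}

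The last case is the subtle point: for a \emph{set} version of \(\Sigma\) we must produce a proof of the premise with the duplicates removed, which would require contraction above the progress point. The fix is a coinductive construction rather than an inductive one. Define simultaneously, by corecursion, a contracting transformation on all proofs. Each step emits the local fragment and then continues corecursively past the modal rule. Progress is preserved because every modal step of the original proof is mirrored by one in the new proof. Alternatively, show local admissibility of \(\ctr\) as above and use Theorem~\ref{th:local-adm-and-eliminability}. Combined with the fact that \(\ctr\) is admissible whenever it is eliminable, this gives admissibility. In both variants, local height and local rules are preserved by the local part of the construction.
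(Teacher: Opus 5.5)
Your proposal is correct and matches the paper's (only sketched) argument: induction on local height with a case split on the last rule, invoking Theorem~\ref{th:local-adm-and-eliminability} where needed. You also rightly isolate the one delicate case, duplicated auxiliary formulas of \((\modal{\KT})\), whose contraction must happen above the progressing premise, either corecursively or by putting a \(\ctr\) instance directly on that premise so the proof stays locally \(\ctr\)-free. The aside that the rule instance ``can use \(\Sigma^s\)'' is wrong, since \((\modal{\KT})\) is formulated with multisets (sets only occur in \((\interpolation{\KT})\)), but your final argument does not rely on it.
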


Using the admissibility of structural rules is easy to show the following (see \cite{GLSprooftheory}).

\begin{theorem}
  For any multisets \(\Gamma, \Delta\) we have that
  \[
    \g{\GLS} + \cut \vdash \Gamma \Rightarrow \Delta \text{ iff }\GL \vdash \bigwedge \Gamma \to \bigvee \Delta 
    \quad \text{and}\quad
    \g{\GLS} + \cut \vdash \Gamma \Rrightarrow \Delta \text{ iff }\GLS \vdash \bigwedge \Gamma \to \bigvee \Delta.
  \]
\end{theorem}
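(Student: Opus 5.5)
We prove each of the two equivalences in two directions: soundness (left to right) by induction on the derivation in \(\g{\GLS} + \cut\), and completeness (right to left) by induction on the Hilbert-style derivation in \(\GL\), respectively \(\GLS\). Throughout we read a sequent \(\Gamma \Rightarrow \Delta\) as the formula \(\bigwedge \Gamma \to \bigvee \Delta\) and require it to be a theorem of \(\GL\). We read \(\Gamma \Rrightarrow \Delta\) the same way but require it to be a theorem of \(\GLS\).

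\textbf{Soundness.} Since \(\GL \subseteq \GLS\), it suffices to prove the stronger claim: a derivation of \(\Gamma \gg \Delta\) yields the corresponding theorem in \(\GL\) if \(\gg\) is \(\Rightarrow\), and in \(\GLS\) if \(\gg\) is \(\Rrightarrow\). For \((\ax)\), \((\botL)\), \((\botR)\), \((\toL)\), \((\toR)\) and \((\cut)\) the rules never change the arrow. Their validity is propositional reasoning inside either logic, since both contain all tautologies and are closed under \(\MP\).

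For \((\S)\), both sides use \(\Rrightarrow\). From \(\GLS \vdash \phi \wedge \nec\phi \wedge \bigwedge\Gamma \to \bigvee\Delta\) and the axiom \(\nec\phi \to \phi\) we get \(\GLS \vdash \nec\phi \wedge \bigwedge\Gamma \to \bigvee\Delta\).

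For \((\modal{\GL})\) the premise always uses \(\Rightarrow\), so the induction hypothesis gives \(\GL \vdash \bigwedge\necd\Sigma \wedge \nec\phi \to \phi\). Applying \(\NEC\) and \(\K\) yields \(\GL \vdash \bigwedge\nec\necd\Sigma \to \nec(\nec\phi\to\phi)\). Löb's axiom then gives \(\bigwedge\nec\necd\Sigma \to \nec\phi\). Transitivity \(\nec\sigma \to \nec\nec\sigma\) is a known theorem of \(\GL\), derivable from \(\L\), and it gives \(\bigwedge\nec\Sigma \to \bigwedge\nec\necd\Sigma\). Weakening by \(\Gamma,\Delta\) gives the conclusion in \(\GL\), and hence also in \(\GLS\) when the conclusion uses \(\Rrightarrow\).

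\textbf{Completeness for \(\GL\).} We show that every theorem \(\phi\) of \(\GL\) gives \(\g{\GLS}+\cut \vdash {}\Rightarrow \phi\). Tautologies are handled by a standard derivation from the propositional rules together with (\(\Ax\)) of Lemma~\ref{lm:basic-prop-sequent}.

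\(\K\) is derived by applying \(\toR\) and then \((\modal{\GL})\) with \(\Sigma = \{\phi\to\psi, \phi\}\), discarding the diagonal formula by weakening. The premise \(\necd\Sigma, \nec\psi \Rightarrow \psi\) then follows from \(\toL\) and (\(\Ax\)).

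\(\L\) is derived by applying \(\toR\) and then \((\modal{\GL})\) with \(\Sigma=\{\nec\phi\to\phi\}\). The premise is \(\nec\phi\to\phi, \nec(\nec\phi\to\phi), \nec\phi \Rightarrow \phi\), which follows by \(\toL\) and (\(\Ax\)).

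\(\MP\) is simulated by \(\cut\) together with invertibility of \(\toR\), or equivalently by a cut with \(\phi\to\psi,\phi\Rightarrow\psi\). For \(\NEC\), from \({}\Rightarrow\phi\) we obtain \(\nec\phi\Rightarrow\phi\) by \(\wk\), and then \((\modal{\GL})\) with \(\Sigma=\varnothing\) gives \({}\Rightarrow\nec\phi\).

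Finally, from \({}\Rightarrow \bigwedge\Gamma\to\bigvee\Delta\) we recover \(\Gamma\Rightarrow\Delta\) by invertibility of \(\toR\), cuts with the derivable conjunction and disjunction sequents, and \(\wk\).

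\textbf{Completeness for \(\GLS\).} Theorems of \(\GL\) are handled by the previous step followed by \(\mathrm{Prom}\) from Lemma~\ref{lm:basic-prop-sequent}. The axiom \(\S\) is derived as \(\nec\phi\Rrightarrow\phi\) from \(\phi,\nec\phi\Rrightarrow\phi\) by the rule \((\S)\), and then \(\toR\) gives \({}\Rrightarrow \nec\phi\to\phi\). \(\MP\) is again \(\cut\), applied at level \(\Rrightarrow\). Decoding \(\bigwedge\Gamma\to\bigvee\Delta\) back to \(\Gamma\Rrightarrow\Delta\) works exactly as before.

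The main point needing care is soundness of \((\modal{\GL})\). It relies on having \(\nec\sigma\to\nec\nec\sigma\) in \(\GL\), which we take as the standard derivation from Löb's axiom. It also relies on the fact that the premise of \((\modal{\GL})\) always uses \(\Rightarrow\), so no use of \(\S\) leaks under a box. This is exactly why the stronger soundness claim above has to track which arrow each sequent uses.
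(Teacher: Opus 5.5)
Your argument is correct and is the standard one the paper has in mind. The paper gives no detailed proof: it only remarks that the equivalence follows easily from the admissibility of the structural rules and refers to the literature. Your completeness direction uses exactly those tools (weakening, invertibility of \((\toR)\), \(\mathrm{Prom}\), and \(\cut\) to simulate \(\MP\)). Your soundness direction tracks the arrow so that the premise of \((\modal{\GL})\) is handled inside \(\GL\), using \(\NEC\), L\"ob's axiom and transitivity; this is the routine induction the paper's remark leaves implicit.
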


\subsection{Translations between wellfounded and non-wellfounded proofs}

To establish cut elimination for \(\g{\GLS}\) via cut elimination for \(\n{\GLS}\), we will need to be able to translate proofs between both calculi.
In this section we define the necessary translations.
The following lemma follows from using \((\cut)\) and \((\modal{\GL})\).

\begin{lemma}[L\"ob's rule]
  The following rule is admissible in \(\g{\GLS} + \cut\):
  \[
    \AxiomC{\(\necd \Sigma, \nec\phi \Rightarrow \phi\)}
    \RightLabel{\(\lob\)}
    \UnaryInfC{\(\necd \Sigma \Rightarrow \phi\)}
    \DisplayProof
  \]
\end{lemma}
%\begin{proof}
%  Let \(\pi \vdash \necd \Sigma, \nec \phi \Rightarrow \phi\) in \(\g{\GLS} + \cut\), we have the following proof
%  \[
%    \AxiomC{\(\pi\)}
%    \noLine
%    \UnaryInfC{\(\necd \Sigma, \nec \phi \Rightarrow \phi\)}
%    \RightLabel{\(\modal{\GL}\)}
%    \UnaryInfC{\(\necd \Sigma \Rightarrow \phi, \nec\phi\)}
%    \AxiomC{\(\pi\)}
%    \noLine
%    \UnaryInfC{\(\necd \Sigma, \nec \phi \Rightarrow \phi\)}
%    \RightLabel{\(\cut\)}
%    \BinaryInfC{\(\necd \Sigma \Rightarrow \phi\)}
%    \DisplayProof
%  \]
%\end{proof}

\begin{lemma}\label{lm:from-wellfounded-to-nonwellfounded}
  If\/ \(\g{\GLS} + \cut \vdash \Gamma \Rightarrow \Delta\), then \(\n{\GLS} + \cut \vdash \Gamma \Rightarrow \Delta\).
\end{lemma}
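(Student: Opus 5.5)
We argue by induction on the height of the wellfounded proof \(\pi \vdash \Gamma \gg \Delta\) in \(\g{\GLS} + \cut\). We prove the stronger statement for both kinds of sequents (\(\Rightarrow\) and \(\Rrightarrow\)), since \((\S)\) and \(\mathrm{Prom}\)-like situations mix them. Every rule of \(\g{\GLS} + \cut\) except \((\modal{\GL})\) is also a rule of \(\n{\GLS} + \cut\): this covers \((\ax)\), \((\botL)\), \((\botR)\), \((\toL)\), \((\toR)\), \((\S)\) and \((\cut)\). For these we apply the induction hypothesis to the premises and reapply the same rule in \(\n{\GLS}+\cut\). The resulting tree is a proof, because any infinite branch passes through the root only once and then continues inside one of the proofs given by the induction hypothesis, where it already has infinitely many progressing nodes.

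The only real case is \((\modal{\GL})\). Here the conclusion is \(\nec\Sigma, \Gamma \gg \nec\phi, \Delta\), and by induction hypothesis we have some \(\rho \vdash \necd\Sigma, \nec\phi \Rightarrow \phi\) in \(\n{\GLS}+\cut\). By weakening (admissible, Lemma on structural rules) it suffices to prove \(\nec\Sigma \Rightarrow \nec\phi\), and then pass to \(\gg\) by \(\mathrm{Prom}\) (Lemma~\ref{lm:basic-prop-sequent}) if needed. To derive the diagonal formula \(\nec\phi\) without \((\modal{\GL})\), we build a non-wellfounded proof \(\tau\) of \(\nec\Sigma \Rightarrow \nec\phi\) by a coinductive unfolding:
\[
  \AxiomC{\(\tau\)}
  \noLine
  \UnaryInfC{\(\nec\Sigma \Rightarrow \nec\phi\)}
  \RightLabel{\(\wk\)}
  \UnaryInfC{\(\necd\Sigma \Rightarrow \phi, \nec\phi\)}
  \AxiomC{\(\rho\)}
  \noLine
  \UnaryInfC{\(\necd\Sigma, \nec\phi \Rightarrow \phi\)}
  \RightLabel{\(\cut\)}
  \BinaryInfC{\(\necd\Sigma \Rightarrow \phi\)}
  \RightLabel{\(\modal{\KT}\)}
  \UnaryInfC{\(\nec\Sigma \Rightarrow \nec\phi\)}
  \DisplayProof
\]
Formally, \(\tau\) is the limit of this cyclic construction, i.e.\ the unfolding of a cyclic preproof whose repeat node sits above the \((\wk)\), and whose companion is the root. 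Since weakening is admissible preserving local rules, we may either treat \(\wk\) as a rule with no progress or push it into the proof at each stage.

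The main obstacle is verifying that the resulting tree is a genuine proof. Infinite branches fall into two kinds. Those that eventually enter a copy of \(\rho\) stay there and progress infinitely often by hypothesis on \(\rho\). Those that go through the \(\tau\)-branch forever pass through the premise of \((\modal{\KT})\) once per cycle, which is a progressing node. To make this rigorous without \(\wk\) as an explicit rule, I would define the proof as a fixed point. Starting from the sequent \(\nec\Sigma\Rightarrow\nec\phi\), I would apply \((\modal{\KT})\), then \(\cut\) on \(\nec\phi\), and then, on the left premise \(\necd\Sigma \Rightarrow \phi, \nec\phi\), apply \((\modal{\KT})\) again with principal \(\nec\phi\) and the \(\Sigma\)-part, discarding the rest as weakening part. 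This yields literally a cyclic proof with one backedge, whose cycle contains the progressing premise of \((\modal{\KT})\). Its unfolding is a proof in \(\n{\GLS}+\cut\), which completes the case and the induction.
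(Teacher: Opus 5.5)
Your proof is correct and is essentially the paper's. The paper defines one corecursive translation \(\alpha\) that commutes with every rule except \((\modal{\GL})\), which it replaces by \((\modal{\KT})\) applied to \(\alpha(\lob(\pi_0))\); since \(\lob(\pi_0)\) is \((\modal{\GL})\) applied to \(\pi_0\) and then cut against \(\pi_0\), this unfolds to exactly your loop \(X = \cut(\modal{\KT}(X), \alpha(\pi_0))\). The differences are organizational (your outer induction on height with a local fixed point versus a single global corecursion), and with \(\rho\) grafted in the result is not literally a cyclic proof in the paper's sense, so it is cleanest to define it directly by that guarded corecursive equation.
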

\begin{proof}
  We define corecursively a function \(\alpha\) from proofs in \(\g{\GLS} + \cut\) to proofs in \(\n{\GLS} + \cut\) by cases on the last rule of the input proof. 
  It commutes with all rules different from \((\modal{\GL})\), and for \((\modal{\GL})\) it is defined as
  \[
    \AxiomC{\(\pi_0\)}
    \noLine
    \UnaryInfC{\(\necd \Sigma, \nec \phi \Rightarrow \phi\)}
    \RightLabel{\(\modal{\GL}\)}
    \UnaryInfC{\(\nec \Sigma, \Gamma \Rightarrow \nec\phi, \Delta\)}
    \DisplayProof
    \quad
    \overset{\alpha}{\longmapsto}
    \quad
    \AxiomC{\(\alpha(\lob(\pi_0))\)}
    \noLine
    \UnaryInfC{\(\necd \Sigma\Rightarrow \phi\)}
    \RightLabel{\(\modal{\KT}\)}
    \UnaryInfC{\(\nec \Sigma, \Gamma \Rightarrow \nec\phi, \Delta\)}
    \DisplayProof
  \]
  %and
  %\[
  %  \AxiomC{\(\pi_0\)}
  %  \noLine
  %  \UnaryInfC{\(S_0\)}
  %  \AxiomC{\(\cdots\)}
  %  \AxiomC{\(\pi_{n-1}\)}
  %  \noLine
  %  \UnaryInfC{\(S_{n-1}\)}
  %  \RightLabel{\(R\)}
  %  \TrinaryInfC{\(S\)}
  %  \DisplayProof
  %  \quad
  %  \overset{\alpha}{\longmapsto}
  %  \quad
  %  \AxiomC{\(\alpha(\pi_0)\)}
  %  \noLine
  %  \UnaryInfC{\(S_0\)}
  %  \AxiomC{\(\cdots\)}
  %  \AxiomC{\(\alpha(\pi_{n-1})\)}
  %  \noLine
  %  \UnaryInfC{\(S_{n-1}\)}
  %  \RightLabel{\(R\)}
  %  \TrinaryInfC{\(S\)}
  %  \DisplayProof
  %\]
  %for \(R \neq \modal{\GL}\).
  We notice that \(\alpha(\pi)\) is indeed a proof and not a preproof, the corecursive call only increase in height if progress is made in the path from the root to the corecursive call.
\end{proof}

\begin{definition}
  Let \(\phi\) be a formula.
  We define its \emph{set of subformulas}, denoted \(\sub(\phi)\), recursively on \(\phi\) as
  \begin{align*}
    &\sub(p) = \set{p},
    &&\sub(\bot) = \set{\bot}, \\
    &\sub(\phi_0 \to \phi_1) = \set{\phi_0 \to \phi_1} \union \sub(\phi_0) \union \sub(\phi_1),
    &&\sub(\nec \phi_0) = \set{\nec \phi_0} \union \sub(\phi_0).
  \end{align*}
  Given a multiset \(\Gamma\) we will write \(\sub(\Gamma)\) to denote the set \(\Union_{\phi \in \Gamma} \sub(\phi)\), and given a sequent \(\Gamma \gg \Delta\) we will write \(\sub(\Gamma \gg \Delta)\) to denote \(\sub(\Gamma) \union \sub(\Delta)\).
\end{definition}

The following lemma follows by inspection of Figure~\ref{fig:sequent-rules}.

\begin{lemma}[Local subformula property]
  Let
  \[
    \AxiomC{\(S_0\)}
    \AxiomC{\(\cdots\)}
    \AxiomC{\(S_{n-1}\)}
    \RightLabel{\(R\)}
    \TrinaryInfC{\(S\)}
    \DisplayProof
  \]
  be an instance of a rule of\/ \(\g{\GLS}\) or \(\n{\GLS}\).
  Then \(\sub(S_i) \subseteq \sub(S)\) for \(i < n\).
\end{lemma}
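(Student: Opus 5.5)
The claim is a statement about single rule instances, so no induction is needed. I would simply go through the seven rules of Figure~\ref{fig:sequent-rules}, namely \((\ax)\), \((\botL)\), \((\botR)\), \((\toL)\), \((\toR)\), \((\modal{\GL})\), \((\modal{\KT})\) and \((\S)\). For each one I check that every formula occurring in each premise is a subformula of some formula in the conclusion. Since \(\sub(S)\) is closed under taking subformulas (by an immediate induction on the definition of \(\sub\)), it is enough to show that each formula of \(S_i\) belongs to \(\sub(S)\). Then \(\sub(S_i) \subseteq \sub(S)\) follows.

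The propositional rules are handled first. \((\ax)\) and \((\botL)\) have no premises, so there is nothing to check. For \((\botR)\), the premise \(\Gamma \gg \Delta\) is contained in the conclusion. For \((\toL)\), the premises contain \(\Gamma,\Delta\) together with \(\phi\) or \(\psi\), and both \(\phi,\psi \in \sub(\phi\to\psi)\). The case \((\toR)\) is the same.

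The modal rules come next. For \((\S)\), the premise contains \(\necd\phi = \set{\phi,\nec\phi}\) and \(\Gamma,\Delta\), and \(\phi \in \sub(\nec\phi)\). For \((\modal{\KT})\), the premise \(\necd\Sigma \Rightarrow \phi\) consists of formulas \(\sigma\) and \(\nec\sigma\) for \(\sigma\in\Sigma\), which lie in \(\sub(\nec\Sigma)\), together with \(\phi \in \sub(\nec\phi)\). For \((\modal{\GL})\), the premise additionally contains the diagonal formula \(\nec\phi\). This formula is literally the principal formula of the conclusion, so it too lies in \(\sub(S)\). The weakening parts \(\Gamma,\Delta\) of the modal rules simply disappear in the premise, which only helps.

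No step is a real obstacle. The only point deserving a remark is the diagonal formula in \((\modal{\GL})\), which is not a proper subformula but is still a member of \(\sub(S)\), because \(\sub(\nec\phi)\) contains \(\nec\phi\) itself. The statement holds uniformly for both \(\g{\GLS}\) and \(\n{\GLS}\), because their rules are drawn from this same list.
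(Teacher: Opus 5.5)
Your proposal is correct and matches the paper, which justifies the lemma simply ``by inspection of Figure~\ref{fig:sequent-rules}''. Your rule-by-rule check, including the remark that the diagonal formula \(\nec\phi\) of \((\modal{\GL})\) is the principal formula and hence lies in \(\sub(S)\), is that inspection written out; the only slip is cosmetic, since you announce seven rules but list and treat eight.
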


\begin{lemma}\label{lm:from-nonwellfounded-to-wellfounded}
  For any finite set \(\Lambda\), we have that \(\n{\GLS} \vdash \Gamma \gg \Delta\) implies \(\g{\GLS} \vdash \nec \Lambda, \Gamma \gg \Delta\).
\end{lemma}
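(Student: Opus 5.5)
Because \(\n{\GLS}\) is cut-free and has the local subformula property, every sequent in a proof of \(\Gamma \gg \Delta\) has all its formulas in \(\sub(\Gamma \gg \Delta)\). I would therefore argue by a well-founded induction on a measure built from the finite set of boxed subformulas \(B = \set{\nec\psi \in \sub(\Gamma \gg \Delta)}\). The parameter \(\Lambda\) collects the boxed hypotheses that \((\modal{\GL})\) provides through the diagonal formula. This is exactly why the statement allows an arbitrary extra context \(\nec\Lambda\).

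The main claim I would prove is the following. For every \(\n{\GLS}\)-proof \(\pi\) of \(\Gamma \gg \Delta\) and every finite \(\Lambda\), we have \(\g{\GLS} \vdash \nec\Lambda, \Gamma \gg \Delta\). The induction is lexicographic on the pair
\[
\bigl(|\set{\nec\psi \in B \mid \nec\psi \notin (\nec\Lambda)^s}|,\ \lhg(\pi)\bigr),
\]
where \(B\) is computed from a fixed root sequent. Since \(\sub\) only shrinks along a proof, I take \(B\) to be the subformulas of the original end-sequent throughout.

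For a non-modal last rule of the main local fragment (\(\ax\), \(\botL\), \(\botR\), \(\toL\), \(\toR\), \(\S\)), I apply the same rule in \(\g{\GLS}\). The extra context \(\nec\Lambda\) sits passively in the antecedent, and the premises have smaller local height with the same \(\Lambda\).

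The key case is when \(\pi\) ends with \((\modal{\KT})\) with premise \(\necd\Sigma \Rightarrow \phi\), conclusion \(\nec\Sigma, \Gamma' \gg \nec\phi, \Delta'\), and subproof \(\pi_0\). Two subcases arise.
\begin{itemize}
\item If \(\nec\phi \in (\nec\Lambda)^s\), the conclusion \(\nec\Lambda, \nec\Sigma, \Gamma' \gg \nec\phi, \Delta'\) is an instance of \((\Ax)\) by Lemma~\ref{lm:basic-prop-sequent}, so it is provable in \(\g{\GLS}\) outright.
\item Otherwise, I apply the induction hypothesis to \(\pi_0\) with the enlarged context \(\Lambda' = \Lambda \union \Sigma \union \set{\phi}\). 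This gives \(\g{\GLS} \vdash \nec\Lambda', \necd\Sigma \Rightarrow \phi\). By contraction (\(\ctr\) is admissible) this becomes \(\necd(\Lambda\union\Sigma), \nec\phi \Rightarrow \phi\). Then one application of \((\modal{\GL})\), with boxed context \(\nec(\Lambda\union\Sigma)\) and with \(\Gamma',\Delta'\) as the weakening part, yields \(\nec\Lambda,\nec\Sigma,\Gamma' \gg \nec\phi,\Delta'\).
\end{itemize}

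The induction hypothesis is legitimate in the second subcase because \(\nec\phi \in B\) was not in \((\nec\Lambda)^s\) but is in \((\nec\Lambda')^s\), so the first component strictly decreases. The local height may jump arbitrarily there, but that does not matter under the lexicographic order. One subtlety is that \(\Lambda\) may contain formulas outside \(B\). The measure only counts elements of \(B\) not yet covered, so this is harmless.

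The main obstacle is making the induction genuinely well-founded over non-wellfounded \(\pi\). The progress condition is not needed explicitly: each local fragment is finite, since every infinite branch must hit a \((\modal{\KT})\) premise. So induction on \(\lhg\) is sound for a fixed first component, and every crossing of a progressing node strictly shrinks the finite first component. I expect the care to lie in verifying that \(\lhg(\pi)\) is finite for proofs (König's lemma plus the progress condition) and that the diagonal formula \(\nec\phi\) stays in \(B\) by the local subformula property.
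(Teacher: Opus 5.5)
Your proof is correct and is essentially the paper's argument: a lexicographic induction on (number of boxed subformulas not yet covered by \(\nec\Lambda\), local height), where the modal case closes by \((\Ax)\) when \(\phi\in\Lambda\), and otherwise by the induction hypothesis with \(\phi\) added to \(\Lambda\) followed by one \((\modal{\GL})\) step. The paper measures \(|\sub(\Gamma\gg\Delta)\setminus\Lambda|\) on the current sequent rather than a fixed set \(B\), and adds only \(\phi\) to \(\Lambda\) rather than \(\Sigma\cup\set{\phi}\); also, one small slip in your write-up is that reaching \(\necd(\Lambda\cup\Sigma),\nec\phi\Rightarrow\phi\) requires weakening to add the unboxed copies of \(\Lambda\), not just contraction, which is exactly why the paper applies \(\wk(\tau)\) before \((\modal{\GL})\).
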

\begin{proof}
  Let \(\pi \vdash \Gamma \gg \Delta\) in \(\n{\GLS}\),
  we proceed by induction on the measure \(\omega\cdot | \sub(\Gamma \gg \Delta) \setminus \Lambda| + \lhg(\pi)\) and cases on the last rule of \(\pi\).
  If the last rule of \(\pi\) is not \(\modal{\KT}\) then it suffices to apply the induction hypothesis and reapply the rule.
  Now assume that the last rule of \(\pi\) is \(\modal{\KT}\), i.e., \(\pi\) is
  \[
    \AxiomC{\(\pi_0\)}
    \noLine
    \UnaryInfC{\(\necd \Sigma \Rightarrow \phi\)}
    \RightLabel{\(\modal{\KT}\)}
    \UnaryInfC{\(\nec \Sigma, \Gamma' \gg \nec\phi, \Delta'\)}
    \DisplayProof
  \]
  where \(\Gamma = \nec \Sigma, \Gamma'\) and \(\Delta = \nec \phi, \Delta'\).
  Let us denote the conclusion of \(\pi_0\) as \(S_0\) and the conclusion of \(\pi\) as~\(S\).
  We proceed by cases.  If \(\phi \in \Lambda\) the desired proof is obtained by using \((\Ax)\), as \(\nec \phi\) will be in both sides of the sequent.
  Now assume that \(\phi \not \in \Lambda\), then
      \(
        |\sub(S_0) \setminus (\Lambda \union \set{\phi})| < |\sub(S_0) \setminus \Lambda| \leq |\sub(S) \setminus \Lambda|,
      \)
      so by the induction hypothesis we have a proof \(\tau \vdash \nec \phi, \nec \Lambda, \necd \Sigma \Rightarrow \phi\).
      The desired proof is then
      \[
        \AxiomC{\(\wk(\tau)\)}
        \noLine
        \UnaryInfC{\(\nec \phi, \necd \Lambda, \necd \Sigma \Rightarrow \phi\)}
        \RightLabel{\(\modal{\GL}\)}
        \UnaryInfC{\(\nec \Lambda, \nec \Sigma, \Gamma' \Rightarrow \nec\phi, \Delta'\)}
        \DisplayProof
      \]
\end{proof}

\subsection{Cut elimination}

Finally, we show local admissibility of \(\cut\) in \(\n{\GLS}\) which implies its eliminability.
Thanks to the translations of the previous subsection we also obtain cut elimination for \(\g{\GLS}\).

\begin{theorem}\label{th:cut-elimination-gls}
  \(\cut\) is locally admissible in \(\n{\GLS}\).
  As a corollary, \(\cut\) is eliminable in \(\n{\GLS}\).
\end{theorem}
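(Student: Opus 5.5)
Our plan is to prove local admissibility directly and then obtain eliminability from Theorem~\ref{th:local-adm-and-eliminability}. Concretely, we fix a cut instance with premises \(\Gamma \gg \Delta, \chi\) and \(\chi, \Gamma \gg \Delta\), together with locally \(\cut\)-free proofs \(\pi_0, \pi_1\) of them in \(\n{\GLS} + \cut\). From these we build a locally \(\cut\)-free proof of \(\Gamma \gg \Delta\). The construction is by induction on the lexicographic measure \((|\chi|, \lhg(\pi_0) + \lhg(\pi_1))\), with a case analysis on the last rules of \(\pi_0\) and \(\pi_1\), in the usual Gentzen style.

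\textbf{Non-principal and axiom cases.} If \(\chi\) is not principal in the last rule of \(\pi_0\) (or of \(\pi_1\)), we permute the cut upwards.
\begin{itemize}
\item For \((\botR)\), \((\toL)\), \((\toR)\) and \((\S)\), we apply the induction hypothesis on the smaller local height to the premises, using weakening or the invertibility lemma on the other proof. These preserve local height and local rules, so they keep local \(\cut\)-freeness. We then reapply the rule.
\item If the last rule is an axiom or \((\botL)\) with \(\chi\) in the weakening part, the conclusion is directly an axiom instance.
\item If \(\chi\) is principal in \((\ax)\) on one side, the other proof already proves the conclusion up to contraction, and contraction is admissible preserving local rules.
\item If the last rule of one proof is \((\modal{\KT})\) with \(\chi\) in its weakening part, we reapply \((\modal{\KT})\) to the same premise with the new weakening part. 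Its premise now sits above a progressing node, so it lies outside the main local fragment and may freely contain cuts.
\end{itemize}

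\textbf{Principal cases.} For \(\to\), we cut on \(\phi\) and \(\psi\) and use the induction hypothesis on smaller \(|\chi|\). Inner applications that are not locally cut-free are avoided because the IH already yields locally cut-free proofs. For \(\bot\), we use invertibility of \((\botR)\).

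The interesting cases have \(\chi = \nec\phi\).
\begin{itemize}
\item \textbf{\((\modal{\KT})\) against \((\modal{\KT})\).} Here \(\chi\) is principal on the right in \(\pi_0\) with premise \(\necd\Sigma \Rightarrow \phi\), and auxiliary on the left in \(\pi_1\) with premise \(\necd\Sigma', \necd\phi \Rightarrow \psi\), where \(\nec\Sigma \subseteq \Gamma\). We produce a \((\modal{\KT})\) step with conclusion \(\Gamma \gg \Delta\) and premise \(\necd\Sigma,\necd\Sigma' \Rightarrow \psi\). This premise is derived in \(\n{\GLS}+\cut\) using two cuts above the progressing node. First, we cut \(\nec\phi\) using a weakened \(\pi_0\)'s conclusion-like derivation: we reapply \((\modal{\KT})\) to \(\pi_0\)'s premise to get \(\necd\Sigma,\necd\Sigma' \Rightarrow \nec\phi, \psi\). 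Second, we cut \(\phi\) using \(\pi_0\)'s premise weakened. Since these cuts lie above a progressing node, they are allowed: local \(\cut\)-freeness only concerns the main local fragment.
\item \textbf{\((\modal{\KT})\) against \((\S)\).} Here \(\pi_1\) ends in \((\S)\) with premise \(\phi, \nec\phi, \Gamma' \Rrightarrow \Delta\) and the sequent arrow is \(\Rrightarrow\). We first cut \(\nec\phi\) at smaller local height via the IH, using \(\pi_0\) weakened by \(\phi\). This gives a locally cut-free proof of \(\phi, \Gamma \Rrightarrow \Delta\).

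It then remains to cut \(\phi\), using a proof of \(\Gamma \Rrightarrow \Delta, \phi\). That proof comes from \(\pi_0\)'s premise \(\necd\Sigma \Rightarrow \phi\), which is in general not locally cut-free since it lies above progress. This is the main obstacle. We intend to overcome it by quantifying the induction hypothesis over all proofs of smaller cut formula, not only locally cut-free ones. To that end, we first prove the auxiliary claim that cuts on formulas of complexity below \(|\chi|\) are locally admissible. Using Theorem~\ref{th:local-adm-and-eliminability} applied to the rule "cut restricted to formulas of complexity \(<|\chi|\)", this lets us transform the arbitrary proof of \(\necd\Sigma \Rightarrow \phi\) into one whose main fragment has no cut on formulas of complexity \(\ge |\chi|\), or at least into a locally cut-free one.
\end{itemize}

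Thus the outer induction is really on \(|\chi|\) for the rule \(\cut_{<n}\), proving that \(\cut_{\le n}\) is locally admissible in \(\n{\GLS}+\cut_{<n}\) and hence eliminable there. We then iterate over \(n\): eliminability of \(\cut_{\le n}\) in \(\n{\GLS}+\cut_{<n}\) combined with eliminability of \(\cut_{<n}\) gives eliminability of \(\cut_{\le n}\). Finally, every proof in \(\n{\GLS} + \cut\) is converted to a \(\cut\)-free one, and via Theorem~\ref{th:local-adm-and-eliminability} in the other direction, local admissibility of the full \(\cut\) follows.
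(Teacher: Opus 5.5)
Your plan matches the paper's proof: the same induction on \(\omega\cdot|\chi|+\lhg(\pi_0)+\lhg(\pi_1)\), the same standard reductions, and the same constructions in both modal cases. In the \(\modal{\KT}\)/\((\S)\) case, let \(\rho\vdash\necd\Sigma\Rightarrow\phi\) be the subproof above the \(\modal{\KT}\) step. The paper handles the cut on \(\phi\), whose left premise is \(\wk(\mathrm{Prom}(\rho))\), simply by citing the induction hypothesis for the smaller cut formula. You are right that this is not literally covered, since \(\rho\) lies above a progressing node and need not be locally cut-free.

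Your repair, however, does not work. Eliminability of cuts on formulas of complexity \(<|\chi|\) only applies to proofs all of whose cuts are on such formulas. But \(\rho\) is an arbitrary proof in \(\n{\GLS}+\cut\), and its main fragment may contain cuts of any complexity, in particular \(\geq|\chi|\). Removing small cuts never removes these, so you never obtain the locally cut-free proof of \(\necd\Sigma\Rightarrow\phi\) that you need. The stratified formulation fails in the same way: inside \(\n{\GLS}+\cut_{\le n}\), \(\rho\) may still have cuts of complexity exactly \(n\) in its main fragment, and they end up in the main fragment of your output. Moreover, eliminability of every \(\cut_{\le n}\) would not yield eliminability of \(\cut\). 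A non-wellfounded proof may use cut formulas of unbounded complexity along an infinite branch, so it belongs to no \(\n{\GLS}+\cut_{\le n}\).

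The missing idea is simpler. \(\rho\) proves a \(\Rightarrow\)-sequent, and every rule with a \(\Rightarrow\)-conclusion (including \(\cut\) and \(\modal{\KT}\)) has only \(\Rightarrow\)-premises, so proofs of \(\Rightarrow\)-sequents never use \((\S)\). Hence for cuts on \(\Rightarrow\)-sequents the \(\modal{\KT}\)/\((\S)\) case cannot occur, and your induction goes through unchanged: each new cut either has smaller measure and locally cut-free premises, or sits above a progressing node. This gives local admissibility of \(\cut\) on the \(\Rightarrow\)-fragment (the \(\GL\) part of the calculus), and hence, by Theorem~\ref{th:local-adm-and-eliminability}, eliminability there.

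Then, in the \(\modal{\KT}\)/\((\S)\) case, first replace \(\rho\) by a cut-free proof of \(\necd\Sigma\Rightarrow\phi\). \(\mathrm{Prom}\) and \(\wk\) keep it cut-free, so the induction hypothesis applies to the cut on \(\phi\). As in the paper, this cut is performed in the context \(\necd\Sigma,\ldots\) and followed by \((\S)\) on the formulas of \(\nec\Sigma\); your version glosses over this step. No stratification by cut rank is needed.
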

\begin{proof}
  Let \(\pi \vdash \Gamma \gg \Delta,\chi\) and \(\tau \vdash \chi, \Gamma \gg \Delta\) be locally cut free proofs in \(\n{\GLS} + \cut\).
  We proceed by induction on the measure \(\omega \cdot |\chi| + (\lhg(\pi) + \lhg(\tau))\).
  Most cases follow the usual cut reductions in cut elimination (for the details see Appendix~\ref{sec:cut-reductions}), so we will consider only two special cases.
  Assume that \(\pi\) and~\(\tau\) are of the following shape
  \[
    \AxiomC{\(\pi_0\)}
    \noLine
    \UnaryInfC{\(\necd \Sigma \Rightarrow \chi\)}
    \RightLabel{\(\modal{\KT}\)}
    \UnaryInfC{\(\nec \Sigma, \Gamma' \gg \nec\chi, \nec \phi, \Delta'\)}
    \DisplayProof
    \qquad
    \AxiomC{\(\tau_0\)}
    \noLine
    \UnaryInfC{\( \necd \chi, \necd \Sigma \Rightarrow \phi\)}
    \RightLabel{\(\modal{\KT}\)}
    \UnaryInfC{\(\nec\chi, \nec \Sigma, \Gamma' \gg \nec \phi, \Delta'\)}
    \DisplayProof
  \]
  where \(\Gamma = \nec \Sigma, \Gamma'\), \(\Delta = \nec \phi, \Delta'\).
  We can assume that the auxiliary \(\nec\)-formulas on the left hand side are the same, since otherwise we can use \(\wk\) on \(\pi_0\) and \(\tau_0\) preserving height and local cut freeness.
  Then, the desired proof is
  \[
    \AxiomC{\(\wk(\pi_0)\)}
    \noLine
    \UnaryInfC{\(\necd \Sigma \Rightarrow \phi, \chi\)}
    \AxiomC{\(\pi_0\)}
    \noLine
    \UnaryInfC{\(\necd \Sigma \Rightarrow \chi\)}
    \RightLabel{\(\modal{\KT}\)}
    \UnaryInfC{\(\chi, \necd \Sigma \Rightarrow \phi, \nec \chi\)}
    \AxiomC{\(\tau_0\)}
    \noLine
    \UnaryInfC{\(\necd \chi, \necd \Sigma \Rightarrow \phi\)}
    \RightLabel{\(\cut\)}
    \BinaryInfC{\(\chi, \necd \Sigma \Rightarrow \phi\)}
    \RightLabel{\(\cut\)}
    \BinaryInfC{\(\necd \Sigma \Rightarrow \phi\)}
    \RightLabel{\(\modal{\KT}\)}
    \UnaryInfC{\(\nec \Sigma, \Gamma' \Rightarrow \nec\phi, \Delta'\)}
    \DisplayProof
  \]
  Notice that the proof is trivially locally cut free.

  The other case is as follows.
  Assume \(\pi\) and \(\tau\) are of the following shape.
  \[
    \AxiomC{\(\pi_0\)}
    \noLine
    \UnaryInfC{\(\necd \Sigma \Rightarrow \chi\)}
    \RightLabel{\(\modal{\KT}\)}
    \UnaryInfC{\(\nec \Sigma, \Gamma' \Rrightarrow \nec\chi,\Delta\)}
    \DisplayProof
    \qquad
    \AxiomC{\(\tau_0\)}
    \noLine
    \UnaryInfC{\(\necd\chi, \nec \Sigma, \Gamma' \Rrightarrow \Delta\)}
    \RightLabel{\(\S\)}
    \UnaryInfC{\(\nec\chi, \nec \Sigma, \Gamma' \Rrightarrow \Delta\)}
    \DisplayProof
  \]
  where \(\Gamma = \nec \Sigma, \Gamma'\).
  Then the desired proof is
  \[
    \AxiomC{\(\wk(\mathrm{Prom}(\pi_0))\)}
    \noLine
    \UnaryInfC{\(\necd \Sigma,\Gamma' \Rrightarrow \chi, \Delta\)}
    \AxiomC{\(\pi_0\)}
    \noLine
    \UnaryInfC{\(\necd \Sigma \Rightarrow \chi\)}
    \RightLabel{\(\modal{\KT}\)}
    \UnaryInfC{\(\chi, \necd \Sigma, \Gamma' \Rrightarrow \nec\chi \Delta\)}
    \AxiomC{\(\wk(\tau_0)\)}
    \noLine
    \UnaryInfC{\(\necd\chi, \necd \Sigma, \Gamma' \Rrightarrow \Delta\)}
    \RightLabel{\(\cut\) (I.H.)}
    \BinaryInfC{\(\chi, \necd \Sigma, \Gamma' \Rrightarrow \Delta\)}
    \RightLabel{\(\cut\) (I.H.)}
    \BinaryInfC{\(\necd \Sigma, \Gamma' \Rrightarrow \Delta\)}
    \doubleLine
    \RightLabel{\(\S\)}
    \UnaryInfC{\(\nec \Sigma, \Gamma' \Rrightarrow \Delta\)}
    \DisplayProof
  \]
  where the double line denotes muliple aplpications of \((\S)\).
  The first cut is justified by a smaller sum of local heights and the second cut by a smaller cut formula.
\end{proof}

Thanks to the translations of the previous subsection, we obtain the following.

\begin{corollary}
  \(\cut\) is eliminable in \(\g{\GLS}\).\footnote{
In fact, this cut elimination procedure can be made effective (see  the footnote on page~\pageref{def:prop-of-rules}) since a finite approximation of the non-wellfounded proof will suffice to compute each of the steps. }
\end{corollary}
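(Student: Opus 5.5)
The corollary follows by composing the two translations of the previous subsection with Theorem~\ref{th:cut-elimination-gls}. Assume \(\g{\GLS} + \cut \vdash \Gamma \gg \Delta\). We want a proof of the same sequent in \(\g{\GLS}\).

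\textbf{Step 1: move to the non-wellfounded calculus.} Lemma~\ref{lm:from-wellfounded-to-nonwellfounded} is stated only for sequents \(\Gamma \Rightarrow \Delta\). Its map \(\alpha\), however, is defined uniformly on all proofs and commutes with every rule except \((\modal{\GL})\). The replacement for \((\modal{\GL})\) works for both \(\gg\), since the premise of \((\modal{\GL})\) is always a \(\Rightarrow\)-sequent and \(\lob\) produces another \(\Rightarrow\)-sequent. The rule \((\S)\) is common to both calculi and is simply commuted. So the same corecursive definition gives \(\n{\GLS} + \cut \vdash \Gamma \gg \Delta\) for either kind of sequent. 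This uses the admissibility of \(\lob\) in \(\g{\GLS}+\cut\), which is already available.

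\textbf{Step 2: eliminate cut non-wellfoundedly.} By Theorem~\ref{th:cut-elimination-gls}, \(\cut\) is eliminable in \(\n{\GLS}\). Hence \(\n{\GLS} \vdash \Gamma \gg \Delta\).

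\textbf{Step 3: return to the wellfounded calculus.} Apply Lemma~\ref{lm:from-nonwellfounded-to-wellfounded} with \(\Lambda = \varnothing\). This gives \(\g{\GLS} \vdash \Gamma \gg \Delta\), which is a cut-free wellfounded proof, so \(\cut\) is eliminable in \(\g{\GLS}\).

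\textbf{Main obstacle.} The only delicate point is Step 1. We must check that \(\alpha\) produces a genuine proof and not merely a preproof. Every corecursive call passes through \(\lob\), which lands in a sequent that is the premise of \((\modal{\KT})\). Between two consecutive \((\modal{\KT})\) steps there is only a finite portion of the image, namely the image of a finite wellfounded proof. Therefore every infinite branch meets infinitely many \((\modal{\KT})\) premises, and these are progressing nodes.

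This gives eliminability, not an effective procedure. Effectiveness would need a separate argument via finite approximations, but the corollary does not ask for it.
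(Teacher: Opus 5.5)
Your proposal is correct and follows the paper's own argument. You pass to \(\n{\GLS}+\cut\) via Lemma~\ref{lm:from-wellfounded-to-nonwellfounded}, eliminate cut there by Theorem~\ref{th:cut-elimination-gls}, and return to \(\g{\GLS}\) via Lemma~\ref{lm:from-nonwellfounded-to-wellfounded} with \(\Lambda=\varnothing\). Your check that the translation \(\alpha\) works unchanged for \(\Rrightarrow\)-sequents is a useful addition, since the paper states that lemma only for \(\Rightarrow\) and leaves this implicit.
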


\begin{corollary}
  Let \(\Gamma, \Delta\) be multisets.
  We have the following:
  \begin{enumerate}
    \item  \(\GL \vdash \bigwedge \Gamma \to \bigvee \Delta \quad\text{iff}\quad \g{\GLS} \;(+\cut) \vdash \Gamma \Rightarrow \Delta \quad\text{iff}\quad \n{\GLS} \;(+\cut)\vdash \Gamma \Rightarrow \Delta\).
    \item  \(\GLS \vdash \bigwedge \Gamma \to \bigvee \Delta \quad\text{iff}\quad \g{\GLS} \;(+\cut)\vdash \Gamma \Rrightarrow \Delta \quad\text{iff}\quad \n{\GLS} \;(+\cut)\vdash \Gamma \Rrightarrow \Delta\).
  \end{enumerate}
\end{corollary}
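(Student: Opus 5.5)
We obtain the corollary by chaining the earlier results into a cycle of implications, treating the two items in parallel: \(\gg\) is \(\Rightarrow\) for item 1 and \(\Rrightarrow\) for item 2.

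\textbf{Step 1 (axiomatic side).} The first equivalence in each item, between \(\GL\) (resp.\ \(\GLS\)) and \(\g{\GLS} + \cut\), is exactly the theorem relating \(\g{\GLS} + \cut\) to the Hilbert systems. Our first target is therefore the cycle of implications
\[
\g{\GLS} + \cut \vdash \Gamma \gg \Delta \;\Longrightarrow\; \n{\GLS}+\cut \vdash \Gamma\gg\Delta \;\Longrightarrow\; \n{\GLS} \vdash \Gamma\gg\Delta \;\Longrightarrow\; \g{\GLS}\vdash\Gamma\gg\Delta \;\Longrightarrow\; \g{\GLS}+\cut\vdash\Gamma\gg\Delta.
\]

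\textbf{Step 2 (the cycle).} The second implication is Theorem~\ref{th:cut-elimination-gls}, eliminability of \(\cut\) in \(\n{\GLS}\). The third is Lemma~\ref{lm:from-nonwellfounded-to-wellfounded} with \(\Lambda = \varnothing\); that lemma is stated for arbitrary \(\gg\), so it covers both items. The last implication is trivial, since adding a rule only enlarges the set of provable sequents. The remaining equivalences, such as \(\g{\GLS} \vdash\) versus \(\g{\GLS}+\cut\vdash\), can also be read off the cycle, as can the previous corollary.

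\textbf{Step 3 (main obstacle).} The one gap is the first implication for \(\Rrightarrow\)-sequents. Lemma~\ref{lm:from-wellfounded-to-nonwellfounded} is stated only for \(\Gamma \Rightarrow \Delta\). I would check that its corecursive translation \(\alpha\) works unchanged on \(\Rrightarrow\)-sequents, and this looks straightforward:
\begin{itemize}
\item \(\alpha\) commutes with every rule other than \((\modal{\GL})\), and in particular with \((\S)\);
\item the \((\modal{\GL})\) clause already allows an arbitrary \(\gg\) in its conclusion, while its premise is always an \(\Rightarrow\)-sequent, to which L\"ob's rule applies as stated;
\item progress is guaranteed by the same argument as before, because every corecursive call passes through the premise of a \((\modal{\KT})\) step.
\end{itemize}
So the lemma holds for both sequent types, and with that the cycle closes and the corollary follows.
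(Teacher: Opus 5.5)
Your proposal is correct and is essentially the paper's own argument, which the paper leaves unstated: the corollary follows by chaining the theorem relating \(\g{\GLS}+\cut\) to the Hilbert systems with Lemma~\ref{lm:from-wellfounded-to-nonwellfounded}, Theorem~\ref{th:cut-elimination-gls} and Lemma~\ref{lm:from-nonwellfounded-to-wellfounded} (with \(\Lambda=\varnothing\)), exactly as in your cycle, and your check that the translation \(\alpha\) also handles \(\Rrightarrow\)-sequents fills in a point the paper leaves implicit. One small correction to that check: not every corecursive call of \(\alpha\) passes through a \((\modal{\KT})\) step, since the commuting cases do not; progress holds because those calls descend into strictly smaller subproofs of a finite proof, and the input can only grow, via \(\lob\), across a \((\modal{\KT})\) step.
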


\section{Uniform Lyndon Interpolation for \(\GLS\)}\label{sec:ulip}

Having established the connection between the sequent and Hilbert calculi for \(\GLS\), we are ready to show uniform Lyndon interpolation for \(\GLS\).
In order to respect the polarity of variables, it is necessary to avoid introducing the diagonal formula in the modal rule. 
Hence the use of \(\n{\GLS}\) instead of \(\g{\GLS}\) is essential.

To properly talk about uniform Lyndon interpolation, first we need to define the positive and negative vocabulary of a formula. From now on, with \emph{vocabulary} we simply mean a set of variables.

\begin{definition}
  Let \(\phi\) be a formula.
  We define its \emph{positive} and \emph{negative vocabulary}, denoted \(\voc_+(\phi)\) and \(\voc_-(\phi)\) respectively, recursively by
  \begin{align*}
    &\voc_+(p) = \set{p},
    &&\voc_-(p) = \varnothing, \\
    &\voc_+(\bot) = \varnothing,
    &&\voc_-(\bot) = \varnothing, \\
    &\voc_+(\phi_0 \to \phi_1) = \voc_-(\phi_0) \union \voc_+(\phi_1),
    &&\voc_-(\phi_0 \to \phi_1) = \voc_+(\phi_0) \union \voc_-(\phi_1), \\
    &\voc_+(\nec\phi_0) = \voc_+(\phi_0),
    &&\voc_-(\nec\phi_0) = \voc_-(\phi_0).
  \end{align*}
  Given a multiset \(\Gamma\) we will write \(\voc_+(\Gamma)\) to mean the set \(\Union_{\phi \in \Gamma} \voc_+(\phi)\), and given a sequent \(\Gamma \gg \Delta\) we will write \(\voc_+(\Gamma \gg \Delta)\) to mean \(\voc_+(\Gamma) \union \voc_+(\Delta)\).
  We will use the same notation for \(\voc_-\).
\end{definition}

We will write \(\overline{+}\) to mean \(-\) and \(\overline{-}\) to mean \(+\).
The following lemma can be proved by induction on the complexity of \(\phi\) (although we omit the proof as it is highly technical and not particularly illuminating).

\begin{lemma}
  \label{lm:substitution-and-polarity}
  Let \(p\) be a propositional variable and \(\phi(p_0,\ldots,p_{n-1}, q_0,\ldots,q_{m-1})\), \(\psi_0,\ldots,\psi_{n-1}\),  and \(\chi_0,\ldots,\chi_{m-1}\) be formulas. 
%  We have that 
If\/ \(\voc_-(\phi) \cap \set{p_0,\ldots,p_{n-1}} =\varnothing\) and\/  \(\voc_+(\phi) \cap \set{q_0,\ldots,q_{m-1}} = \varnothing\), then for \(b \in \set{+,-}\)
  \begin{multline*}
    \voc_b(\phi(\psi_0,\ldots,\psi_{n-1}, \chi_0,\ldots,\chi_{m-1})) \subseteq\\
    \voc_b(\phi) \setminus\set{p_0,\ldots,p_{n-1},q_0,\ldots,q_{m-1}}
    \union \Union_{i < n} \voc_b(\psi_i) \union \Union_{j < m} \voc_{\overline{b}}(\chi_j).
  \end{multline*}
\end{lemma}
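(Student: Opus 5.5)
The plan is a structural induction on \(\phi\) that proves the displayed inclusion simultaneously for \(b = +\) and \(b = -\). Write \(\vec p = p_0,\ldots,p_{n-1}\), \(\vec q = q_0,\ldots,q_{m-1}\), and \(\phi^* = \phi(\vec\psi,\vec\chi)\) for the result of simultaneously substituting \(\psi_i\) for \(p_i\) and \(\chi_j\) for \(q_j\). The hypothesis says that every \(p_i\) occurs in \(\phi\) only positively and every \(q_j\) only negatively.

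The atomic cases are direct. If \(\phi = \bot\), both sides are trivial. If \(\phi\) is a variable outside \(\vec p,\vec q\), then \(\phi^* = \phi\) and its vocabulary is contained in the first term on the right. If \(\phi = p_i\), then \(\phi^* = \psi_i\), so \(\voc_b(\phi^*) = \voc_b(\psi_i)\), which lies in the second term for either \(b\). If \(\phi = q_j\), the hypothesis \(\voc_+(q_j)\cap\set{\vec q} = \varnothing\) fails, since \(\voc_+(q_j) = \set{q_j}\). So this case cannot occur, and there is nothing to prove.

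The inductive steps come next. For \(\phi = \nec\phi_0\), both the vocabularies and the hypothesis transfer verbatim, because \(\voc_\pm(\nec\phi_0) = \voc_\pm(\phi_0)\). For \(\phi = \phi_0 \to \phi_1\), the consequent \(\phi_1\) inherits the hypothesis directly. The antecedent \(\phi_0\) satisfies the \emph{dual} hypothesis: \(\voc_+(\phi_0) \subseteq \voc_-(\phi)\) is disjoint from \(\vec p\), and \(\voc_-(\phi_0) \subseteq \voc_+(\phi)\) is disjoint from \(\vec q\). I will therefore apply the induction hypothesis to \(\phi_0\) with the roles swapped: \(\vec q\) plays the positive variables, substituted by \(\vec\chi\), and \(\vec p\) the negative ones, substituted by \(\vec\psi\). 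For this to be legitimate, the statement is proved for all such splittings simultaneously, which is already the case since it is quantified over arbitrary lists.

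It remains to compute the implication case. Take \(b\) and \(\voc_b(\phi^*) = \voc_{\overline b}(\phi_0^*) \cup \voc_b(\phi_1^*)\). The swapped instance for \(\phi_0\) at polarity \(\overline b\) gives \(\voc_{\overline b}(\phi_0)\setminus\set{\vec p,\vec q} \cup \Union_j \voc_{\overline b}(\chi_j) \cup \Union_i \voc_{b}(\psi_i)\), using \(\overline{\overline b} = b\). Since \(\voc_{\overline b}(\phi_0)\subseteq \voc_b(\phi)\), this matches the target, and the \(\phi_1\) part is immediate.

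The main obstacle is only this bookkeeping of the polarity flip in the antecedent, which is what forces the swapped form of the induction hypothesis. The rest is routine.
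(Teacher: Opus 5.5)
Your proof is correct and is the induction on the complexity of \(\phi\) that the paper indicates; the paper itself omits the details. Proving both polarities at once, and applying the induction hypothesis to the antecedent \(\phi_0\) of \(\phi_0 \to \phi_1\) with the roles of \(\vec p,\vec\psi\) and \(\vec q,\vec\chi\) swapped, is exactly the bookkeeping needed. Your remark that the case \(\phi = q_j\) is excluded by the hypothesis correctly settles the base cases.
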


We are ready to recall the definition of uniform Lyndon interpolation. 

\begin{definition}
  A logic \(L\) has the \emph{Uniform Lyndon Interpolation Property} (or ULIP in short) if for any formula \(\phi\) and vocabularies \(V_+, V_-\), there is a formula \(\iota\) called \emph{uniform Lyndon interpolant} such that
  \begin{enumerate}
    \item \(\voc_+(\iota) \subseteq V_+\) and \(\voc_-(\iota) \subseteq V_-\),
    \item \(L \vdash \phi \to \iota\), and
    \item for any \(\psi\) such that \(\voc_+(\psi) \subseteq V_+\) and \(\voc_-(\psi) \subseteq V_-\) we have that \(L \vdash \phi \to \psi\) implies \(L \vdash \iota \to \psi\).
  \end{enumerate}
\end{definition}
The rest of the paper will be dedicated to show that \(\GLS\) has the ULIP.

\subsection{Lyndon equational systems}

The first step towards calculating uniform Lyndon interpolants will be to solve equational systems of formulas.
In addition, as the polarities of variables matters, we will need to keep track of them.
We will do so by introducing the notions of \emph{Lyndon fixpoint} and \emph{Lyndon equational system}.

\begin{definition}[Lyndon Fixpoints]
  Let \(\phi(p)\) and \(\psi\) be formulas.
  We say that \(\psi\) is a \emph{Lyndon fixpoint of~\(\phi\) (with respect to \(p\)) in \(\GLS\)} if \(\voc_{b}(\psi) \subseteq \voc_b(\phi) \setminus \set{p}\) for \(b \in \set{+,-}\) and \(\GLS \vdash \psi \leftrightarrow \phi(\psi)\).
\end{definition}

A formula \(\phi\) is modalized in a variable \(p\) if all the occurences of \(p\) in \(\phi\) are under the scope of a \(\nec\).

\begin{definition}[Lyndon equational systems]
  Let \(V_+,V_-\) be vocabularies and \(\bar{p} = (p_0,\ldots,p_{n-1})\) be a finite sequence of pairwise different variables not occuring in \(V_+ \union V_-\).
  A \emph{Lyndon equational system over \((\bar{p},V_+,V_-)\)} is a collection of triples \(\mathcal{E} = \set{(p_i,b_i,\phi_i) \mid i < n}\) such that \(b_i \in \set{+,-}\), \(\phi_i\) is a formula, \(\voc_{b_i}(\phi_i) \subseteq V_+ \union B_+\) and \(\voc_{\overline{b_i}}(\phi_i) \subseteq V_- \union B_-\), where \(B_+ = \set{p_i \mid b_i = +}\) and \(B_- = \set{p_i \mid b_i = -}\).
  The elements of \(\bar{p}\) are called \emph{unknowns} and those of \(\mathcal{E}\) are called \emph{equations}.

  A \emph{solution in \(L\)} to \(\mathcal{E}\) is a sequence  \((\psi_0,\ldots,\psi_{n-1})\) such that for each \(i \in n\),  we have
  \(\voc_{b_i}(\psi_i) \subseteq V_+\), \(\voc_{\overline{b_i}}(\psi_i) \subseteq V_-\) and \(L \vdash \psi_i \leftrightarrow \phi_i[\psi_0/p_0,\ldots,\psi_{n-1}/p_{n-1}]\).
  \(\mathcal{E}\) is said to be
  \begin{multicols}{2}
    \begin{enumerate}
      \item \emph{Solvable in \(L\)} if it has a solution in \(L\).
      \item \emph{Simple} if \(\phi_i\) is a \(\nec\)-formula for \(i < n\).
      \item \emph{Modalized} if \(\phi_{i}\) is modalized in \(p_0,\ldots,p_i\).
      \item \emph{Positive} if \(b_i = {+}\) for  \(i < n\).
    \end{enumerate}
  \end{multicols}
\end{definition}

Given an equational system \(\mathcal{E}\) over \((\bar{p}, V_+, V_-)\) with solution \((\psi_0,\ldots,\psi_{n-1})\) we will also call the substitution \((\cdot)^*\) where \(p_i^* = \psi_i\) and \(q^* = q\) for \(q\) not in \(\bar{p}\) a solution of \(\mathcal{E}\).

\begin{lemma}\label{lm:simple-equation-systems}
  We have that
  \begin{enumerate}
    \item For any formula \(\phi(p)\), \(\nec \phi(\top)\) is a Lyndon fixpoint of\/ \(\nec \phi(p)\) in \(\GLS\).
    \item Simple Lyndon equational systems have a solution in \(\GLS\).
  \end{enumerate}
\end{lemma}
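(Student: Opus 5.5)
The plan is to prove item~1 inside \(\GL\), which suffices because \(\GL \subseteq \GLS\). Item~2 then follows by induction on the number of unknowns, eliminating one unknown at a time with item~1, in the style of the Bernardi--de~Jongh--Sambin construction of explicit fixpoints.

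\emph{Item 1.} Put \(A := \nec\phi(\top)\) and \(B := \nec\phi(A)\).

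The vocabulary condition is immediate from Lemma~\ref{lm:substitution-and-polarity}, since \(\top\) contains no variables. This gives \(\voc_b(A) \subseteq \voc_b(\nec\phi) \setminus \set{p}\).

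For the equivalence, the basic tool is the standard substitution principle of \(\K\)-logics:
\[
\GL \vdash (A\leftrightarrow \top) \wedge \nec(A \leftrightarrow \top) \to (\chi(A)\leftrightarrow\chi(\top)) \quad\text{for every formula }\chi(p).
\]
This is proved by induction on \(\chi\). Applied to \(\chi = \phi\), it reduces to \(A \wedge \nec A \to (\phi(A)\leftrightarrow\phi(\top))\). Necessitating and distributing \(\nec\), and using \(\nec A \to \nec\nec A\), we obtain
\[
\GL \vdash \nec A \to (B \leftrightarrow A).
\]

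\emph{Direction \(A \to B\).} From \(A = \nec\phi(\top)\) we get \(\nec\nec\phi(\top)\), that is \(\nec A\), by axiom 4. Hence \(A \to B\) by the displayed equivalence.

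\emph{Direction \(B \to A\).} This is where L\"ob's axiom is needed, and I expect it to be the only genuinely delicate step. The displayed equivalence gives \(\GL \vdash B \to (\nec A \to A)\). Necessitating yields \(\nec B \to \nec(\nec A \to A)\). Axiom \((\L)\) then gives \(\nec B \to \nec A\). Since \(B\) is a \(\nec\)-formula, \(B \to \nec B\), so \(B \to \nec A\), and therefore \(B \to A\).

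\emph{Item 2.} We argue by induction on \(n\), the number of unknowns; the case \(n = 0\) is trivial. Given a simple system \(\mathcal{E}\), consider its last equation \((p_{n-1}, b_{n-1}, \phi_{n-1})\). Since \(\phi_{n-1}\) is a \(\nec\)-formula, item~1 gives the Lyndon fixpoint
\[
\theta := \phi_{n-1}[\top/p_{n-1}],
\]
which may still contain \(p_0,\ldots,p_{n-2}\). Replace each \(\phi_j\) with \(j<n-1\) by \(\phi_j' := \phi_j[\theta/p_{n-1}]\). Each \(\phi_j'\) is again a \(\nec\)-formula, so the new system is simple.

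The key check is that the new system \(\mathcal{E}'\) over \((p_0,\ldots,p_{n-2})\) is still Lyndon, and I will do this polarity bookkeeping with Lemma~\ref{lm:substitution-and-polarity}. By the definition of a Lyndon system, if \(b_{n-1}=+\) then \(p_{n-1}\) occurs in \(\phi_j\) only in \(\voc_{b_j}(\phi_j)\). Substituting \(\theta\) there moves \(\voc_+(\theta) = \voc_{b_{n-1}}(\theta) \subseteq V_+ \union B_+\) into \(\voc_{b_j}(\phi_j')\), and \(\voc_-(\theta)\) into \(\voc_{\overline{b_j}}(\phi_j')\), both as required. The case \(b_{n-1}=-\) is symmetric, using the \(\chi_j\) clause of the lemma. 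The same reasoning shows that in \(\phi_{n-1}\) itself \(p_{n-1}\) has a single polarity, so item~1 is applicable without conflict.

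By the induction hypothesis \(\mathcal{E}'\) has a solution \((\psi_0,\ldots,\psi_{n-2})\). Set \(\psi_{n-1} := \theta[\psi_0/p_0,\ldots,\psi_{n-2}/p_{n-2}]\). Its vocabulary is correct by Lemma~\ref{lm:substitution-and-polarity} once more. The required equivalences follow by closure of \(\GLS\) under substitution (\(\GLS\) is closed under \(\MP\) and contains all substitution instances of its axioms), applied to the fixpoint equivalence \(\theta \leftrightarrow \phi_{n-1}(\theta)\) and to the solution equivalences of \(\mathcal{E}'\).
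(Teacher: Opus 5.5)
Your proposal is correct and follows the paper's proof. For item~1 you derive the standard \(\GL\)-fixpoint \(\nec\phi(\top)\leftrightarrow\nec\phi(\nec\phi(\top))\) via L\"ob's axiom, where the paper simply cites Smory\'nski. For item~2 you use the same induction on the number of unknowns, substituting one equation's fixpoint into the others and back-substituting the solution of the reduced system. The differences are cosmetic: you eliminate \(p_{n-1}\) rather than \(p_0\), and you explicitly check that the reduced system is again Lyndon, which the paper leaves implicit. One small point: in item~1, when \(p\) occurs with both polarities, Lemma~\ref{lm:substitution-and-polarity} does not literally apply, but the vocabulary claim for substituting the variable-free \(\top\) follows at once by induction on \(\phi\).
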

\begin{proof}
  Proof 1. It is wellknown (e.g.~see \cite[p.~78]{smorynski}) that \(\GL \vdash \nec \phi(\top) \leftrightarrow \nec \phi(\nec \phi (\top))\), so it is also a theorem of \(\GLS\).
  By the properties of substitution, we get  \(\voc_{b}(\phi(\top)) \subseteq \voc_{b}(\phi) \setminus \set{p}\) for \(b \in \set{+,-}\).

  Proof of 2.
  Let \(\mathcal{E} = \set{(p_i,b_i,\nec \phi_i) \mid i < n}\) be a simple Lyndon equational system over \((\bar{p}, V_+, V_-)\).
  We proceed by induction on \(n\), the number of unknowns.
  Take \(\nec \phi_0(p_0,\ldots,p_{n-1})\), we know that it has a Lyndon fixpoint \(\psi_0\) in \(\GLS\).
  Note that \(\mathcal{E}' = \set{(p_i, b_i, \nec \phi_i[\psi_0/p_0]) \mid 1 \leq i < n}\) is a simple Lyndon equational system solvable in \(\GLS\) by the induction hypothesis, let \((\chi_1,\ldots,\chi_n)\) be a solution in \(\GLS\) of it.
  Let us define \(\chi_0 = \psi_0[\chi_1/p_1,\ldots,\chi_n/p_n]\), then we claim that \((\chi_0,\ldots,\chi_n)\) is a solution of \(\mathcal{E}\).

  First, note that we already have that \(\voc_{b_i}(\chi_i) \subseteq V_+\) and \(\voc_{\overline{b_i}}(\chi_i) \subseteq V_-\) for \(1 \leq i < n\).
  We notice that for \(1 \leq i < n\) we have that if \(b_i = b_0\) then \(p_i \not\in \voc_-(\psi_0)\) and if \(b_i \neq b_0\) then \(p_i \not \in \voc_+(\psi_0)\) (it suffices to do cases on \(b_i\) and \(b_0\)).
  Using Lemma~\ref{lm:substitution-and-polarity} we have that
  \begin{align*}
    \voc_{b_0}(\chi_0) &\subseteq \voc_{b_0}(\psi_0) \setminus \set{p_1,\ldots,p_n} \union \Union_{\scriptstyle \begin{matrix} 1 \leq i < n \\ b_i = b_0 \end{matrix}} \voc_{b_0}(\chi_i) \union \Union_{\scriptstyle \begin{matrix} 1 \leq i < n \\ b_i \neq b_0 \end{matrix}} \voc_{\overline{b_0}}(\chi_i) \\
    &\subseteq \voc_{b_0}(\nec \phi_0) \setminus \set{p_0,\ldots,p_n} \union \Union_{\scriptstyle \begin{matrix} 1 \leq i < n \\ b_i = b_0 \end{matrix}} \voc_{b_i}(\chi_i) \union \Union_{\scriptstyle \begin{matrix} 1 \leq i < n \\ b_i \neq b_0 \end{matrix}} \voc_{b_i}(\chi_i) \subseteq V_+.
  \end{align*}
  where we used that \(\voc_{b_0}(\nec \phi_0) \subseteq V_+ \union B_+\).
  We have an anologous reasoning for \(\voc_{\overline{b_0}}(\chi_0)\), so the desired polarity conditions hold.
  
  Finally, we show the desired equivalences.
  We have \(\GLS \vdash \chi_i \leftrightarrow (\nec \phi_i[\psi_0/p_0])[\chi_1/p_1, \ldots, \chi_{n}/p_n]\) for \(1 \leq i < n\).
  Using that \(p_0 \neq p_i\) for \(1 \leq i < n\) we obtain that 
  \begin{align*}
    (\nec\phi_i[\psi_0/p_0])[\chi_1/p_1, \ldots, \chi_{n}/p_n] 
    &= \nec\phi_i[\psi_0[\chi_1/p_1, \ldots, \chi_{n}/p_n]/p_0, \chi_1/p_1, \ldots, \chi_{n}/p_n]\\
    &= \nec\phi_i[\chi_0/p_0,\chi_1/p_1, \ldots, \chi_{n}/p_n],
  \end{align*}
  as desired.
  All left to show that \(\GLS \vdash \chi_0 \leftrightarrow \nec \phi_0[\chi_0/p_0, \ldots, \chi_{n}/p_n]\).
  We have that, as \(\psi_0\) is a fixpoint, \(\GLS \vdash \nec \psi_0 \leftrightarrow \nec \phi_0[\psi_0/p_0]\).
  As \(\GLS\) is closed under substitutions we obtain that  \(\GLS \vdash \nec \chi_0 \leftrightarrow (\nec \phi_0[\psi_0/p_0])[\chi_1/p_1,\ldots,\chi_n/p_n]\), and we can use the same equalities as before since \(p_0 \neq p_i\) for \(1 \leq i < n\).
\end{proof}

We say that \(\phi\) is positive in \(p\) if \(p \not \in \voc_-(\phi)\).

\begin{theorem}\label{th:positive-modalized-equation-systems}
  We have that
  \begin{enumerate}
    \item Every formula positive and modalized in \(p\) has a Lyndon fixpoint in \(\GLS\).
    \item Positive modalized Lyndon equational system are solvable in \(\GLS\).
  \end{enumerate}
\end{theorem}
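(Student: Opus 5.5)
The plan is to reduce part 1 to Lemma~\ref{lm:simple-equation-systems}.1 using the standard decomposition of modalized formulas. Write \(\phi(p) = \theta(\nec\alpha_0(p),\ldots,\nec\alpha_{k-1}(p))\), where \(\theta(q_0,\ldots,q_{k-1})\) is \(p\)-free and the \(q_j\) are fresh. Each \(\nec\alpha_j\) is an outermost boxed subformula containing \(p\); this is possible because \(\phi\) is modalized in \(p\). We have to be careful with polarity. An occurrence \(\nec\alpha_j\) may occur positively or negatively in \(\phi\), so for each \(j\) we record the sign \(c_j\) of that occurrence. If a subformula occurs with both signs, we split it into separate \(q\)'s, one per occurrence.

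Since \(\phi\) is positive in \(p\), every occurrence of \(p\) in \(\alpha_j\) has polarity \(c_j\) inside \(\alpha_j\). So \(\nec\alpha_j\) is positive in \(p\) when \(c_j=+\) and negative in \(p\) when \(c_j=-\). We then follow the classical de Jongh--Sambin/Bernardi style construction via simultaneous fixpoints. Consider the equational system in unknowns \(q_0,\ldots,q_{k-1}\) given by
\[
(q_j, c_j, \nec\alpha_j(\theta(q_0,\ldots,q_{k-1}))).
\]
It is simple, since each right-hand side is a \(\nec\)-formula. We check that it is a Lyndon equational system with \(V_+ = \voc_+(\phi)\setminus\set{p}\) and \(V_- = \voc_-(\phi)\setminus\set{p}\). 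The occurrence of \(q_i\) inside \(\theta\) has sign \(c_i\), and it sits where \(p\) sits inside \(\alpha_j\), which has sign \(c_j\). Hence relative to equation \(j\) the sign of \(q_i\) is \(c_i\) if \(c_j=+\) and \(\overline{c_i}\) if \(c_j=-\). In both cases \(q_i\) lies in \(B_+\) for \(\voc_{c_j}\) and in \(B_-\) for \(\voc_{\overline{c_j}}\), exactly as the definition requires; Lemma~\ref{lm:substitution-and-polarity} performs this bookkeeping.

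By Lemma~\ref{lm:simple-equation-systems}.2 the system has a solution \((\chi_0,\ldots,\chi_{k-1})\), and we put \(\psi = \theta(\bar\chi)\). Then \(\GLS \vdash \chi_j \leftrightarrow \nec\alpha_j(\psi)\), hence \(\GLS \vdash \phi(\psi) = \theta(\nec\alpha_0(\psi),\ldots) \leftrightarrow \theta(\bar\chi) = \psi\) by replacement of equivalents. The vocabulary conditions on \(\psi\) follow from Lemma~\ref{lm:substitution-and-polarity} applied to \(\theta\), which is positive in the \(q_j\) with \(c_j=+\) and negative in the others, together with the polarity conditions on the \(\chi_j\).

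I expect this polarity bookkeeping to be the main obstacle, in particular making sure that a \(q_i\) of mixed sign never arises. Splitting occurrences settles this. Note also that we cannot simply take \(\theta(\nec\alpha_j(\top))\), because the negative \(\nec\alpha_j\) would need \(\bot\) instead. The simultaneous system avoids this issue altogether.

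For part 2, we induct on \(n\), following the proof of Lemma~\ref{lm:simple-equation-systems}.2. Since the system is positive, every equation has \(b_i=+\). Modalized in \(p_0,\ldots,p_i\) means \(\phi_0\) is modalized in \(p_0\), and \(\phi_0\) is positive in \(p_0\) because \(\voc_-(\phi_0)\subseteq V_-\) and \(p_0\notin V_-\). By part 1, applied with the other unknowns treated as parameters, \(\phi_0\) has a Lyndon fixpoint \(\psi_0(p_1,\ldots,p_{n-1})\) with \(\voc_-(\psi_0)\subseteq V_-\) and \(\voc_+(\psi_0)\subseteq V_+\cup\set{p_1,\ldots}\).

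Substituting \(\psi_0\) for \(p_0\) in the remaining equations gives a positive Lyndon system. It is still modalized: \(\phi_i\) is modalized in \(p_1,\ldots,p_i\), and \(p_0\) occurs in \(\phi_i\) only under boxes, so the occurrences of \(p_j\) that come in through \(\psi_0\) are boxed. By induction we solve this reduced system, set \(\chi_0 = \psi_0[\bar\chi]\), and verify the solution and the polarities exactly as in Lemma~\ref{lm:simple-equation-systems}.2, with every sign being \(+\).
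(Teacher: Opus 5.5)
Your proposal is correct and follows the paper's proof essentially step for step. Part~1 uses the same decomposition of \(\phi\) into a \(p\)-free context applied to boxed subformulas split by polarity, solved as a simple Lyndon system via Lemma~\ref{lm:simple-equation-systems}, and part~2 uses the same induction as that lemma with part~1 supplying the fixpoint of the first equation; your insistence on \emph{outermost} boxed subformulas is exactly what justifies the final replacement-of-equivalents step in \(\GLS\), which is not closed under necessitation.
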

\begin{proof}
  Proof of 1.\footnote{This proof follows the proof in \cite{lindstrom} for \(\GL\), with the addition of the condition on the polarity of variables.}
  Let \(\phi(p)\) be a formula that is modalized in \(p\) and positive in \(p\).
  Then \(\phi(p)\) can be written as \( \phi'(\nec\psi_0(p),\ldots,\nec\psi_{n-1}(p),\nec\chi_0(p),\ldots,\nec\chi_{m-1}(p))\),
  where \(\phi'(q_0,\ldots,q_{n-1}, r_0,\ldots,r_{m-1})\) does not contain \(p\),
    %the \(\psi_i\)s are different, and so are the \(\chi_i\)s and
    and \(\set{q_0,\ldots,q_{n-1}} \subseteq \voc_+(\phi') \setminus \voc_-(\phi') \), \(\set{r_0,\ldots,r_{m-1}} \subseteq \voc_-(\phi') \setminus \voc_+(\phi')\).
    From this is easy to infer that  for \(i < n\), \(j < m\) and \(b \in \set{+,-}\):
    \[
      \voc_b(\psi_i) \subseteq \voc_b(\phi), \quad \voc_b(\chi_j) \subseteq \voc_{\overline{b}}(\phi), \quad
      p \not \in \voc_-(\psi_i)\quad \text{and}\quad p \not \in \voc_+(\chi_j).
    \]
  %  In addition, note the following facts following from these condtions. 
  %\begin{itemize}
  %  \item \(\voc_+(\phi') \subseteq \voc_+(\phi) \setminus\set{p} \union \set{q_0,\ldots,q_{n-1}}\) and 
  %  \(\voc_-(\phi') \subseteq \voc_-(\phi) \setminus\set{p} \union \set{r_0,\ldots,r_{m-1}}\).
  %  \item The last condition implies that \(\voc_+(\psi_i) \subseteq \voc_+(\phi)\), \(\voc_-(\psi_i) \subseteq \voc_-(\phi)\) for \(i < n\) and \(\voc_+(\chi_j) \subseteq \voc_-(\phi)\), \(\voc_-(\chi_j) \subseteq \voc_+(\phi)\) for \(j < m\).
  %  \item The last condition together with the fact that \(p \not \in \voc_-(\phi)\) give us that \(p \not \in \voc_-(\psi_i)\) for \(i < n\) and \(p \not \in \voc_+(\chi_j)\) for \(j < m\).
  %\end{itemize}
  Consider the following equational system
  \( \set{(q_i, +,\nec \psi_i(\phi')) \mid i < n} \union \set{(r_j,-, \nec\chi_j(\phi')) \mid j < m}.
  \)
  This is a simple Lyndon \((\bar{q}\bar{r}, \voc_+(\phi)\setminus \set{p}, \voc_-(\phi)\setminus \set{p})\)-equational system.
  %Let us check this.
  %\begin{align*}
  %  &\voc_+(\psi_i(\phi'))
  %  \subseteq \voc_+(\psi_i) \setminus \set{p} \union \voc_+(\phi')
  %  \subseteq \voc_+(\phi) \setminus \set{p} \union \set{q_0,\ldots,q_{n-1}}, \\
  %  &\voc_-(\psi_i(\phi'))
  %  \subseteq \voc_-(\psi_i) \setminus \set{p} \union \voc_-(\phi')
  %  \subseteq \voc_-(\phi) \setminus \set{p} \union \set{r_0,\ldots,r_{m-1}},\\
  %  &\voc_+(\chi_j(\phi'))
  %  \subseteq \voc_+(\chi_j) \setminus \set{p} \union \voc_-(\phi')
  %  \subseteq \voc_-(\phi) \setminus \set{p} \union \set{r_0,\ldots,r_{m-1}}, \\
  %  &\voc_-(\psi_i(\phi'))
  %  \subseteq \voc_-(\chi_j) \setminus \set{p} \union \voc_+(\phi')
  %  \subseteq \voc_+(\phi) \setminus \set{p} \union \set{q_0,\ldots,q_{n-1}}.
  %\end{align*}
  Thus, it has a solution \((\cdot)^* : \bar{q}\bar{r}  \longrightarrow \mathcal{L}_{\nec}\) in \(\GLS\).
  Define \(\eta = \phi'(q^*_0, \ldots, q^*_{n-1}, r^*_0, \ldots,r^*_{m-1})\), by Lemma~\ref{lm:substitution-and-polarity} we know that \(\voc_+(\eta) \subseteq \voc_+(\phi) \setminus \set{p}\) and \(\voc_-(\eta) \subseteq \voc_-(\phi)\).
  Also, since \((\cdot)^*\) is a solution of the equational system, we have that \(\GLS \vdash q^*_i \leftrightarrow \nec \psi_i(\eta)\) and \(\GLS \vdash r^*_j \leftrightarrow \nec \chi_j(\eta)\).  So we obtain
  \[\GLS \vdash \eta \leftrightarrow \phi'(\nec\psi_0(\eta), \ldots,\nec\psi_{n-1}(\eta), \nec\chi_0(\eta), \ldots,\nec\chi_{m-1}(\eta)). \]
  In other words, \(L \vdash \eta \leftrightarrow \phi(\eta)\), so \(\eta\) is a Lyndon fixpoint of \(\phi\) with respect to \(p\).

  Proof of 2. The proof is similar to the second point of Lemma~\ref{lm:simple-equation-systems} using that if \(\phi_i\) is modalised in \(p_0,\ldots,p_i\) then \(\phi_i[\psi/p_0]\) is also modalised in \(p_0,\ldots,p_i\).
\end{proof}

The fundamental result will be the second part of Theorem~\ref{th:positive-modalized-equation-systems}.
When constructing an interpolant we will see that a positive modalized Lyndon equational system will arise.
Solving it and using the solution on a particular formula will give us the uniform Lyndon interpolant.

\subsection{Interpolation templates}

It is common to relate uniform interpolation to proof search.
We will make this connection explicit via defining interpolation templates, which is a proof search where we assume that we only have a part of the desired sequent.
Since there is a rule in the system which increase the complexity of sequents (namely rule~\((\S)\)), we will need to base our proof search on a notion of saturation.

\begin{definition}
  Let \(\Gamma \gg \Delta\) be a sequent.
  We say that a formula \(\phi\) is
  \begin{multicols}{2}
    \begin{enumerate}
      \item \emph{Left saturated in \(\Gamma \gg \Delta\)} if
        \begin{enumerate}
          \item \(\phi\) is atomic,
          \item \(\phi = \phi_0 \to \phi_1\) and \(\phi_0 \in \Delta\) or \(\phi_1 \in \Gamma\),
          \item \(\phi = \nec \phi_0\) and \(\phi_0 \in \Gamma\) or \({\gg} = { \Rightarrow}\).
        \end{enumerate}
      \item \emph{Right saturated in \(\Gamma \gg \Delta\)} if
        \begin{enumerate}
          \item \(\phi\) is atomic or a \(\nec\)-formula,
          \item \(\phi = \phi_0 \to \phi_1\) and \(\phi_0 \in \Gamma\), \(\phi_1 \in \Delta\).
        \end{enumerate}
    \end{enumerate}
  \end{multicols}
  Given a sequent \(S = \Gamma \gg \Delta\) we define its \emph{saturation complexity}, denoted \(\satc{S}\), as the multiset
  \[
    \set{|\phi| \mid \phi \in \Gamma, \phi \text{ not left saturated in }S} \union \set{|\phi| \mid \phi \in \Delta, \phi \text{ not right saturated in }S}.
  \]
  We assume the saturation complexity comes with the multiset ordering attached to it, so \(\satc{S} < \satc{S'}\) means that \(\satc{S}\) is obtained from \(\satc{S'}\) by the process of taking out at least one element and replace it with a finite number of strictly smaller elements.
  We remember that the multiset ordering is wellfounded.
\end{definition}

\begin{figure}
  \[
    \AxiomC{\(\)}
    \RightLabel{\(\ax\)}
    \UnaryInfC{\(p, \Gamma \gg p, \Delta\)}
    \DisplayProof
    \quad
    \AxiomC{\(\)}
    \RightLabel{\(\botL\)}
    \UnaryInfC{\(\bot, \Gamma \gg \Delta\)}
    \DisplayProof
    \quad
    \AxiomC{\(\)}
    \RightLabel{\(\emp\)}
    \UnaryInfC{\(\gg\)}
    \DisplayProof
    \quad
    \AxiomC{\(\necd \phi, \Gamma \Rrightarrow \Delta\)}
    \RightLabel{\(\Ssat\)}
    \UnaryInfC{\(\nec \phi, \Gamma \Rrightarrow \Delta\)}
    \DisplayProof
  \]
  where in \((\Ssat)\) the formula \(\nec \phi\) is not left saturated in the conclusion.

  \[
    \AxiomC{\(\phi \to \psi, \Gamma \gg \phi, \Delta\)}
    \AxiomC{\(\psi, \phi \to \psi, \Gamma \gg \Delta\)}
    \RightLabel{\(\toLsat\)}
    \BinaryInfC{\(\phi \to \psi, \Gamma \gg \Delta\)}
    \DisplayProof
    \quad
    \AxiomC{\(\phi, \Gamma \gg \psi, \phi \to \psi, \Delta\)}
    \RightLabel{\(\toRsat\)}
    \UnaryInfC{\(\Gamma \gg \phi \to \psi, \Delta\)}
    \DisplayProof
  \]
  where in \((\toLsat)\) the formula \(\phi \to \psi\) is not left saturated in the conclusion and in \((\toRsat)\) the formula \(\phi \to \psi\) is not right saturated in the conclusion. 

  \[
    \AxiomC{\(\necd \Sigma^s \Rightarrow\)}
    \AxiomC{\([\necd \Sigma^s \Rightarrow \phi]_{\phi \in \Theta}\)}
    \RightLabel{\(\interpolation{\KT}\)}
    \BinaryInfC{\(\nec \Sigma, \Gamma \gg \nec \Theta, \Delta\)}
    \DisplayProof
  \]
  where \(\nec \Sigma, \Gamma \gg \nec \Theta, \Delta\) is saturated, \(\Gamma, \Delta\) contain no \(\nec\)-formulas, \((\Gamma \inter \Delta) \inter \text{Var} = \varnothing \) and \(\bot \not \in \Gamma\).
  \caption{Interpolation Template Rules}
  \label{fig:interpolation-template}
\end{figure}

\begin{definition}
  An interpolation template is a cyclic proof constructed in the local progress calculus whose rules are displayed in Figure~\ref{fig:interpolation-template} and progress is only made at the premises of \(\interpolation{\KT}\).
\end{definition}

It is easy to show the following lemma, the details of the proof can be found in the appendix.

\begin{lemma}\label{lm:interpolation-template}
  Every sequent has an interpolation template.
\end{lemma}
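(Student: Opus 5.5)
We construct the template by backwards proof search from the given sequent, arguing that the search is finite once repetitions are closed off by backedges. First, for a sequent \(S\) we show that some rule of Figure~\ref{fig:interpolation-template} has \(S\) as conclusion. If \(S\) is not saturated, then some formula is not left or right saturated. If it is an implication on the left, \((\toLsat)\) applies; if it is an implication on the right, \((\toRsat)\) applies; if it is a \(\nec\)-formula on the left in a \(\Rrightarrow\)-sequent, \((\Ssat)\) applies. If \(S\) is saturated, we distinguish three cases. If \(\bot\in\Gamma\) or \(\Gamma\cap\Delta\) contains a variable, \((\botL)\) or \((\ax)\) applies. If \(S\) is empty, \((\emp)\) applies. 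Otherwise we split \(S\) as \(\nec\Sigma,\Gamma\gg\nec\Theta,\Delta\), where \(\Gamma,\Delta\) contain no \(\nec\)-formulas. Saturation forces every non-\(\nec\) formula on either side to be atomic or a saturated implication, so \((\interpolation{\KT})\) applies. This holds even when \(\Sigma\) or \(\Theta\) is empty: the rule then just has the premise \(\necd\Sigma^s\Rightarrow\) together with possibly no further premises.

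Second, we bound the search tree locally. Each of \((\toLsat)\), \((\toRsat)\), \((\Ssat)\) strictly decreases \(\satc{S}\) in the multiset ordering. For \((\toLsat)\), the principal \(\phi\to\psi\) becomes left saturated in both premises, because \(\phi\) is added to the right or \(\psi\) to the left. The newly added \(\phi\) or \(\psi\) has smaller complexity, and formulas that were saturated stay saturated, since saturation is monotone under adding formulas and the arrow type is unchanged. \((\toRsat)\) and \((\Ssat)\) are handled the same way: after \((\Ssat)\), \(\nec\phi\) is left saturated because \(\phi\) is now on the left, and only \(\phi\) is added. Since the multiset ordering is wellfounded and branching is finite, every sequence of non-modal steps is finite. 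By K\"onig's lemma, each local fragment between consecutive applications of \((\interpolation{\KT})\) is therefore finite.

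Third, we obtain finiteness of the whole template. Every sequent in the search uses only subformulas of the root sequent, by the local subformula property. However, sequents are multisets and \((\toLsat)\), \((\toRsat)\), \((\Ssat)\) keep the principal formula, so multiplicities can grow inside a local fragment. The premises of \((\interpolation{\KT})\) are built from the \emph{set} \(\Sigma^s\), so every sequent immediately above a modal step has the form \(\necd\Sigma^s\Rightarrow\) or \(\necd\Sigma^s\Rightarrow\phi\) with \(\Sigma^s\) a subset of the subformulas. Only finitely many such sequents exist. We build the tree breadth-first by König-style reasoning. Whenever a premise of \((\interpolation{\KT})\) repeats the sequent of a premise of an earlier \((\interpolation{\KT})\) on the same branch, we make it a repeat node pointing to that earlier node. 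The path from the earlier node to the repeat node passes through a premise of \((\interpolation{\KT})\), so it contains a progressing node and the cyclic-proof condition holds. Every branch then meets a repeat within boundedly many modal steps (pigeonhole on the finitely many post-modal sequents), and each local fragment is finite, so the tree is finite.

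The main obstacle is the second step: checking that \(\satc{\cdot}\) really decreases. In particular we must check that no formula which was saturated becomes unsaturated in a premise. This relies on the fact that saturation conditions only ask for membership of formulas and that rules only add formulas to the non-modal part. One subtlety is that \((\Ssat)\) adds \(\phi\) on the left, which is harmless. The repeat node must also carry the same rule as its target; we ensure this by choosing the rule at each post-modal sequent deterministically as a function of the sequent.
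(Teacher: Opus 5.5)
Your proof is correct and follows essentially the paper's route. Both arguments run a backward search with the template rules, use the strict decrease of \(\satc{\cdot}\) under every non-modal rule and the subformula property, and apply pigeonhole to the at most \((k+1)2^k\) possible premises of \((\interpolation{\KT})\). The only difference is minor: you place backedges only at premises of \((\interpolation{\KT})\), so the progress condition is immediate, whereas the paper turns any repetition into a repeat node and uses the strict decrease of \(\satc{\cdot}\) to show that a modal step must lie between a repeat node and its companion.
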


We will consider formulas on a expanded set of propositional variables.
For each \(w \in \mathbb{N}^*\) we pick a new variable \(x_w\) different from the variables in \(\text{Var}\) and such that if \(w,v \in \mathbb{N}^*\) and \(w \neq v\) then \(x_w \neq x_v\).
We define \(B = \set{x_w \mid w \in \mathbb{N}^*}\) and for a interpolation template \(T\) we define \(B_T = \set{x_w \mid w \text{ repeat node of } T}\).
The elements of \(B\) will be called \emph{bound variables} and the formulas on this expanded set of variables will be called \emph{pseudoformulas}.
For the rest of the subsection we fix two vocabularies \(V_+\) and \(V_-\) for which we want to calculate the uniform Lyndon interpolant.

\begin{definition}
  Let \(T\) be an interpolation template and \(w\) a node of \(T\), let us denote the sequent at~\(w\) in \(T\) as \(\Gamma_w \gg_w \Delta_w\).
  We will annotate each of the sequents of \(T\) with a pseudoformula \(\kappa\) called the \emph{preinterpolant at \(w\)}, denoted as \(\kappa : \Gamma_w \gg_w \Delta_w\).
  We proceed by recursion on the tree structure of \(T\) as follows.
  \[
    \AxiomC{\(\)}
    \RightLabel{\(\ax\)}
    \UnaryInfC{\(\bot : p, \Gamma \gg p, \Delta\)}
    \DisplayProof
    \quad
    \AxiomC{\(\)}
    \RightLabel{\(\botL\)}
    \UnaryInfC{\(\bot : \bot, \Gamma \gg \Delta\)}
    \DisplayProof
    \quad
    \AxiomC{\(\)}
    \RightLabel{\(\emp\)}
    \UnaryInfC{\(\top : {\gg}\)}
    \DisplayProof
    \quad
    \AxiomC{}
    \RightLabel{\(\rep\)}
    \UnaryInfC{\(x_w : \Gamma_w \gg_w \Delta_w\)}
    \DisplayProof
    \quad
    \AxiomC{\(\kappa : \necd \phi, \Gamma \Rrightarrow \Delta\)}
    \RightLabel{\(\Ssat\)}
    \UnaryInfC{\(\kappa : \nec \phi, \Gamma \Rrightarrow \Delta\)}
    \DisplayProof
  \]
  \[
    \AxiomC{\(\kappa_0 : \phi \to \psi, \Gamma \gg \phi, \Delta\)}
    \AxiomC{\(\kappa_1 : \psi, \phi \to \psi, \Gamma \gg \Delta\)}
    \RightLabel{\(\toLsat\)}
    \BinaryInfC{\(\kappa_0 \vee \kappa_1 : \phi \to \psi, \Gamma \gg \Delta\)}
    \DisplayProof
    \quad
    \AxiomC{\(\kappa : \phi, \Gamma \gg \psi, \phi \to \psi, \Delta\)}
    \RightLabel{\(\toRsat\)}
    \UnaryInfC{\(\kappa : \Gamma \gg \phi \to \psi, \Delta\)}
    \DisplayProof
  \]
  \[
    \AxiomC{\(\kappa^{\nec} : \necd \Sigma \Rightarrow\)}
    \AxiomC{\([\kappa^{\pos}_\phi : \necd \Sigma \Rightarrow \phi]_{\phi \in \Theta}\)}
    \RightLabel{\(\interpolation{\KT}\)}
    \BinaryInfC{\(\nec \kappa^{\nec} \wedge \bigwedge_{\phi \in \Theta} \pos \kappa^{\pos}_\phi \wedge \bigwedge (\Gamma \cap V_+) \wedge \neg (\Delta \cap V_-) : \nec \Sigma, \Gamma \gg \nec \Theta, \Delta\)}
    \DisplayProof
  \]
\end{definition}

As promised, every interpolation template has an associated positive modalized equational system.
The proof of this fact can be found in the appendix.

\begin{lemma}\label{lm:eq-system-of-template}
  For any interpolation template \(T\) and for any \(x_w \in B_T\) let us write \(\kappa_{x_w}\) to denote the preinterpolant of \(T\) at \(w^\circ\).
  There is an enumeration \(\bar{x} = (x_0,\ldots,x_{n-1})\) of \(B_T\) such that the equational system \(\mathcal{E}_T = \set{(x_i, +, \kappa_{x_i}) \mid i < n}\) is a positive modalized equational system over \((\bar{x}, V_+, V_-)\).
  As a corollary, \(\mathcal{E}_T\) is solvable in \(\GLS\).
\end{lemma}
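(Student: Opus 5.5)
The plan is to read off the required properties of the preinterpolants directly from the recursive definition on the finite tree underlying \(T\). I would then choose the enumeration of \(B_T\) using the cyclic-proof condition. After that, the corollary is immediate from Theorem~\ref{th:positive-modalized-equation-systems}(2).

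\emph{Step 1 (polarity).} By induction on the finite tree of \(T\), from the leaves up, I show that every preinterpolant \(\kappa\) at a node \(u\) satisfies three conditions. First, \(\voc_+(\kappa) \subseteq V_+ \union B_T\) and \(\voc_-(\kappa) \subseteq V_-\). Second, the only bound variables occurring in \(\kappa\) are \(x_v\) for repeat nodes \(v \geq u\). Third, all of these occur positively. The leaf cases are \(\bot\), \(\top = \bot\to\bot\) and \(x_w\), which are trivial. The rules \(\Ssat\) and \(\toRsat\) copy the preinterpolant, and \(\toLsat\) takes a disjunction, so both properties are preserved. For \(\interpolation{\KT}\), the connectives \(\wedge\), \(\nec\) and \(\pos = \neg\nec\neg\) preserve polarity, since the double negation cancels. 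The only new atoms are those in \(\Gamma \cap V_+\), which occur positively, and those in \(\Delta \cap V_-\), which occur under one negation and hence negatively. Since the \(x_w\) are fresh and not in \(V_+\union V_-\), this gives exactly the vocabulary conditions for a \emph{positive} Lyndon equational system with \(B_+ = B_T\) and \(B_- = \varnothing\).

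\emph{Step 2 (guardedness).} I would show by the same induction that if \(x_v\) has an occurrence in \(\kappa\) at \(u\) that is not under a \(\nec\), then no \(\interpolation{\KT}\) step lies on the path from \(u\) to \(v\). The reason is that \(\interpolation{\KT}\) is the only rule that places preinterpolants of premises under \(\nec\) or \(\pos\), and the other rules put the premises' preinterpolants at top level. Since progress happens only at premises of \(\interpolation{\KT}\), an unguarded occurrence of \(x_v\) in \(\kappa_{x_w}\) (the preinterpolant at \(w^\circ\), with \(v \geq w^\circ\)) implies that \((w^\circ, v]\) contains no progressing node. By the cyclic-proof condition, \((v^\circ, v]\) does contain a progressing node. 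Both \(v^\circ\) and \(w^\circ\) are ancestors of \(v\), hence comparable, so we must have \(v^\circ < w^\circ\) strictly. In particular \(x_w\) is guarded in \(\kappa_{x_w}\). More generally, \(x_v\) can be unguarded in \(\kappa_{x_w}\) only when the companion \(v^\circ\) is strictly shorter than \(w^\circ\).

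\emph{Step 3 (enumeration).} I enumerate \(B_T\) as \(x_0,\ldots,x_{n-1}\) so that the lengths of the companions are non-increasing: deeper companions get smaller indices, with ties broken arbitrarily. Suppose \(x_j\) is unguarded in \(\kappa_{x_i}\). By Step 2 the companion of \(x_j\) is strictly shorter than that of \(x_i\), so \(j > i\). Hence \(\kappa_{x_i}\) is modalized in \(x_0,\ldots,x_i\), and \(\mathcal{E}_T\) is a positive modalized Lyndon equational system over \((\bar{x},V_+,V_-)\).

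The main obstacle is Step 2, namely setting up the right invariant that links unguarded occurrences to the absence of \(\interpolation{\KT}\) on the connecting path, and then combining it with the comparability of ancestors of \(v\). Once Steps 1–3 are in place, Theorem~\ref{th:positive-modalized-equation-systems}(2) yields a solution in \(\GLS\).
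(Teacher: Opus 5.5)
Your proposal is correct and follows essentially the same route as the paper. Both arguments check polarity by induction on the finite tree. Both use an enumeration of \(B_T\) in which a companion that is a proper ancestor of another companion gets a later index; you order by the depth of the companion and the paper by the height of its subtree, and either works. Both then combine the observation that only \(\interpolation{\KT}\) places bound variables under \(\nec\) with the progress condition on \((w^\circ,w]\) and the comparability of ancestors of a common node. Your Step~2 is a slightly cleaner packaging of the paper's three facts and its case split on whether \(w_j^\circ \leq w_i^\circ\) or the two companions are incomparable.
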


\begin{definition}[Interpolant]
  Let \(T\) be an interpolation template, \((\cdot)^*\) a solution of \(\mathcal{E}_T\) and for each node~\(w\) of \(T\) let us denote the preinterpolant at \(w\) as \(\kappa_w\).
  The interpolant given by \(T\) is defined as \(\iota_T = \kappa_\epsilon^*\).
\end{definition}

The interpolant given by an interpolation template \(T\) depends on the chosen solution of \(\mathcal{E}_T\).
This does not matter to us, as in the end one can show that any two uniform interpolants are logically equivalent.
Also note that, by definition of solution of a Lyndon equational system over \((\bar{x},V_+,V_-)\) we automatically have that each \(\iota_T\) is a formula (and not a pseudoformula) and \(\voc_b(\iota_T) \subseteq V_b\) for \(b \in \set{+,-}\).
Finally, to show that the interpolant has the necessary properties, we will need two additional lemmas.

\begin{lemma}\label{lm:first-verification}
  Let \(T\) be an interpolation template of\/ \(\Gamma \gg \Delta\).
  Then \(\n{\GLS} \vdash \Gamma \gg \Delta, \iota_T\).
\end{lemma}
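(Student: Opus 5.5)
We build the proof coinductively, node by node, following the template. Concretely, we prove a stronger claim for every node \(w\) of \(T\), not only the root: writing the sequent at \(w\) as \(\Gamma_w \gg_w \Delta_w\) and its preinterpolant as \(\kappa_w\), we want
\[
\n{\GLS} \vdash \Gamma_w \gg_w \Delta_w, \kappa_w^*.
\]
Here \((\cdot)^*\) is the fixed solution of \(\mathcal{E}_T\). Since \(\iota_T = \kappa_\epsilon^*\), the lemma is the case \(w=\epsilon\).

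\textbf{Translating each rule.} We read every rule of the template as a piece of a \(\n{\GLS}\) derivation, with \(\kappa_w^*\) carried as an extra formula on the right.
\begin{itemize}
\item \((\ax)\) and \((\botL)\): these give axioms directly, with \(\bot\) on the right removed by weakening.
\item \((\emp)\): we need \(\gg \top\), which is provable because \(\top = \bot\to\bot\).
\item \((\Ssat)\): we apply \((\S)\).
\item \((\toRsat)\): we apply \((\toR)\) and then \(\ctr\), since the premise keeps a copy of \(\phi\to\psi\).
\item \((\toLsat)\): the preinterpolant is \(\kappa_0\vee\kappa_1\). We first apply \((\toR)\)-based rules to unfold the disjunction into \(\kappa_0^*,\kappa_1^*\) on the right. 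Then we apply \((\toL)\), use weakening to add the missing disjunct, and use contraction for the duplicated implication.
\item \(\rep\) at a node \(w\): the preinterpolant is \(x_w\), and \(x_w^*\) is \(\GLS\)-equivalent to \(\kappa_{w^\circ}^*\). Here we cannot simply jump back to \(w^\circ\). We insert a \(\cut\) with the equivalence \(x_w^* \leftrightarrow \kappa_{w^\circ}^*\), which is provable in \(\n{\GLS}\) by the completeness corollary, and then continue by following the backedge to \(w^\circ\).
\end{itemize}
We then eliminate all cuts at the end using Theorem~\ref{th:cut-elimination-gls}.

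\textbf{Main step: \((\interpolation{\KT})\).} The goal sequent is
\[
\nec\Sigma,\Gamma \gg \nec\Theta,\Delta,\ \nec\kappa^{\nec *}\wedge\bigwedge_\phi \pos\kappa^{\pos *}_\phi\wedge\bigwedge(\Gamma\cap V_+)\wedge\neg\bigvee(\Delta\cap V_-).
\]
By invertible conjunction rules (encoded with \(\to\) and \(\bot\)) this splits into one subgoal per conjunct.
\begin{itemize}
\item For a variable \(p\in\Gamma\cap V_+\), the subgoal is an instance of \((\Ax)\); the case \(p\in\Delta\cap V_-\) is symmetric.
\item For \(\nec\kappa^{\nec*}\), we apply \((\modal{\KT})\) to the premise \(\necd\Sigma^s \Rightarrow \kappa^{\nec*}\), which the coinduction supplies from the left child. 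Passing from \(\Sigma^s\) to \(\Sigma\) only needs weakening or contraction.
\item For \(\pos\kappa^{\pos*}_\phi = \neg\nec\neg\kappa^{\pos*}_\phi\), we move \(\nec\neg\kappa^{\pos*}_\phi\) to the left. We then apply \((\modal{\KT})\) with principal formula \(\nec\phi\) and auxiliary formulas \(\Sigma\) together with \(\neg\kappa^{\pos*}_\phi\). This gives the premise \(\necd\Sigma, \necd\neg\kappa^{\pos*}_\phi \Rightarrow \phi\). After weakening away \(\nec\neg\kappa^{\pos*}_\phi\) and inverting the negation, it reduces to \(\necd\Sigma\Rightarrow\phi,\kappa^{\pos*}_\phi\), which is the coinductive hypothesis at the child for \(\phi\).
\end{itemize}
Every coinductive call at a template premise of \((\interpolation{\KT})\) passes through a \((\modal{\KT})\) premise, which is a progressing node.

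\textbf{Main obstacle: guardedness of the unfolded object.} Each step uses admissible rules such as \(\wk\), \(\ctr\), inversions and \(\cut\) at the repeat nodes, and then calls itself corecursively. I expect the hard part to be showing that the result is a genuine proof and not merely a preproof. I will handle this by working in \(\n{\GLS}+\cut\) and showing that every infinite branch follows the template's cycles. Each cycle of the template contains a \((\interpolation{\KT})\) progress point, because the template is a cyclic proof. Since the translation maps that point to a \((\modal{\KT})\) premise, every infinite branch of the unfolded object has infinitely many progressing nodes. The admissible structural rules are applied only to finite subproofs of bounded local height between corecursive calls, so they do not destroy this property. Cut eliminability from Theorem~\ref{th:cut-elimination-gls} then gives the result in \(\n{\GLS}\).
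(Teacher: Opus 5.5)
Your construction is the same as the paper's. You strengthen the claim to all nodes, translate rule by rule with \(\kappa_w^*\) carried on the right, cut against the fixpoint equivalence at repeat nodes, use \((\modal{\KT})\) for the \(\nec\kappa^{\nec}\) and \(\pos\kappa^{\pos}_\phi\) conjuncts, and take progress from the \((\interpolation{\KT})\) premises; your cycle argument is a fine substitute for the paper's measure \((\satc{\cdot},\lgh(w))\). \textbf{However, the step at repeat nodes fails as you justify it.} Every repeat node \(w\) carries \(\Rightarrow\): some premise of \((\interpolation{\KT})\) lies in \((w^\circ,w]\), those premises are \(\Rightarrow\)-sequents, and no template rule turns \(\Rightarrow\) back into \(\Rrightarrow\). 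So the side proof you need is of \(\kappa^*_{w^\circ},\Gamma_w\Rightarrow\Delta_w,x^*_w\). By part 1 of the corollary this requires \(\GL\vdash\kappa^*_{w^\circ}\to x^*_w\), whereas a solution of \(\mathcal{E}_T\) in \(\GLS\) only yields \(\Rrightarrow\)-sequents. This is not a technicality. Take \(\nec p\Rrightarrow\) with \(V_+=V_-=\varnothing\). A template is \((\Ssat)\) followed by \((\interpolation{\KT})\) twice on \(p,\nec p\Rightarrow\) with a backedge, so \(\mathcal{E}_T\) is \(x=\nec x\) up to \(\top\)-conjuncts. Then \(x^*=\bot\) solves it in \(\GLS\), since \(\GLS\vdash\neg\nec\bot\), and \(\iota_T\) is equivalent to \(\nec\nec\bot\). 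Since \(\GLS\vdash\neg\nec\nec\bot\), the claim \(\GLS\vdash\nec p\to\nec\nec\bot\) would give \(\GLS\vdash\neg\nec p\), hence \(\GLS\vdash\neg\nec\top\), which is impossible. The paper's appendix justifies \(\tau\) the same way (``since \(\GLS\vdash x^*_w\leftrightarrow\kappa^*_{w^\circ}\)''), so it shares this hole. The repair is to take \((\cdot)^*\) to be a solution in \(\GL\). The solvability arguments go through verbatim in \(\GL\), because \(\nec\phi(\top)\) is already a \(\GL\)-fixpoint and \(\GL\) is closed under substitution. Part 1 of the corollary then supplies \(\tau\).

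Your treatment of \(\wk\) and \(\ctr\) also needs repair. Write \(\alpha(w)\) for the proof you build at \(w\). You say these rules are ``applied only to finite subproofs'', but that is not what your construction does. At \((\toRsat)\), \((\toLsat)\) and in the \(\pos\kappa^{\pos}_\phi\) case you contract or weaken proofs containing the corecursive calls \(\alpha(w0)\), \(\alpha(w1)\), \(\alpha(w^{\pos}_\phi)\). The cut at a repeat node likewise needs \(\alpha(w^\circ)\) weakened by \(x^*_w\). So these merely admissible operations are used infinitely often along branches. That is precisely the situation in which admissibility does not give eliminability for non-wellfounded proofs, so your guardedness claim is not established this way. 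The paper avoids this by building \(\alpha(w)\) in \(\n{\GLS}+\cut+\wk+\ctr\), with \(\wk\) and \(\ctr\) as genuine rule instances, so guardedness is purely syntactic. It then removes them: \(\ctr\) is derivable from \(\cut\) and \((\Ax)\), \(\wk\) is eliminable in \(\n{\GLS}+\cut\), and finally \(\cut\) is eliminable in \(\n{\GLS}\). Your closing step invokes only cut elimination, so it would need these additional eliminations as well.
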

\begin{proof}
  Given a node \(w\) of \(T\) let \(\Gamma_w \gg_{w} \Delta_w\) denote the sequent at \(w\) and \(\kappa_w\) be the preinterpolant of \(T\) at~\(w\).
  The trick is to build a function \(\alpha\) such that given a node \(w\) of \(T\), \(\alpha(w)\) is a proof in \(\n{\GLS} + \cut + \wk + \ctr\) of \(\Gamma_w \gg_w \Delta_w, \kappa^*_w\).
  Since \(\ctr\) is derivable with \(\cut\), \(\wk\) is eliminable in \(\n{\GLS} + \cut\) and \(\cut\) is eliminable in \(\n{\GLS}\), we can obtain the desired proof in \(\n{\GLS}\).
  \(\alpha\) is defined in Appendix~\ref{sec:proofs-templates}.
\end{proof}

\begin{lemma}\label{lm:second-verification}
  Let \(T\) be an interpolation template of\/ \(\Gamma \gg \Delta\) and let\/  \(\Phi \gg \Psi\) be a sequent such that \(\voc_b(\Phi \gg \Psi) \subseteq V_b\) for \(b \in \set{+,-}\).
  Then
 \(\n{\GLS} \vdash \Gamma, \Gamma' \gg \Delta, \Delta'\) implies \(\n{\GLS} \vdash \iota_T, \Gamma' \gg \Delta'\).
\end{lemma}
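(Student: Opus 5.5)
We prove a stronger statement by a corecursive construction, in the style of Lemma~\ref{lm:from-wellfounded-to-nonwellfounded}. For each node \(w\) of \(T\), write \(\Gamma_w \gg_w \Delta_w\) for its sequent and \(\kappa_w\) for its preinterpolant. The claim is:
\begin{quote}
for every node \(w\) and every \(\Phi \gg_w \Psi\) with \(\voc_b(\Phi\gg\Psi)\subseteq V_b\), if \(\n{\GLS}\vdash \Gamma_w,\Phi \gg_w \Delta_w,\Psi\), then \(\n{\GLS}+\cut\vdash \kappa_w^*,\Phi \gg_w \Psi\).
\end{quote}
The lemma is the case \(w=\epsilon\). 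We then remove cuts with Theorem~\ref{th:cut-elimination-gls}.

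We define a map \(\beta(w,\pi)\), taking a cut-free proof \(\pi\) of \(\Gamma_w,\Phi\gg_w\Delta_w,\Psi\) to a proof of \(\kappa^*_w,\Phi\gg_w\Psi\). The definition is by corecursion, and within one local step by induction on the local height of \(\pi\). The cases follow the rule applied at \(w\) in \(T\).

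\emph{Repeat nodes.} If \(w\) is a repeat node, then \(\kappa_w^* = x_w^*\). Since \((\cdot)^*\) solves \(\mathcal{E}_T\), we have \(\GLS\vdash x_w^*\leftrightarrow \kappa_{w^\circ}^*\). Using \(\cut\) with this equivalence (available through the Hilbert-calculus correspondence and Lemma~\ref{lm:from-wellfounded-to-nonwellfounded}), we continue corecursively with \(\beta(w^\circ,\pi)\).

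\emph{Propositional and \(\Ssat\) nodes.} For \(\toLsat\), \(\toRsat\) and \(\Ssat\), we use invertibility of the rules (together with \(\wk\) to keep the saturated copy) on \(\pi\), call \(\beta\) on the children, and recombine.
\begin{itemize}
\item For \(\toLsat\) the interpolant is \(\kappa_0\vee\kappa_1\), so a left \(\vee\)-introduction via \(\toL\) suffices.
\item For \(\Ssat\), the \(\Phi,\Psi\) part is unchanged. We apply \(\S\) to \(\pi\) as an admissible step, since \(\necd\phi\) follows from \(\nec\phi\) in \(\Rrightarrow\) mode by the \(\S\) rule read upward together with \(\wk\).
\end{itemize}

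\emph{Leaves.} At \(\ax\) and \(\botL\), \(\kappa=\bot\) and \(\botL\) closes the goal. At \(\emp\), \(\Gamma_w,\Delta_w\) are empty, so \(\pi\) itself proves \(\Phi\gg\Psi\), and we weaken by \(\top\).

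\emph{The \(\interpolation{\KT}\) case.} This is the main obstacle. The sequent \(\nec\Sigma,\Gamma\gg\nec\Theta,\Delta\) is saturated, and we do a case analysis on the bottom of \(\pi\) by induction on \(\lhg(\pi)\).
\begin{itemize}
\item If \(\pi\) ends with a propositional rule or \(\S\) on a formula of \(\Phi,\Psi\), we invert it and use the induction hypothesis on local height.
\item If the last rule acts on a formula of the template part, saturation makes it redundant. For a propositional rule, one premise contains a duplicate, which we contract. For \(\S\), the needed \(\phi\) is already present, so the premise equals the conclusion up to \(\ctr\), and the height drops.
\item If \(\pi\) ends in \(\ax\) on \(p\in\Gamma\cap\Psi\), then \(p\in V_-\)'s dual, namely \(p\in V_+\) because \(\voc_+(\Psi)\subseteq V_+\). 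Hence \(p\) is a conjunct of \(\kappa\), and the axiom closes. Symmetrically, \(p\in\Phi\cap\Delta\) gives \(p\in V_-\) and the conjunct \(\neg p\). Here the Lyndon bookkeeping is exactly what is used. An axiom within \(\Gamma\cap\Delta\) is excluded by the side conditions, and \(\bot\in\Gamma\) is excluded as well.
\item If \(\pi\) ends in \(\modal{\KT}\), its premise is \(\necd\Sigma',\necd\Sigma''\Rightarrow\chi\), with \(\Sigma'\subseteq\Sigma\) and \(\Sigma''\) coming from \(\Phi\).
  \begin{itemize}
  \item If \(\nec\chi\) lies in \(\Psi\), then \(\beta\) applied at the first premise \(\necd\Sigma^s\Rightarrow\) (after \(\wk\) and \(\ctr\)) yields \(\kappa^{\nec *},\necd\Sigma''\Rightarrow\chi\). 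Applying \(\modal{\KT}\) then gives \(\nec\kappa^{\nec*},\nec\Sigma'',\ldots\gg\nec\chi\). This is a progressing step, so the corecursion is guarded.
  \item If \(\nec\chi\) is \(\nec\phi\) with \(\phi\in\Theta\), then \(\beta\) at the premise for \(\phi\), with \(\Psi=\varnothing\), gives \(\kappa^{\pos*}_\phi,\necd\Sigma''\Rightarrow\). We turn this into \(\pos\kappa^{\pos*}_\phi,\nec\Sigma''\gg\) by a \(\modal{\KT}\) step on \(\neg\kappa\), unfolding \(\pos=\neg\nec\neg\). This step is again progressing.
  \end{itemize}
  In both subcases the remaining conjuncts are dropped by \(\wk\), and \(\wedge\) on the left is handled by invertible rules.
\end{itemize}

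\emph{Well-definedness.} We check that the result is a genuine proof in \(\n{\GLS}+\cut\). The obstacle is that cuts appear at repeat nodes, and every infinite branch must pass through progress. Any infinite branch of the constructed object either runs infinitely often through \(\modal{\KT}\) steps produced in the \(\interpolation{\KT}\) case, or stays inside \(\pi\)'s own progress. The cuts at repeat nodes contribute only finite proofs of the fixpoint equivalences. Each cycle of \(T\) passes through an \(\interpolation{\KT}\) premise, and in that case we emit a progressing \(\modal{\KT}\) before recursing. The other cases recurse with bounded local overhead, because the pair (local height of \(\pi\), position in \(T\) below the next \(\interpolation{\KT}\)) decreases. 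So the output is a proof, and Theorem~\ref{th:cut-elimination-gls} removes the cuts.
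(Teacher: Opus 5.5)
Your proposal is correct and is essentially the paper's own argument. It uses the same corecursively defined $\beta(w,\pi)$: weakening (you use inversion plus weakening) at saturated nodes, a cut against the fixpoint equivalence at repeat nodes, and at $\interpolation{\KT}$ nodes the Lyndon-sensitive $\ax$ analysis and the two $\modal{\KT}$ subcases, with a well-founded measure (a variant of the paper's $(\satc{\cdot},\lgh(w),\lhg(\pi))$) ensuring progress.

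A few small points, the first shared with the paper:
\begin{itemize}
\item Every repeat node carries a $\Rightarrow$-sequent, since a progressing $\interpolation{\KT}$-premise lies in $(w^\circ,w]$ and no template rule leads from a $\Rightarrow$-conclusion to a $\Rrightarrow$-premise. So the cut at repeat nodes needs $\GL\vdash x^*_w\leftrightarrow\kappa^*_{w^\circ}$, not merely the $\GLS$ version. This holds because the fixpoint constructions work verbatim in $\GL$.
\item The proofs $\tau$ of these equivalences are in general non-wellfounded, not finite as you claim. This is harmless, since they are proofs.
\item The $\wk$ steps you apply to the corecursive output must be eliminated along with $\cut$, as the paper does.
\end{itemize}
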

\begin{proof}
  Given a node \(w\) of \(T\) let \(\Gamma_w \gg_{w} \Delta_w\) denote the sequent at~\(w\) and \(\kappa_w\) be the preinterpolant of \(T\) at~\(w\).
  The trick is to build a function \(\beta\) such that given a node \(w\) of \(T\) and a proof \(\pi \vdash \Gamma_w, \Phi \gg_{w} \Delta_w, \Psi\) in \(\n{\GLS}\) such that \(\voc_{b}(\Phi \gg_{w} \Psi) \subseteq V_b\) for \(b \in \set{+,-}\) , \(\beta(w,\pi)\) is a proof in \(\n{\GLS} + \cut + \wk + \ctr\) of \(\kappa_w^*, \Phi \gg_w \Psi\).
  Since \(\ctr\) is derivable with \(\cut\), \(\wk\) is eliminable in \(\n{\GLS} + \cut\) and \(\cut\) is eliminable in \(\n{\GLS}\), we can obtain the desired proof in \(\n{\GLS}\).
  The definition of \(\beta\) can be found in Appendix~\ref{sec:proofs-templates}.
\end{proof}

We finish the paper with the desired result.

\begin{theorem}
  \(\GLS\) has uniform Lyndon interpolation.
\end{theorem}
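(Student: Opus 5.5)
Given \(\phi\) and vocabularies \(V_+, V_-\), we apply Lemma~\ref{lm:interpolation-template} to the sequent \(\phi \Rrightarrow\) and obtain an interpolation template \(T\) for it. By Lemma~\ref{lm:eq-system-of-template} the system \(\mathcal{E}_T\) is a positive modalized Lyndon equational system and hence solvable in \(\GLS\). We fix a solution and take \(\iota = \iota_T\) (preinterpolants computed relative to \(V_+, V_-\)). As noted after the definition of the interpolant, \(\iota\) is a genuine formula with \(\voc_+(\iota) \subseteq V_+\) and \(\voc_-(\iota) \subseteq V_-\); this gives condition 1 of the ULIP.

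For condition 2, Lemma~\ref{lm:first-verification} applied to \(T\) gives \(\n{\GLS} \vdash \phi \Rrightarrow \iota\). By the corollary relating \(\n{\GLS}\) to \(\GLS\) (item 2), this means \(\GLS \vdash \phi \to \iota\).

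For condition 3, let \(\psi\) satisfy \(\voc_+(\psi) \subseteq V_+\), \(\voc_-(\psi) \subseteq V_-\) and \(\GLS \vdash \phi \to \psi\). By the same corollary, \(\n{\GLS} \vdash \phi \Rrightarrow \psi\). We apply Lemma~\ref{lm:second-verification} with \(\Gamma = \{\phi\}\), \(\Delta = \varnothing\), \(\Gamma' = \Phi = \varnothing\) and \(\Delta' = \Psi = \{\psi\}\). Its vocabulary hypothesis is exactly \(\voc_b(\Rrightarrow \psi) = \voc_b(\psi) \subseteq V_b\). We read the lemma's primed multisets as the \(\Phi, \Psi\) of its hypothesis. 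This yields \(\n{\GLS} \vdash \iota \Rrightarrow \psi\), and the corollary again gives \(\GLS \vdash \iota \to \psi\).

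The real work lies in the two verification lemmas and in the solvability of \(\mathcal{E}_T\), and we take all of these as given. What remains is bookkeeping, and the only delicate point is to use the \(\Rrightarrow\) version of the sequent throughout, so that derivability corresponds to \(\GLS\) rather than \(\GL\). In particular, the template must be built for \(\phi \Rrightarrow\), and the sequent symbol \(\gg\) must be the same in the proof of \(\phi \Rrightarrow \psi\) and in the template's root, which is how Lemma~\ref{lm:second-verification} is stated.
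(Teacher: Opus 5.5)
Your proposal is correct and follows the paper's own argument: you build an interpolation template for \(\phi \Rrightarrow\), take \(\iota_T\), and get condition~1 from the remark on vocabularies, condition~2 from Lemma~\ref{lm:first-verification}, and condition~3 from Lemma~\ref{lm:second-verification} with \(\Phi=\varnothing\) and \(\Psi=\{\psi\}\), passing between \(\n{\GLS}\) and \(\GLS\) via the corollary. You simply write out the bookkeeping that the paper compresses into ``it is easy to show,'' including the right reading of the primed multisets in Lemma~\ref{lm:second-verification}.
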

\begin{proof}
  Let \(T\) be an interpolation template for \(\phi \Rrightarrow\).  It is easy to show using Lemmas~\ref{lm:first-verification} and \ref{lm:second-verification} that~\(\iota_T\) is the uniform Lyndon interpolant of \(\phi\) for given vocabularies \((V^+,V_-)\).
\end{proof}

\subsection*{Future work}

The method  to establish uniform Lyndon  interpolation via non-wellfounded proofs can be applied to a plethora of provability logics, once sequent calculi for these logics have been developed.
We leave it as future work to apply the methodology to other provability logics, such as unary interpretability or bimodal provability logics.
In addition, there are two possible extensions of the method which need further exploration.
One direction is to investigate logics without simple Lyndon fixpoints, such as interpretability logic \(\IL\), and see if it is still possible to solve positive modalized Lyndon equational systems.
The other direction is to study whether the method is applicable to intuitionistic provability logics, where the resulting Lyndon equational system does not need to be positive.

\subsection*{Acknowledgements}
We would like to thank Taishi Kurahashi for his interest in these methods and for bringing our attention to the problem of proving ULIP for \(\GLS\).

\newpage
\appendix
\section{Cut reductions}\label{sec:cut-reductions}
We display some possible cut reductions which were not consider at the proof of Theorem~\ref{th:cut-elimination-gls}.

\textbf{Weakening part.}
If \(\chi\) belongs to the weakening part of the last rule instance of \(\pi\) or \(\tau\) we can delete directly, as weakening parts can be modified arbitrarily.
%Let us assume \(\chi\) occurs in the weakening part of \(\pi\), the case for \(\tau\) is analogous.
%\[
%  \AxiomC{\(\pi\)}
%  \noLine
%  \UnaryInfC{\(\Gamma \gg \Delta, \chi\)}
%  \DisplayProof
%  \quad
%  \AxiomC{\(\tau\)}
%  \noLine
%  \UnaryInfC{\(\chi,\Gamma \gg \Delta\)}
%  \DisplayProof
%  \longmapsto
%  \AxiomC{\(\pi\)}
%  \noLine
%  \UnaryInfC{\(\Gamma \gg \Delta\)}
%  \DisplayProof
%\]
%as the weakening part of a rule instance can be changed and still be an instance of the same rule.
From now on we assume that \(\chi\) does not occur in the weakening part of the rule instances.

\textbf{Axiomatic}.
Assume \(\pi\) ends in \((\ax)\), the case for \(\tau\) is analogous.
Then \(\chi = p\) for some variable \(p\) and the desired reduction is
\[
  \AxiomC{\(\)}
  \RightLabel{\(\ax\)}
  \UnaryInfC{\(p, \Gamma' \gg \Delta, p\)}
  \DisplayProof
  \quad
  \AxiomC{\(\tau\)}
  \noLine
  \UnaryInfC{\(p, p, \Gamma' \gg \Delta\)}
  \DisplayProof
  \longmapsto
  \AxiomC{\(\ctr(\tau)\)}
  \noLine
  \UnaryInfC{\(p, \Gamma' \gg \Delta\)}
  \DisplayProof
\]
where \(\Gamma = p, \Gamma'\).

If \(\pi\) ends in \(\botL\) then \(\chi\) would belong to the weakening part, which is already covered.
Assume \(\tau\) ends in \(\botL\), so \(\chi = \bot\).
\[
  \AxiomC{\(\pi\)}
  \noLine
  \UnaryInfC{\(\Gamma \gg \Delta, \bot\)}
  \DisplayProof
  \quad
  \AxiomC{\(\tau\)}
  \noLine
  \UnaryInfC{\(\bot, \Gamma \gg \Delta\)}
  \DisplayProof
  \longmapsto
  \AxiomC{\(\mathrm{inv}_{\botR}(\pi)\)}
  \noLine
  \UnaryInfC{\(\Gamma \gg \Delta\)}
  \DisplayProof
\]

From now own we assume that neither \(\pi\) nor \(\tau\) end in \((\ax)\) or \((\botL)\).

\textbf{\(\botR\) case}.
Asssume \(\pi\) ends in an application of \((\botR)\), the case for \(\tau\) is analogous.
If \(\bot\) is the cut formula, the desired cut reduction is obtained taking the immediate subproof of \(\pi\).
%\[
%  \AxiomC{\(\pi_0\)}
%  \noLine
%  \UnaryInfC{\(\Gamma \gg \Delta\)}
%  \RightLabel{\(\botR\)}
%  \UnaryInfC{\(\Gamma \gg \Delta, \bot\)}
%  \DisplayProof \quad
%  \AxiomC{\(\tau\)}
%  \noLine
%  \UnaryInfC{\(\bot, \Gamma \gg \Delta\)}
%  \DisplayProof
%  \longmapsto
%  \AxiomC{\(\pi_0\)}
%  \noLine
%  \UnaryInfC{\(\Gamma \gg \Delta\)}
%  \DisplayProof
%\]
If \(\bot\) is not the cut formula, the desired cut reduction is
\[
  \AxiomC{\(\pi_0\)}
  \noLine
  \UnaryInfC{\(\Gamma \gg \Delta', \chi\)}
  \RightLabel{\(\botR\)}
  \UnaryInfC{\(\Gamma \gg \bot, \Delta', \chi\)}
  \DisplayProof \quad
  \AxiomC{\(\tau\)}
  \noLine
  \UnaryInfC{\(\chi, \Gamma \gg \bot, \Delta'\)}
  \DisplayProof
  \longmapsto
  \AxiomC{\(\pi_0\)}
  \noLine
  \UnaryInfC{\(\Gamma \gg \Delta', \chi\)}
  \AxiomC{\(\mathrm{inv}_{\botR}(\tau)\)}
  \noLine
  \UnaryInfC{\(\chi, \Gamma \gg \Delta'\)}
  \RightLabel{\(\cut \text{(I.H.)}\)}
  \BinaryInfC{\(\Gamma \gg \Delta'\)}
  \DisplayProof
\]
where \(\Delta = \bot, \Delta'\).

From now own we assume that neither \(\pi\) nor \(\tau\) end in \((\botR)\).

\textbf{Principal cut reduction.}
Assume \(\chi\) is principal in \(\pi\) and \(\tau\).
Then either \(\chi\) is an implication or a \(\nec\)-formula.
The second case is covered at the proof of Theorem~\ref{th:cut-elimination-gls}, so we just show the cut reduction for the first.
\begin{multline*}
  \AxiomC{\(\pi_0\)}
  \noLine
  \UnaryInfC{\(\chi_0, \Gamma \gg \Delta, \chi_1\)}
  \RightLabel{\(\toR\)}
  \UnaryInfC{\(\Gamma \gg \Delta, \chi_0 \to \chi_1\)}
  \DisplayProof
  \quad
  \AxiomC{\(\tau_0\)}
  \noLine
  \UnaryInfC{\(\Gamma \gg \Delta, \chi_0\)}
  \AxiomC{\(\tau_1\)}
  \noLine
  \UnaryInfC{\(\chi_1, \Gamma \gg \Delta\)}
  \RightLabel{\(\toL\)}
  \BinaryInfC{\(\chi_0 \to \chi_1, \Gamma \gg \Delta\)}
  \DisplayProof\\
  \\
  \longmapsto
  \AxiomC{\(\wk(\tau_0)\)}
  \noLine
  \UnaryInfC{\(\Gamma \gg \Delta, \chi_1, \chi_0\)}
  \AxiomC{\(\pi_0\)}
  \noLine
  \UnaryInfC{\(\chi_0, \Gamma \gg \Delta, \chi_1\)}
  \RightLabel{\(\cut \text{(I.H.)}\)}
  \BinaryInfC{\(\Gamma \gg \Delta, \chi_1\)}
  \AxiomC{\(\tau_1\)}
  \noLine
  \UnaryInfC{\(\chi_1, \Gamma \gg \Delta\)}
  \RightLabel{\(\cut \text{(I.H.)}\)}
  \BinaryInfC{\(\Gamma \gg \Delta\)}
  \DisplayProof
\end{multline*}

\textbf{Commutative cut reduction.}
Finally, assume that the cut formula is not principal in either \(\pi\) or \(\tau\).
Assume that the cut formula is not principal in \(\pi\) and the last rule of \(\pi\) is \((\toR)\), the cases where the rule is instead \((\toR)\) or \((\S)\), or where any of these occurs at \(\tau\) are analogous.
The desired cut reduction is

\[
  \AxiomC{\(\pi_0\)}
  \noLine
  \UnaryInfC{\(\phi, \Gamma \Rightarrow \psi, \Delta', \chi\)}
  \RightLabel{\(\toR\)}
  \UnaryInfC{\(\Gamma \Rightarrow \phi \to \psi, \Delta', \chi\)}
  \DisplayProof
  \quad
  \AxiomC{\(\tau\)}
  \noLine
  \UnaryInfC{\(\chi, \Gamma \Rightarrow \phi \to \psi, \Delta'\)}
  \DisplayProof
  \longmapsto
  \AxiomC{\(\pi_0\)}
  \noLine
  \UnaryInfC{\(\phi, \Gamma \Rightarrow \psi, \Delta', \chi\)}
  \AxiomC{\(\mathrm{inv}_{\toR}(\tau)\)}
  \noLine
  \UnaryInfC{\(\chi, \phi, \Gamma \Rightarrow \psi, \Delta'\)}
  \RightLabel{\(\cut\text{ (I.H.)}\)}
  \BinaryInfC{\(\phi, \Gamma \Rightarrow \psi, \Delta'\)}
  \RightLabel{\(\toR\)}
  \UnaryInfC{\( \Gamma \Rightarrow \phi \to \psi, \Delta'\)}
  \DisplayProof
\]
where \(\Delta = \phi \to \psi, \Delta'\)

%If the cut formula is not principal at \(\tau\) and the last rule of \(\tau\) is \((\S)\) or the cut formula is not principal at \(\pi\) and the last rule of \(\pi\) is \((\S)\), the reasoning similar to the previous case using \((\wk)\) instead of \(\mathrm{inv}\)(as inverting \((\S)\) is just using weakening), cutting and reapplying \((\S)\).

Finally, we have that the cut formula is not principal in either \(\pi\) or \(\tau\) with last rule \((\modal{\KT})\).
We notice that this cannot occur in \(\pi\), as then the cut formula would belong to the weakening part, so it must occur in \(\tau\).
To make the cut formula not belong to the weakening part it must be the case that \(\chi = \nec \chi_0\).
If \(\chi\) where not principal in \(\pi\), then we would be in one of the cases which we already covered, so we can assume it is prinicipal in \(\pi\), so \(\pi\) must end in \((\modal{\KT})\).
This case was covered at the proof of Theorem~\ref{th:cut-elimination-gls}.

\section{Proofs about interpolation templates}\label{sec:proofs-templates}

\paragraph{Proof of Lemma~\ref{lm:interpolation-template}.}
We will show that we can obtain a cyclic preproof, this preproof will be a proof since all the rules strictly lower the saturation complexity of sequents from conclusion to premises except for \((\interpolation{\KT})\) (so to repeat a sequent we must at least apply \((\interpolation{\KT})\) once, as otherwise the repeated sequent would have strictly saturation complexity than itself).
Take the sequent \(\Gamma \gg \Delta\) and start applying the rules of Figure~\ref{fig:interpolation-template} as possible (note that for any sequent it is always possible to apply at least one), and whenever a repetition in the same branch is found make the leaf a repetition node.
If we show that in this process no infinite branch is generated we would finish.

Assume otherwise, so there is a branch without repetitions.
As the only rule which not strictly lowers the saturation complexity is \((\interpolation{\KT})\), it must be the case that \((\interpolation{\KT})\) is applied infinitely often.
We notice that the rules of Figure~\ref{fig:interpolation-template} fulfill the subformula property, so all the formulas occuring on the branch must be in the set \(\sub(\Gamma \gg \Delta)\).
Also, the premises of \((\interpolation{\KT})\) are determined by choosing a formula (or nothing in the case of the left-most premise) and a set of formulas, so in fact the possible number of such premises is not bigger than \((k+1)2^k\), where \(k\) is the cardinality of \(\sub(\Gamma \gg \Delta)\).
Since there is infinitely many nodes in the branch which are premises of \((\interpolation{\KT})\), and only finitely many possibilities, there must be a repetition, absurd.

\paragraph{Proof of Lemma~\ref{lm:eq-system-of-template}.}
Given a node \(w\) of \(T\) let \(\hg(w)\) denote the height of \(w\), i.e., the height of the subtree generated by \(w\) (then leafs have height \(0\) and the root has the same height as the tree).
If \(\hg(T) = H\), let \(\bar{x}_i\) for \(i \leq H\) be an arbitrary linear ordering of \(\set{x_{w} \mid w \text{ repeat node and } \hg(w^\circ) = i}\).
Consider the enumeration of \(B_T\) defined as \(\bar{x}_T = \bar{x}_0 \cdots \bar{x}_H\) and let us show that \(\mathcal{E}_T\) is an positive modalized equational system over \((\bar{x}_T, V_+, V_-)\).

The positivy of \(\mathcal{E}_T\) is straightforward by definition.
In addition, one can check that for any node \(w\), \(\voc_+(\kappa_w) \subseteq V_+ \union B_+\) and \(\voc_-(\kappa_w) \subseteq V_-\) by induction on the length of \(w\).
So all left to show is that it is modalized, let \(\bar{x}_T = (x_0,\ldots,x_{n-1})\) and for each \(i < n\) let us write \(w_i\) to mean the repeat node such that \(x_i = x_{w_i}\).
First, we notice three facts. 
\begin{enumerate}
  \item If \(w\) is a repeat node and \(v \not \leq w\) then \(x_w\) does not occur in \(\kappa_w\).
  \item If \(w\) is the conclusion of a \(\modal{\KT}_{\mathrm{sat}}\) rule, then \(\kappa_w\) is modalized in all the variables of \(B_T\).
  \item For any \(x \in B_T\), if \(w\) is a non-repeat node with immediate successors \(w0, \ldots, w(n-1)\) and \(\kappa_{w0}\), \ldots, \(\kappa_{w(n-1)}\) are modalized in \(x\) then \(\kappa_w\) is modalized in \(\kappa_{w}\).
\end{enumerate}
Additionally, thanks to the definition of \(\bar{x}_T\) we have that if \(i \leq j\) then either \(w^\circ_j \leq w^\circ_i\) or \(w^\circ_j\) and \(w^\circ_i\) are incomparable.
Using this with the previous facts gives that for any \(i \leq j\) we have that \(\kappa_{w^\circ_j}\) is modalised in \(x_i\), as desired.

\paragraph{Definition of \(\alpha\) at Lemma~\ref{lm:first-verification}.}
  We will define \(\alpha\) corecursively, that \(\alpha(w)\) is a preproof will be trivial from construction.
  When the definition is finished we will show that it is indeed a proof.
  We proceed by cases on the shape of \(w\).

  Case \(w\) is \((\ax)\) or \((\botL)\).
  \(\alpha(w)\) is obtained by applying rule \((\ax)\) or \((\botL)\), respectively.

  Case \(w\) is \((\emp)\).
  \(\alpha(w)\) is a proof of \(\gg_w \top\), which is straightforward to obtain.

  Case \(w\) is \((\Ssat)\).
  Then the function is defined as
  \[
    \AxiomC{\(w0\)}
    \noLine
    \UnaryInfC{\(\kappa : \necd \phi, \Gamma'_w \Rrightarrow \Delta_w\)}
    \RightLabel{\(\Ssat\)}
    \UnaryInfC{\(\kappa : \nec \phi, \Gamma'_w \Rrightarrow \Delta_w\)}
    \DisplayProof
    \qquad
    \overset{\alpha}{\longmapsto}
    \qquad
    \AxiomC{\(\alpha(w0)\)}
    \noLine
    \UnaryInfC{\(\necd \phi, \Gamma'_w \Rrightarrow \Delta_w, (\kappa)^*\)}
    \RightLabel{\(\S\)}
    \UnaryInfC{\(\nec \phi, \Gamma'_w \Rrightarrow \Delta_w, (\kappa)^*\)}
    \DisplayProof
  \]
  where \(\kappa_w = \kappa\), \({\gg}_w = {\Rrightarrow}\) and \(\Gamma_w = \nec\phi, \Gamma'_w\).

  %Case \(w\) is \((\toRsat)\).
  %Then the function is defined as
  %\[
  %  \AxiomC{\(w0\)}
  %  \noLine
  %  \UnaryInfC{\(\kappa : \phi, \Gamma_w \gg \psi, \phi \to \psi, \Delta'_w\)}
  %  \RightLabel{\(\toRsat\)}
  %  \UnaryInfC{\(\kappa : \Gamma_w \gg \phi \to \psi, \Delta'_w\)}
  %  \DisplayProof
  %  \qquad
  %  \overset{\alpha}{\longmapsto}
  %  \qquad
  %  \AxiomC{\(\alpha(w0)\)}
  %  \noLine
  %  \UnaryInfC{\(\phi, \Gamma_w \gg \psi, \phi \to \psi, \Delta'_w, \kappa^*\)}
  %  \RightLabel{\(\toR\)}
  %  \UnaryInfC{\( \Gamma_w \gg \phi \to \psi, \phi \to \psi, \Delta'_w, \kappa^*\)}
  %  \RightLabel{\(\ctr\)}
  %  \UnaryInfC{\( \Gamma_w \gg \phi \to \psi, \Delta'_w, \kappa^*\)}
  %  \DisplayProof
  %\]
  %where \(\kappa_w = \kappa\), \({\gg}_w = {\gg}\) and \(\Delta_w = \phi \to \psi, \Delta'_w\).

  Case \(w\) is \((\toLsat)\) or \((\toRsat)\). Analogous to the \((\Ssat)\) case.
  %\[
  %  \AxiomC{\(w0\)}
  %  \noLine
  %  \UnaryInfC{\(\kappa_0 : \phi \to \psi, \Gamma'_w \gg \phi, \Delta_w\)}
  %  \AxiomC{\(w1\)}
  %  \noLine
  %  \UnaryInfC{\(\kappa_1 : \psi,\phi \to \psi, \Gamma'_w \gg  \Delta_w\)}
  %  \RightLabel{\(\toLsat\)}
  %  \BinaryInfC{\(\kappa_0 \vee \kappa_1 : \phi \to \psi, \Gamma'_w \gg \Delta_w\)}
  %  \DisplayProof
  %\]
  %where \(\kappa_w = \kappa_0 \vee \kappa_1\), \({\gg}_w = {\gg}\) and \(\Gamma_w = \phi \to \psi, \Gamma'_w\).
  %\(\alpha(w)\) is defined as
  %\[
  %  \AxiomC{\(\alpha(w0)\)}
  %  \noLine
  %  \UnaryInfC{\(\phi \to \psi, \Gamma'_w \gg \phi, \Delta_w, \kappa^*_0 \)}
  %  \RightLabel{\(\wk + \veeR\)}
  %  \doubleLine
  %  \UnaryInfC{\(\phi \to \psi, \Gamma'_w \gg \phi, \Delta_w, \kappa^*_0 \vee \kappa^*_1\)}
  %  \AxiomC{\(\alpha(w1)\)}
  %  \noLine
  %  \UnaryInfC{\(\psi,\phi \to \psi, \Gamma'_w \gg  \Delta_w, \kappa^*_1\)}
  %  \RightLabel{\(\wk + \veeR\)}
  %  \doubleLine
  %  \UnaryInfC{\(\psi,\phi \to \psi, \Gamma'_w \gg  \Delta_w, \kappa^*_0 \vee \kappa^*_1\)}
  %  \RightLabel{\(\toL\)}
  %  \BinaryInfC{\(\phi \to \psi, \phi \to \psi, \Gamma'_w \gg \Delta_w, \kappa^*_0 \vee \kappa^*_1\)}
  %  \RightLabel{\(\ctr\)}
  %  \UnaryInfC{\(\phi \to \psi, \Gamma'_w \gg \Delta_w, \kappa^*_0 \vee \kappa^*_1\)}
  %  \DisplayProof
  %\]

  Case \(w\) is \((\rep)\).
  The function is defined as
  \[
    \AxiomC{\(\)}
    \RightLabel{\(\rep\)}
    \UnaryInfC{\(x_w : \Gamma_w \gg_w \Delta_w\)}
    \DisplayProof
    \quad
    \overset{\alpha}{\longmapsto}
    \quad
    \AxiomC{\(\alpha(w^\circ)\)}
    \noLine
    \UnaryInfC{\(\Gamma_w \gg_w \Delta_w, \kappa_{w^\circ}^*\)}
    \RightLabel{\(\wk\)}
    \UnaryInfC{\(\Gamma_w \gg_w \Delta_w, x^*_w, \kappa_{w^\circ}^*\)}
    \AxiomC{\(\tau\)}
    \noLine
    \UnaryInfC{\(\kappa_{w^\circ}^*, \Gamma_w \gg_w \Delta_w, x^*_w\)}
    \RightLabel{\(\cut\)}
    \BinaryInfC{\(\Gamma_w \gg_w \Delta_w, x^*_w\)}
    \DisplayProof
  \]
  where \(\tau\) is a proof of \(\kappa_{w^\circ}^*, \Gamma_w \gg_w \Delta_w, x^*_w\) in \(\n{\GLS}\) which exists since \(\GLS \vdash x^*_w \leftrightarrow \kappa^*_{w^\circ}\).

  Case \(w\) is \((\interpolation{\KT})\).
  Then \(w\) is of shape
  \[
    \AxiomC{\(\begin{matrix} w^{\nec} \\ \kappa^{\nec} : \necd \Sigma^s \Rightarrow \end{matrix}\)}
    \AxiomC{\(\left[\begin{matrix} w^{\pos}_\phi \\ \kappa^{\pos}_\phi : \necd \Sigma^s \Rightarrow \phi\end{matrix}\right]_{\phi \in \Theta}\)}
    \RightLabel{\(\interpolation{\KT}\)}
    \BinaryInfC{\(\nec \kappa^{\nec} \wedge \bigwedge_{\phi \in \Theta} \pos \kappa^{\pos}_\phi \wedge \bigwedge (\Gamma'_w \cap V_+) \wedge \neg (\Delta'_w \cap V_-) : \nec \Sigma, \Gamma'_w \gg \nec \Theta, \Delta'_w\)}
    \DisplayProof
  \]
  where \(\kappa_w =\nec \kappa^{\nec} \wedge \bigwedge_{\phi \in \Theta} \pos \kappa^{\pos}_\phi \wedge \bigwedge (\Gamma'_w \cap V_+) \wedge \neg (\Delta'_w \cap V_-)\),  \({\gg_w} = {\gg}\), \(\Gamma_w = \nec \Sigma, \Gamma'_w\) and \(\Delta_w = \nec \Theta, \Delta'_w\).
  Since \((\cdot)^* : B_T \longrightarrow \mathcal{L}_{\nec}\) and \(B_T \inter \text{Var} = \varnothing\) we have that
  \[
    \kappa^*_w = \nec (\kappa^{\nec})^* \wedge \bigwedge_{\phi \in \Theta} \pos (\kappa^{\pos}_\phi)^* \wedge \bigwedge (\Gamma'_w \cap V_+) \wedge \neg (\Delta'_w \cap V_-).
  \]
  So to build a preproof of \(\Gamma_w \gg_w \Delta, \kappa^*_w\) we build a preproof for each of the conjuncts and join them using \((\wedgeR)\).
  \begin{itemize}
    \item Proof for \(\nec\kappa^{\nec}\) and \(\pos \kappa^{\nec}_\phi\) for \(\phi \in \Theta\).
      \(\alpha(w)\) is respectively:
      \[
        \AxiomC{\(\alpha(w^{\nec})\)}
        \noLine
        \UnaryInfC{\(\necd \Sigma^s \Rightarrow (\kappa^{\nec})^*\)}
        \RightLabel{\(\modal{\KT}\)}
        \UnaryInfC{\(\nec \Sigma, \Gamma'_w \gg \nec \Theta, \Delta'_w, \nec(\kappa^{\nec})^*\)}
        \DisplayProof
        \qquad
        \AxiomC{\(\alpha(w^{\pos}_\phi)\)}
        \noLine
        \UnaryInfC{\(\necd \Sigma^s \Rightarrow \phi, (\kappa^{\pos}_\phi)^*\)}
        \RightLabel{\(\negL + \wk\)}
        \UnaryInfC{\(\necd\neg (\kappa^{\pos}_\phi)^*, \necd \Sigma^s \Rightarrow \phi \)}
        \RightLabel{\(\modal{\KT}\)}
        \UnaryInfC{\(\nec \neg (\kappa^{\pos}_\phi)^*, \nec \Sigma, \Gamma'_w \gg \nec \Theta, \Delta'_w\)}
        \RightLabel{\(\negR\)}
        \UnaryInfC{\(\nec \Sigma, \Gamma'_w \gg \nec \Theta, \Delta'_w, \pos(\kappa^{\pos}_\phi)^*\)}
        \DisplayProof
      \]
    \item Proof for \(p \in \Gamma'_w \inter V_+\) and \(\neg q\) where \(q \in \Delta'_w \inter V_-\).
      In each case the preproof is, respectively:
      \[
        \AxiomC{\(\)}
        \RightLabel{\(\ax\)}
        \UnaryInfC{\( \nec \Sigma, \Gamma'_w \gg \nec \Theta, \Delta'_w, p\)}
        \DisplayProof
        \qquad
        \AxiomC{\(\)}
        \RightLabel{\(\ax\)}
  \UnaryInfC{\(q, \nec \Sigma, \Gamma'_w \gg \nec \Theta, \Delta'_w\)}
        \RightLabel{\(\negR\)}
        \UnaryInfC{\(\nec \Sigma, \Gamma'_w \gg \nec \Theta, \Delta'_w, \neg q\)}
        \DisplayProof
      \]
  \end{itemize}

  To each node \(w\) assign the measure \((|\Gamma_w \gg_w \Delta_w|_{\mathrm{sat}}, \lgh(w))\) with the lexicographic order.
  We notice that in each case the measure decreases from \(w\) to its corecursive calls, except when \(w\) is \((\interpolation{\KT})\).
  However, in this case there is an application of \(\modal{\KT}\) from the root of \(\alpha(w)\) to the corecursive calls, guaranteeing progress.

\paragraph{Definition of \(\beta\) at Lemma~\ref{lm:second-verification}.}
  We will define \(\beta\) corecursively.
  That \(\beta(w, \pi)\) is a preproof will be trivial from construction, when the definition is finished we will show that it is indeed a proof.
  We proceed by cases on the shape of \(w\).

  Case \(w\) is \((\ax)\) or \((\botL)\).
  \(\beta(w,\pi)\) is obtained by applying rule \((\botL)\).

  Case \(w\) is \((\emp)\).
  \(\beta(w,\pi)\) is obtained from \(\pi\) by applying the weakening rule.

  Case \(w\) is \((\Ssat)\).
  Then the function is defined as
  \[
    \left(\AxiomC{\(w0\)}
    \noLine
    \UnaryInfC{\(\kappa : \necd \phi, \Gamma'_w \Rrightarrow \Delta_w\)}
    \RightLabel{\(\Ssat\)}
    \UnaryInfC{\(\kappa : \nec \phi, \Gamma'_w \Rrightarrow \Delta_w\)}
    \DisplayProof,
    \quad
    \AxiomC{\(\pi\)}
    \noLine
    \UnaryInfC{\(\nec \phi, \Gamma'_w, \Phi \Rrightarrow \Delta_w, \Psi\)}
    \DisplayProof\right)
    \qquad
    \overset{\beta}{\longmapsto}
    \qquad
    \AxiomC{\(\beta(w0,\wk(\pi))\)}
    \noLine
    \UnaryInfC{\(\kappa^*, \Phi \Rrightarrow \Psi\)}
    \RightLabel{\(\wk\)}
    \UnaryInfC{\(\kappa^*, \Phi \Rrightarrow \Psi\)}
    \DisplayProof
  \]
  where \(\kappa_w = \kappa\), \({\gg}_w = {\Rrightarrow}\) and \(\Gamma_w = \nec\phi, \Gamma'_w\).

  Case \(w\) is \((\toLsat)\) or \((\toRsat)\). Analogous to the \((\Ssat)\) case.

  Case \(w\) is \((\rep)\).
  The function is defined as
  \[
    \AxiomC{\(\)}
    \RightLabel{\(\rep\)}
    \UnaryInfC{\(x_w : \Gamma_w \gg_w \Delta_w\)}
    \DisplayProof, \quad
    \AxiomC{\(\pi\)}
    \noLine
    \UnaryInfC{\(\Gamma_w, \Phi \gg_w \Delta_w, \Psi\)}
    \DisplayProof
    \quad
    \overset{\beta}{\longmapsto}
    \quad
    \AxiomC{\(\tau\)}
    \noLine
    \UnaryInfC{\(x^*_w, \Phi \gg_w \Psi, \kappa_{w^\circ}^*\)}
    \AxiomC{\(\beta(w^\circ, \pi)\)}
    \noLine
    \UnaryInfC{\(\kappa^*_{w^\circ}, \Phi \gg_w \Psi\)}
    \RightLabel{\(\wk\)}
    \UnaryInfC{\(\kappa^*_{w^\circ},x^*_w, \Phi \gg_w \Psi\)}
    \RightLabel{\(\cut\)}
    \BinaryInfC{\( x^*_w, \Phi \gg_w \Psi\)}
    \DisplayProof
  \]
  where \(\tau\) is a proof of \(x^*_w, \Phi \gg_w \Psi, \kappa_{w^\circ}^*\) in \(\n{\GLS}\) which exists since \(\GLS \vdash x^*_w \leftrightarrow \kappa^*_{w^\circ}\).

  %Case \(w\) is \((\toRsat)\).
  %Then the function is defined as
  %\[
  %  \left(
  %  \AxiomC{\(w0\)}
  %  \noLine
  %  \UnaryInfC{\(\kappa : \phi, \Gamma_w \gg \psi, \phi \to \psi, \Delta'_w\)}
  %  \RightLabel{\(\toRsat\)}
  %  \UnaryInfC{\(\kappa : \Gamma_w \gg \phi \to \psi, \Delta'_w\)}
  %  \DisplayProof,
  %  \quad
  %  \AxiomC{\(\pi\)}
  %  \noLine
  %  \UnaryInfC{\(\kappa : \Gamma_w, \Phi \gg \phi \to \psi, \Delta'_w, \Psi\)}
  %  \DisplayProof
  %  \right)
  %  \quad
  %  \overset{\beta}{\longmapsto}
  %  \quad
  %  \AxiomC{\(\beta(w0,\wk(\pi))\)}
  %  \noLine
  %  \UnaryInfC{\(\kappa^*, \Phi \gg \Psi\)}
  %  \RightLabel{\(\wk\)}
  %  \UnaryInfC{\(\kappa^*, \Phi \gg \Psi\)}
  %  \DisplayProof
  %\]
  %where \(\kappa_w = \kappa\), \({\gg}_w = {\gg}\) and \(\Delta_w = \phi \to \psi, \Delta'_w\).

  Case \(w\) is \((\interpolation{\KT})\).
  Then \(w\) is of shape
  \[
    \AxiomC{\(\begin{matrix} w^{\nec} \\ \kappa^{\nec} : \necd \Sigma^s \Rightarrow \end{matrix}\)}
    \AxiomC{\(\left[\begin{matrix} w^{\pos}_\phi \\ \kappa^{\pos}_\phi : \necd \Sigma^s \Rightarrow \phi\end{matrix}\right]_{\phi \in \Theta}\)}
    \RightLabel{\(\interpolation{\KT}\)}
    \BinaryInfC{\(\nec \kappa^{\nec} \wedge \bigwedge_{\phi \in \Theta} \pos \kappa^{\pos}_\phi \wedge \bigwedge (\Gamma'_w \cap V_+) \wedge \neg (\Delta'_w \cap V_-) : \nec \Sigma, \Gamma'_w \gg \nec \Theta, \Delta'_w\)}
    \DisplayProof
  \]
  where \(\kappa_w =\nec \kappa^{\nec} \wedge \bigwedge_{\phi \in \Theta} \pos \kappa^{\pos}_\phi \wedge \bigwedge (\Gamma'_w \cap V_+) \wedge \neg (\Delta'_w \cap V_-)\),  \({\gg_w} = {\gg}\), \(\Gamma_w = \nec \Sigma, \Gamma'_w\) and \(\Delta_w = \nec \Theta, \Delta'_w\).
  %We note that since \((\cdot)^* : B_T \longrightarrow \mathcal{L}_{\nec}\) and \(B_T \inter \text{Var} = \varnothing\) we have that
  %\[
  %  \kappa^*_w = \nec (\kappa^{\nec})^* \wedge \bigwedge_{\phi \in \Theta} \pos (\kappa^{\pos}_\phi)^* \wedge \bigwedge (\Gamma'_w \cap V_+) \wedge \neg (\Delta'_w \cap V_-).
  %\]
  We proceed by further case analysis in the last rule applied to \(\pi\).
  \begin{itemize}
    \item Last rule is \(\ax\). We have that a propositional variable \(p\) must occur at the left and right side of the sequent. Since \((\Gamma'_w \inter \Delta'_w) \inter \text{Var} = \varnothing\) we have three options.
      If \(p \in \Phi \inter \Psi\) then the desired proof is just \(\ax\).
      If \(p \in \Gamma'_w \inter \Psi\), then \(p \in \Psi\) implies that \(p \in V_+\) and the desired proof is using that \(p\) is a conjunct of \(\kappa^*_w\).
      If \(p \in \Phi \inter \Delta'_w\), then \(p \in \Phi\) implies that \(p \in V_-\) and the desired proof is using that \(\neg p\) is a conjunct of \(\kappa^*_w\).
    \item Last rule is \(\botL\).
      Since \(\bot \not \in \Gamma'_w\) (by definition of \(\interpolation{\KT}\)), it must be the case that \(\bot \in \Phi\).
      Then the desired preproof is obtained using \(\botL\).
    \item Last rule is \(\botR\).
      If \(\bot \in \Delta'_w\) then \(\Delta'_w = \bot, \Delta''_w\). 
      We define \(\beta\) as
      \[
        \left(w,\quad
          \AxiomC{\(\pi_0\)}
          \noLine
          \UnaryInfC{\(\nec \Sigma, \Gamma'_w, \Phi \gg \nec \Theta, \Delta''_w, \Psi\)}
          \RightLabel{\(\botR\)}
          \UnaryInfC{\(\nec \Sigma, \Gamma'_w, \Phi \gg \nec \Theta, \bot, \Delta''_w, \Psi\)}
          \DisplayProof
          \right)
          \quad
          \overset{\beta}{\longmapsto}
          \quad
          \AxiomC{\(\beta(w,\wk(\pi_0))\)}
          \noLine
          \UnaryInfC{\(\kappa^*_w, \Phi \gg \Psi\)}
          \RightLabel{\(\wk\)}
          \UnaryInfC{\(\kappa^*_w, \Phi \gg \Psi\)}
          \DisplayProof
      \]
      If \(\bot \not \in \Delta'_w\) then \(\bot \in \Psi\) so \(\Psi = \bot, \Psi'\).
      We define \(\beta\) as
      \[
        \left(w,\quad
          \AxiomC{\(\pi_0\)}
          \noLine
          \UnaryInfC{\(\nec \Sigma, \Gamma'_w, \Phi \gg \nec \Theta, \Delta'_w, \Psi'\)}
          \RightLabel{\(\botR\)}
          \UnaryInfC{\(\nec \Sigma, \Gamma'_w, \Phi \gg \nec \Theta, \Delta''_w, \bot, \Psi'\)}
          \DisplayProof
          \right)
          \quad
          \overset{\beta}{\longmapsto}
          \quad
          \AxiomC{\(\beta(w,\pi_0)\)}
          \noLine
          \UnaryInfC{\(\Phi \gg \Psi'\)}
          \RightLabel{\(\botR\)}
          \UnaryInfC{\(\Phi \gg \bot,\Psi'\)}
          \DisplayProof
      \]

    \item Last rule is \((\S)\). If the principal formula \(\nec\phi\) of \((\S)\) is in \(\nec \Sigma\), then we know by saturation that \(\phi \in \nec \Sigma, \Gamma'_w\). Let \(\Sigma' = \phi, \Sigma\), we define \(\beta\) as
      \[
        \left(
          w,
          \AxiomC{\(\pi_0\)}
          \noLine
          \UnaryInfC{\(\necd \phi, \nec \Sigma', \Gamma'_w, \Phi \Rrightarrow \nec \Theta, \Delta'_w, \Psi\)}
          \RightLabel{\(\S\)}
          \UnaryInfC{\(\nec \phi, \nec \Sigma', \Gamma'_w, \Phi \Rrightarrow \nec \Theta, \Delta'_w, \Psi\)}
          \DisplayProof
        \right)
        \overset{\beta}{\longmapsto}
        \AxiomC{\(\beta(w,\ctr(\pi_0))\)}
        \noLine
        \UnaryInfC{\(\kappa^*_w, \Phi \Rrightarrow \Psi\)}
        \RightLabel{\(\wk\)}
        \UnaryInfC{\(\kappa^*_w, \Phi \Rrightarrow \Psi\)}
        \DisplayProof
      \]
      where we contract \(\pi_0\) to get rid of one instance of \(\phi\) at the left.
      If the principal formula \(\nec \phi\) of \((\S)\) is not in \(\nec \Sigma\), it must be in \(\Phi\), so let \(\Phi = \nec \phi, \Phi'\).
      We define \(\beta\) as
      \[
        \left(
          w,
          \AxiomC{\(\pi_0\)}
          \noLine
          \UnaryInfC{\(\nec \Sigma, \Gamma'_w, \necd \phi, \Phi' \Rrightarrow \nec \Theta, \Delta'_w, \Psi\)}
          \RightLabel{\(\S\)}
          \UnaryInfC{\(\nec \Sigma, \Gamma'_w, \nec\phi,\Phi' \Rrightarrow \nec \Theta, \Delta'_w, \Psi\)}
          \DisplayProof
        \right)
        \overset{\beta}{\longmapsto}
        \AxiomC{\(\beta(w,\pi_0)\)}
        \noLine
        \UnaryInfC{\(\kappa^*_w, \necd \phi, \Phi' \Rrightarrow \Psi\)}
        \RightLabel{\(\S\)}
        \UnaryInfC{\(\kappa^*_w, \nec \phi, \Phi' \Rrightarrow \Psi\)}
        \DisplayProof
      \]

    \item Last rule is \((\toL)\) or \((\toR)\). Analogous to \((\S)\) case (with cases on where the principal formula occurs and possibly needed to use admissibility of contraction and weakening at proof \(\pi\)).

    \item Last rule is \(\modal{\KT}\).
      If the principal formula \(\nec \phi\) of \(\modal{\KT}\) is in \(\nec \Theta\), then let \(\nec \Theta = \nec \phi, \nec \Theta'\).
      We define \(\beta\) as
      \[
        \left(
          w,
          \quad
          \AxiomC{\(\pi_0\)}
          \noLine
          \UnaryInfC{\(\necd \Sigma', \necd \Phi_{\nec} \Rightarrow \phi\)}
          \RightLabel{\(\modal{\KT}\)}
          \UnaryInfC{\(\nec \Sigma, \Gamma'_w, \Phi \gg \nec \phi, \nec \Theta', \Delta'_w, \Psi\)}
          \DisplayProof
        \right)
        \quad
        \overset{\beta}{\longmapsto}
        \quad
        \AxiomC{\(\beta\left(w^{\pos}_\phi, \wk(\pi_0)\right)\)}
        \noLine
        \UnaryInfC{\((\kappa^{\pos}_\phi)^*, \necd \Phi_{\nec} \Rightarrow \)}
        \RightLabel{\(\negR\)}
        \UnaryInfC{\(\necd \Phi_{\nec} \Rightarrow \neg (\kappa^{\pos}_\phi)^*\)}
        \RightLabel{\(\modal{\KT}\)}
        \UnaryInfC{\(\Phi \gg \nec \neg (\kappa^{\pos}_\phi)^*, \Psi\)}
        \RightLabel{\(\negL\)}
        \UnaryInfC{\(\pos (\kappa^{\pos}_\phi)^*, \Phi \gg  \Psi\)}
        \doubleLine
        \RightLabel{\(\wk + \wedgeL\)}
        \UnaryInfC{\(\kappa^*_w, \Phi \gg  \Psi\)}
        \DisplayProof
      \]
      where \(\Sigma' \subseteq \Sigma\), \(\nec\Phi_{\nec} \subseteq \Phi\) and we applied weakening on \(\pi_0\) to obtain the full \(\necd \Sigma\) instead of \(\necd \Sigma'\).
      If the principal formula \(\nec \phi\) of \(\modal{\KT}\) is in \(\Psi\), then let \(\Psi = \nec \phi, \Psi'\).
      We define \(\beta\) as
      \[
        \left(
          w,
          \quad
          \AxiomC{\(\pi_0\)}
          \noLine
          \UnaryInfC{\(\necd \Sigma', \necd \Phi_{\nec} \Rightarrow \phi\)}
          \RightLabel{\(\modal{\KT}\)}
          \UnaryInfC{\(\nec \Sigma, \Gamma'_w, \Phi \gg  \nec \Theta, \Delta'_w, \nec \phi,\Psi'\)}
          \DisplayProof
        \right)
        \quad
        \overset{\beta}{\longmapsto}
        \quad
        \AxiomC{\(\beta(w^{\nec}, \wk(\pi_0))\)}
        \noLine
        \UnaryInfC{\((\kappa^{\nec})^*, \necd \Phi_{\nec} \Rightarrow \phi\)}
        \RightLabel{\(\wk\)}
        \UnaryInfC{\(\necd (\kappa^{\nec})^*, \necd \Phi_{\nec} \Rightarrow \phi\)}
        \RightLabel{\(\modal{\KT}\)}
        \UnaryInfC{\(\nec(\kappa^{\nec})^*, \Phi \gg \nec\phi, \Psi'\)}
        \doubleLine
        \RightLabel{\(\wk + \wedgeL\)}
        \UnaryInfC{\(\kappa^*_w, \Phi \gg  \nec \phi, \Psi'\)}
        \DisplayProof
      \]
      where \(\Sigma' \subseteq \Sigma\), \(\nec\Phi_{\nec} \subseteq \Phi\) and we applied weakening on \(\pi_0\) to obtain the full \(\necd \Sigma\) instead of \(\necd \Sigma'\).
  \end{itemize}
  To each pair \((w,\pi)\) assign the measure \((|\Gamma_w \gg_w \Delta_w|_{\mathrm{sat}}, \lgh(w), \lhg(\pi))\) with the lexicographic order.
  We notice that in each case the measure decreases from \((w,\pi)\) to its corecursive calls, except when \(w\) is \((\interpolation{\KT})\) and \(\pi\) ends in the \((\modal{\KT})\) rule.
  However, in this case there is an application of \((\modal{\KT})\) from the root of \(\alpha(w)\) to the corecursive calls, guaranteeing progress.

\nocite{*}
\bibliographystyle{eptcs}
\bibliography{generic}
\end{document}